\newcommand{\im}{\mathrm{i}}
\newcommand{\dif}{\mathrm{d}}
\newcommand{\e}{\mathrm{e}}
\newcommand{\C}{\mathbb{C}}
\newcommand\eq[1]{\begin{equation}#1\end{equation}}
\newcommand\als[1]{\begin{align}\begin{split}#1\end{split}\end{align}}
\theoremstyle{definition} 
\newtheorem{definition}{Definition}
\newtheorem{assumption}{Assumption}
\theoremstyle{plain} 
\newtheorem{theorem}{Theorem}[section]
\newtheorem{lemma}[theorem]{Lemma}
\newtheorem{proposition}[theorem]{Proposition}
\theoremstyle{remark} 
\newtheoremstyle{underline}
{}        
{}              
{}              
{}    
{}              
{.}             
{1.5mm}         
{{\underline{\textit{\thmname{#1}\thmnumber{ #2}}~\thmnote{(#3)}\unskip}}}  
\theoremstyle{underline}
\begin{document}
\setcounter{footnote}{0}
\setcounter{tocdepth}{3}
\bigskip
\def\thefootnote{\arabic{footnote}}

\begin{titlepage}
\renewcommand{\thefootnote}{\fnsymbol{footnote}}
\begin{normalsize}
\begin{flushright}
\end{flushright}
  \end{normalsize}

~~\\

\vspace*{0cm}
    \begin{Large}
       \begin{center}
         {Spectral Codes: A Geometric Formalism for Quantum Error Correction}
       \end{center}
    \end{Large}
\vspace{0.7cm}

\begin{center}
Satoshi K\textsc{anno}\footnote[1]
            {
e-mail address : satoshi.kanno06@g.softbank.co.jp}
,
Yoshi-aki S\textsc{himada}\footnote[2]
            {
e-mail address : yoshiaki.shimada01@g.softbank.co.jp}

\vspace{0.7cm}

    {\it SoftBank Corp., Research Institute of Advanced Technology,}\\
               {\it 1-7-1 Kaigan, Minato-ku, Tokyo 105-7529,Japan}
               \end{center}

\vspace{0.5cm}

\begin{abstract}
\noindent
We present a new geometric perspective on quantum error correction based on spectral triples in noncommutative geometry. In this approach, quantum error-correcting codes are reformulated as low-energy spectral projections of Dirac-type operators that separate global logical degrees of freedom from local, correctable errors. Locality, code distance, and the Knill–Laflamme condition acquire a unified spectral and geometric interpretation in terms of the induced metric and spectrum of the Dirac operator.

Within this framework, a wide range of known error-correcting codes—including classical linear codes, stabilizer codes, GKP-type codes, and topological codes—are recovered from a single construction. This demonstrates that classical and quantum codes can be organized within a common geometric language.

A central advantage of the spectral-triple perspective is that the performance of error correction can be directly related to spectral properties. We show that leakage out of the code space is controlled by the spectral gap of the Dirac operator, and that code-preserving internal perturbations can systematically increase this gap without altering the encoded logical subspace. This yields a geometric mechanism for enhancing error-correction thresholds, which we illustrate explicitly for a stabilizer code.

We further interpret Berezin–Toeplitz quantization as a mixed spectral code and briefly discuss implications for holographic quantum error correction. Overall, our results suggest that quantum error correction can be viewed as a universal low-energy phenomenon governed by spectral geometry.

\end{abstract}

\end{titlepage}

\tableofcontents
\newpage

\section{Introduction}

A central principle of quantum error correction is the separation between global information and local errors. Logical information is encoded in degrees of freedom that cannot be accessed or modified by operations acting on sufficiently small subsystems, while physical noise typically acts locally. This separation underlies the robustness of encoded quantum information and is common to essentially all known quantum error-correcting codes.

In many concrete constructions, this principle is realized through a projection onto a low-energy subspace of a suitably chosen operator or Hamiltonian. Classical linear codes(see \cite{macwilliams1977theory}) arise as kernels of parity-check matrices, stabilizer codes(see \cite{gottesman1997stabilizer}) are defined as common eigenspaces of commuting operators, and both topological codes\cite{Kitaev:1997wr,dennis2002topological} and GKP-type codes\cite{gottesman2000encoding} appear as ground-state or low-energy sectors of physical Hamiltonians. Despite this shared structure, the relation between locality, low-energy projections, and the Knill--Laflamme error-correction condition has not been formulated within a unified mathematical framework.

The purpose of this work is to provide such a framework by viewing quantum error correction as a geometric phenomenon governed by spectral properties. Our central idea is to define code spaces as low-energy spectral projections of Dirac-type operators. In this perspective, global logical degrees of freedom correspond to modes that are insensitive to local geometric structure, while local errors are operators whose action is confined to finite regions defined by a spectral notion of distance.

To implement this idea, we employ the language of noncommutative geometry\cite{Landi:1997sh,Connes:1994yd,connes1985non,connes2000short}, where geometric information is encoded algebraically through spectral triples consisting of an algebra, a Hilbert space, and a Dirac operator. In this setting, the Dirac operator plays a dual role: it determines both a notion of locality via an induced metric and an energy scale via its spectrum. Low-energy modes of the Dirac operator naturally exhibit long-range correlations and therefore provide a natural home for globally encoded information.

Based on this observation, we introduce \emph{spectral codes}, defined as the zero-mode (or low-energy) subspaces of Dirac-type operators. Within this framework, locality is characterized by the spectral distance, and the Knill--Laflamme condition emerges as a geometric consequence of the separation between local and global degrees of freedom. This allows one to define code distance and correctable error sets in purely spectral terms.

A key strength of this approach is its unifying power. We show that a wide range of known error-correcting codes---including classical linear codes, stabilizer codes, GKP-type codes, and topological codes---arise naturally as spectral codes associated with appropriate choices of algebras and Dirac operators. In each case, the conventional notion of code distance is recovered from the same underlying spectral geometry.

Beyond unification, the spectral viewpoint provides new insight into error-correction performance. We show that leakage out of the code space is controlled by the spectral gap of the Dirac operator. Moreover, we demonstrate that \emph{code-preserving internal perturbations}---perturbations that leave the code space invariant while modifying higher-energy modes---can systematically increase this spectral gap. As a result, the error-correction threshold is enhanced without altering the encoded logical subspace. This mechanism is illustrated explicitly using a stabilizer code example.

We also relate spectral codes to strict deformation quantization, interpreting Berezin--Toeplitz quantization as a mixed spectral code in which quantum information is protected against classical geometric errors. Finally, we briefly discuss implications of the spectral code framework for holographic quantum error correction, where low-energy projections of noncommutative operator algebras naturally give rise to effective geometric descriptions.

\paragraph{Summary of contributions.}
The main contributions of this work are:
\begin{enumerate}
  \item The introduction of spectral codes, a geometric formulation of quantum error correction based on low-energy spectral projections.
  \item A geometric derivation of the Knill--Laflamme condition from spectral locality.
  \item A unified reconstruction of classical and quantum error-correcting codes within a single framework.
  \item A quantitative connection between spectral gaps and error-correction thresholds, together with a systematic method for threshold enhancement.
\end{enumerate}

The remainder of the paper is organized as follows. Section~2 reviews the elements of noncommutative geometry needed in this work. Section~3 introduces spectral codes and their basic properties. Section~4 analyzes decoding and threshold enhancement via internal perturbations. Section~5 discusses Berezin--Toeplitz quantization as a spectral code, and Section~6 explores connections to holography.

\section{Foundations of Noncommutative Geometry}

This section develops the mathematical foundations required for a spectral formulation of quantum error-correcting codes.
In our approach, error-correcting code is described geometrically as a projection between Riemannian manifolds characterized by spectral triples, consisting of an algebra, a Hilbert space, and a Dirac operator.
The encoding map of an error-correcting code is identified with the projection onto the zero-mode (or low-energy) eigenspace of the Dirac operator.

To make this construction precise, we first review the notion of spectral triples and explain how they encod geometric information.
Since quantum systems are inherently noncommutative, the relevant geometric structures are described by noncommutative algebras rather than algebras of functions on ordinary manifolds.
Accordingly, we discuss how concepts such as space, distance, and locality are generalized in noncommutative geometry.

This section reviews the basic notions of noncommutative geometry needed in the sequel.
In 2.1, we introduce the definition of a noncommutative space and explain how the notion of a ``point'' can be formulated in this setting.
In 2.2, we discuss how to define distances on noncommutative spaces and, based on this, introduce the notion of a noncommutative Riemannian manifold.
In 2.3, we define what we call global information and local transformations in noncommutative geometry.


\subsection{Gelfand--Naimark Theorem and Noncommutative Spaces}

The starting point of noncommutative geometry is the idea that a space should be described not directly by a set of points, but by an algebra of functions defined on it. In this approach, algebraic structures are regarded as more fundamental than the underlying point-set topology. This perspective naturally leads to the notion of a $C^*$-algebra (see \cite{Landi:1997sh}) as the basic object encoding geometric information.

\begin{definition}[{$C^*$-algebra}]
    A complex algebra $A$ equipped with a product, a unit element $1\in A$, an antilinear involution $*:A\to A$, and a norm $\|\cdot\|$ is called a (unital) $C^*$-algebra if the following conditions hold:
    \begin{enumerate}
        \item $(ab)^* = b^* a^*$, $(a^*)^* = a$, and $1^* = 1$,
        \item $\|ab\| \le \|a\|\,\|b\|$ and $\|a^*\| = \|a\|$,
        \item the $C^*$-identity $\|a^*a\| = \|a\|^2$,
        \item $(A,\|\cdot\|)$ is complete, i.e.\ a Banach space.
    \end{enumerate}
\end{definition}

In the commutative case, $C^*$-algebras admit a direct geometric interpretation. This is formalized by the Gelfand--Naimark theorem, which establishes an equivalence between commutative $C^*$-algebras and compact Hausdorff spaces.

\begin{theorem}[Gelfand--Naimark \cite{1943OnTI}, commutative case]
For any compact Hausdorff space $X$, the algebra $C(X)$ of continuous complex-valued functions on $X$ is a commutative unital $C^*$-algebra. Conversely, any commutative unital $C^*$-algebra $A$ is $*$-isomorphic to $C(X)$ for some compact Hausdorff space $X$, uniquely determined up to homeomorphism.
\end{theorem}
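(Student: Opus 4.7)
The plan is to prove the two directions separately. The forward direction is a direct verification: one equips $C(X)$ with pointwise addition and multiplication, the involution $f^{*}(x):=\overline{f(x)}$, and the supremum norm $\|f\|_\infty:=\sup_{x\in X}|f(x)|$, then checks associativity, the involution identities, submultiplicativity, the $C^{*}$-identity $\|\bar f f\|_\infty=\||f|^{2}\|_\infty=\|f\|_\infty^{2}$, and the completeness of $C(X)$ under $\|\cdot\|_\infty$ on a compact Hausdorff space. Each of these is one line. For the converse I will use the classical Gelfand construction: recover $X$ intrinsically from $A$ as its character space and identify $A$ with continuous functions on it.

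Concretely, let $A$ be a commutative unital $C^{*}$-algebra and define
\[
\hat A := \{\chi : A\to\mathbb{C}\mid \chi\text{ nonzero, linear, multiplicative}\},
\]
topologized by the weak-$*$ topology inherited from $A^{*}$. The first step is to show every character satisfies $\chi(1)=1$ and is automatically bounded with $\|\chi\|\le 1$; standard commutative-Banach-algebra arguments give $\ker\chi$ a maximal ideal and hence closed. Next, $\hat A$ is weak-$*$ closed in the closed unit ball of $A^{*}$, so Banach--Alaoglu yields compactness, while the Hausdorff property is automatic. I then define the Gelfand transform $\Gamma:A\to C(\hat A)$ by $\Gamma(a)(\chi):=\chi(a)$, which is manifestly a unital algebra homomorphism.

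The two substantive steps are that $\Gamma$ preserves the involution and is isometric. For $*$-preservation, I would show that on a self-adjoint element $a=a^{*}$ every character takes real values; the standard trick is to write $\chi(a)=\alpha+i\beta$ and apply $|\chi(a+it)|^{2}\le\|a+it\|^{2}=\|(a-it)(a+it)\|=\|a^{2}+t^{2}\|\le\|a\|^{2}+t^{2}$ for all real $t$, forcing $\beta=0$, then extending by linearity to general $a=\tfrac12(a+a^{*})+\tfrac{i}{2}\cdot\tfrac{1}{i}(a-a^{*})$. For isometry, the point is that in a commutative $C^{*}$-algebra the spectral radius coincides with the norm on self-adjoint (hence normal) elements: iterating the $C^{*}$-identity gives $\|b^{2^{n}}\|=\|b\|^{2^{n}}$ for $b=b^{*}$, whence $r(b)=\|b\|$. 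Combined with the identification $r(a)=\sup_{\chi\in\hat A}|\chi(a)|$ (valid in commutative Banach algebras via Gelfand--Mazur), this yields $\|\Gamma(a)\|_\infty^{2}=\|\Gamma(a^{*}a)\|_\infty=r(a^{*}a)=\|a^{*}a\|=\|a\|^{2}$.

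For surjectivity I would invoke the Stone--Weierstrass theorem: the image $\Gamma(A)\subset C(\hat A)$ is a unital $*$-subalgebra that separates points of $\hat A$ (if $\chi_{1}\ne\chi_{2}$ pick $a$ with $\chi_{1}(a)\ne\chi_{2}(a)$), and being isometric it is closed, hence equals $C(\hat A)$. Uniqueness of $X$ follows because any $*$-isomorphism $\Phi:A\to C(Y)$ induces a weak-$*$ homeomorphism $Y\to\hat A$, $y\mapsto\mathrm{ev}_{y}\circ\Phi$, using that the characters of $C(Y)$ are exactly the point-evaluations on a compact Hausdorff space $Y$. The main obstacle is the pair of steps resting on the $C^{*}$-identity, namely the reality of $\chi$ on self-adjoints and the isometry property; both are where the $C^{*}$ condition does genuine work beyond the Banach-$*$-algebra axioms, and handling them cleanly (including the point that characters of $C(Y)$ are exhausted by evaluations, which itself requires Urysohn's lemma) is where care is needed.
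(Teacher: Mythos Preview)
The paper states this theorem as a cited classical result (with reference to the 1943 Gelfand--Naimark paper) and does not supply its own proof; it is used purely as background for the spectral-triple formalism. Your proposal is the standard textbook proof via the Gelfand transform, and it is correct in outline and in its handling of the two genuinely $C^{*}$-specific steps (reality of characters on self-adjoints and the isometry via $r(a^{*}a)=\|a^{*}a\|$). Since there is no in-paper argument to compare against, there is nothing further to contrast; your write-up would serve as a self-contained proof where the paper simply cites the literature.
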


This result allows one to reconstruct a space from its algebra of functions, establishing the principle that a commutative $C^*$-algebra may be regarded as the algebra of functions on a space. Noncommutative geometry arises by extending this principle to noncommutative $C^*$-algebras, where no underlying point-set space exists a priori.

\medskip

In general, a noncommutative $C^*$-algebra cannot be written in the form $C(X)$ for any topological space $X$. Nevertheless, noncommutative geometry interprets such an algebra itself as defining a ``noncommutative space.'' To make this interpretation precise, one must identify appropriate replacements for classical geometric notions such as points and topology.

In the commutative case, points of a space correspond to evaluation functionals. This observation motivates the use of states and pure states as generalized points in the noncommutative setting.

\begin{definition}[State]
    Let $A$ be a unital $C^*$-algebra. A linear functional $\omega:A\to\C$ is called a state if it satisfies
    \[
        \omega(a^*a) \ge 0 \quad \text{for all } a\in A, 
        \qquad \omega(1)=1.
    \]
    The set of all states on $A$ is denoted by $S(A)$.
\end{definition}

The state space $S(A)$ is a convex set. Physically, states correspond to expectation-value functionals, and convex combinations represent probabilistic mixtures of states. This physical interpretation motivates the distinguished role of extremal states.

\begin{definition}[Pure state]
    A state $\omega\in S(A)$ is called pure if it is an extreme point of the convex set $S(A)$. The set of pure states is denoted by $PS(A)$.
\end{definition}

In the commutative case $A=C(X)$, pure states are precisely point evaluations \cite{Landi:1997sh}:
\[
    \delta_x(f) = f(x), \qquad x\in X.
\]
As a consequence, one obtains a canonical identification
\[
    PS(C(X)) \cong X.
\]
Thus, for commutative $C^*$-algebras, points of the underlying space are recovered as pure states.

Motivated by this correspondence, noncommutative geometry interprets the pure state space $PS(A)$ of a general (possibly noncommutative) $C^*$-algebra as a substitute for the notion of points. While $PS(A)$ alone does not capture all geometric information in the noncommutative case, it provides a fundamental structure encoding probabilistic and topological aspects at the level of states of the noncommutative space.

\medskip

To recover a notion of topology, one equips the state space with a natural weak-$*$ topology \cite{Landi:1997sh}.

\begin{definition}[Gelfand topology]
Let $\{\omega_\lambda\}\subset S(A)$ be a net and $\omega\in S(A)$. We say that $\omega_\lambda$ converges to $\omega$ if
\[
    \omega_\lambda(a) \to \omega(a) \quad \text{for all } a\in A.
\]
This topology is the weak-$*$ topology induced from the dual space $A^*$ and is referred to as the weak-$*$ (or Gelfand) topology on the state space.
\end{definition}

In the commutative case $A=C(X)$, the Gelfand topology on $PS(A)$ coincides with the original topology on $X$, thereby recovering the full topological structure of the space. In the noncommutative case, the weak-$*$ topology on $S(A)$ and $PS(A)$ provides a natural notion of continuity and convergence, encoding the topological content of the corresponding noncommutative space.

This algebraic reconstruction of space captures its topological aspects. However, additional structures are required to describe geometric notions such as distance and metric properties. These will be introduced in the next subsection via spectral triples, which furnish a noncommutative generalization of Riemannian geometry.

\subsection{Connes distance and Noncommutative Riemannian Manifolds}

In the previous subsection, we explained how commutative $C^*$-algebras encode topological spaces and how noncommutative $C^*$-algebras may be interpreted as noncommutative spaces. However, topological data alone are insufficient to describe geometric notions such as distance, metric structure, or curvature. In noncommutative geometry, these additional geometric structures are incorporated by introducing an unbounded operator whose spectral properties encode metric information. This leads to the notion of a spectral triple \cite{Connes1992,connes2013spectral,Connes:1994yd}, which serves as the fundamental definition of a (possibly noncommutative) Riemannian manifold.

\begin{definition}[Spectral triple]
    A spectral triple $(A,\mathcal{H},D)$ consists of
    \begin{itemize}
        \item a $*$-algebra $A$,
        \item a Hilbert space $\mathcal{H}$,
        \item a faithful $*$-representation $\pi:A\to\mathcal{B}(\mathcal{H})$ (which we suppress in the notation),
        \item a densely defined self-adjoint operator $D$ on $\mathcal{H}$,
    \end{itemize}
    such that the following conditions are satisfied:
    \begin{enumerate}
        \item For all $a\in A$, the commutator $[D,a]$ extends to a bounded operator on $\mathcal{H}$.
        \item There exists a two-sided ideal $A_0\subset A$ such that for all $a_0\in A_0$,
        \[
            a_0 (D-\im)^{-1} \in \mathcal{K}(\mathcal{H}),
        \]
        where $\mathcal{K}(\mathcal{H})$ denotes the algebra of compact operators on $\mathcal{H}$.
    \end{enumerate}
\end{definition}

If $A$ is unital, the second condition is often replaced by the requirement that $(D-\im)^{-1}$ itself is a compact operator. This means that $D$ has compact resolvent, or equivalently, that its spectrum consists of discrete eigenvalues with finite multiplicities accumulating only at infinity. Geometrically, this condition corresponds to closedness of the underlying space.

A spectral triple simultaneously encodes the topological data of a space through the algebra $A$ and its geometric data through the Dirac operator $D$. For this reason, spectral triples are interpreted as noncommutative Riemannian manifolds.

\medskip

A key feature of spectral triples is that they provide an intrinsic notion of distance. Given a spectral triple $(A,\mathcal{H},D)$, Connes introduced a distance function \cite{Connes1992,connes2013spectral,Connes:1994yd} on the state space of $A$.

\begin{definition}[Connes distance]
    Let $(A,\mathcal{H},D)$ be a spectral triple. For two states $\rho,\omega\in S(A)$, the Connes distance is defined by
    \[
        d_D(\rho,\omega)
        := \sup\Bigl\{\,|\rho(a)-\omega(a)| \;\Big|\; a\in A,\ \|[D,a]\|\le 1 \Bigr\}.
    \]
    When restricted to pure states $\rho,\omega\in PS(A)$, this formula recovers a generalized notion of distance between points.
\end{definition}

In the commutative case, this definition reproduces the classical Riemannian distance. Let $M$ be a closed Riemannian spin manifold, $S\to M$ its spinor bundle, and $\slashed{D}$ the Dirac operator acting on $L^2(S)$. Then
\[
    \bigl(C^\infty(M),\, L^2(S),\, \slashed{D}\bigr)
\]
defines a spectral triple, and the Connes distance satisfies (see \cite{Landi:1997sh})
\[
    d_{\slashed{D}}(\delta_x,\delta_y)=d_M(x,y),
\]
where $d_M$ denotes the geodesic distance on $M$. Thus, the spectral triple framework extends the Gelfand--Naimark correspondence from topology to Riemannian geometry.

\medskip

The relationship between spectral triples and classical geometry is formalized by Connes' reconstruction theorem \cite{connes2013spectral}. Roughly speaking, this theorem states that any commutative spectral triple satisfying a suitable list of axioms (regularity, finiteness, first-order condition, orientability, and others) is equivalent to the canonical spectral triple associated with a closed Riemannian spin manifold. In this sense, spectral triples provide a complete algebraic encoding of Riemannian manifolds.

In this work, we adopt this theorem as a guiding principle: noncommutative spectral triples are regarded as genuine noncommutative Riemannian manifolds \cite{Connes:1994yd,connes1985non}, whose geometric data are encoded spectrally. As will be shown in later sections, this geometric perspective naturally interfaces with quantum information-theoretic constructions.

\medskip

In summary, spectral triples provide a unifying framework in which topology, metric geometry, and spectral data are encoded algebraically. Noncommutative spectral triples are therefore naturally interpreted as noncommutative Riemannian manifolds, with distance measured by the Connes formula and geometry governed by the Dirac operator.

\subsection{Global Information and Local Transformations}

In this subsection, we clarify how global information and local transformations are distinguished within the framework of spectral triples. Our guiding principle is that global information should be insensitive to local operations, while local transformations are those confined to finite regions with respect to the Connes distance. This perspective will later allow us to identify natural candidates for quantum error correcting code spaces.

\subsubsection{A Discrete Example and Global Degrees of Freedom}

We begin with a simple discrete example to illustrate how global information arises naturally from the spectral properties of the Dirac operator.

Let
\[
    X=\{1,2,\dots,n\}
\]
be a finite set equipped with distances $d_{ij}$ between points. The associated commutative algebra is
\[
    A=C(X)\cong \mathbb{C}^n,
\]
with standard basis elements $e_i(j)=\delta_{ij}$.

We define the Hilbert space
\[
    \mathcal{H}=\bigoplus_{i<j}\mathbb{C}^2
\]
and the Dirac operator
\[
    D=\bigoplus_{i<j}
    \begin{pmatrix}
        0 & d_{ij}^{-1} \\
        d_{ij}^{-1} & 0
    \end{pmatrix}.
\]
For a function $f\in C(X)$, the commutator with its representation satisfies
\[
    [D,\pi(f)] \sim \frac{f_i-f_j}{d_{ij}},
\]
and the Connes distance reproduces the original metric,
\[
    d_D(\omega_i,\omega_j)=d_{ij}.
\]

In this construction, the inverse Dirac operator $D^{-1}$ scales with the distance $d_{ij}$ and therefore controls correlations between points: distant points are weakly correlated, while nearby points are strongly correlated. This behavior reflects the locality encoded by the spectral properties of $D$. However, this description breaks down for the zero modes of $D$.

Indeed, states in $\ker D$ are not suppressed by $D^{-1}$ and correspond to infinitely long-range correlations, independent of the distances $d_{ij}$. Such modes are insensitive to the local metric structure and cannot be distinguished by any operation that probes only finite distances. For this reason, $\ker D$ naturally represents global degrees of freedom of the system.

This observation provides a concrete realization of the intuitive idea underlying quantum error correction: global information is encoded in degrees of freedom that are delocalized over the entire space and cannot be affected by local perturbations. In the spectral triple framework, this intuition is captured precisely by the kernel of the Dirac operator.

\subsubsection{Local Regions and Local Algebras}

In general noncommutative spaces, it is neither necessary nor desirable to regard the entire pure state space $PS(A)$ as the physical or informational space. Instead, we fix a subset
\[
    X \subset PS(A)
\]
and interpret $X$ as the effective space on which locality is defined. Restricting the Connes distance $d_D$ to $X$ equips $(X,d_D)$ with the structure of a metric space.
All notions of locality and local transformations introduced below are defined relative to this metric structure.

Given the metric space $(X,d_D)$, for any $\omega\in X$ and $r>0$ we define the closed ball
\[
    B(\omega,r)=\{\sigma\in X \mid d_D(\sigma,\omega)\le r\}.
\]
Let $\mathcal{I}$ denote the family of subsets of $X$ that can be written as finite unions of such balls,
\[
    Y=\bigcup_{k=1}^m B(\omega_k,r_k).
\]
Elements of $\mathcal{I}$ represent local regions of the space.

For each $Y\in\mathcal{I}$, we define the associated local algebra by
\[
    A_Y
    :=\{\,a\in A_0 \mid \omega(a^*a)=0 \text{ for all } \omega\in X\setminus Y\,\}.
\]
Elements of $A_Y$ are operators localized within the region $Y$. In the commutative case, this definition reduces to functions supported on $Y$.

The family $\{A_Y\}_{Y\in\mathcal{I}}$ satisfies the following properties.

\begin{lemma}
    Let $Y,Z\in\mathcal{I}$. Then:
    \begin{enumerate}
        \item If $Y\subset Z$, then $A_Y \subset A_Z$.
        \item For any $a_0\in A_0$, the set
        \[
            \{\omega\in X \mid \omega(a_0^*a_0)\neq 0\}
        \]
        is compact in $(X,d_D)$. Consequently, there exists $Y\in\mathcal{I}$ such that $a_0\in A_Y$.
    \end{enumerate}
\end{lemma}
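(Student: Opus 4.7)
The plan is to handle the two claims separately: (1) is a direct consequence of the definition of $A_Y$ via set-theoretic monotonicity, while (2) requires translating the spectral-triple ideal condition $a_0(D-\im)^{-1}\in\mathcal{K}(\mathcal{H})$ into a statement about spatial localization with respect to the Connes distance.

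For (1), I would observe that $Y\subset Z$ implies $X\setminus Z\subset X\setminus Y$. Any $a\in A_Y$ by definition satisfies $\omega(a^*a)=0$ on the larger complement $X\setminus Y$, and hence a fortiori on the smaller complement $X\setminus Z$, giving $a\in A_Z$. This is a one-line verification that uses nothing beyond the definition.

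For (2), write $S(a_0):=\{\omega\in X:\omega(a_0^*a_0)\neq 0\}$; the core of the argument is to show that the closure of $S(a_0)$ is compact in $(X,d_D)$. I would proceed by contradiction: if this closure were not compact, then since $(X,d_D)$ is a metric space one could extract a sequence $\{\omega_n\}\subset S(a_0)$ that is $\epsilon$-separated in the Connes distance for some $\epsilon>0$. Applying the GNS construction to each $\omega_n$ yields cyclic unit vectors $\xi_n\in\mathcal{H}$ with $\|\pi(a_0)\xi_n\|^2=\omega_n(a_0^*a_0)>0$. The variational definition of $d_D$ then supplies, for every pair $n\neq m$, an element $a\in A$ with $\|[D,a]\|\le 1$ whose expectations on $\omega_n$ and $\omega_m$ differ by at least $\epsilon$; combining this separation with the boundedness of commutators $[D,a]$ should produce a bounded sequence in $\mathcal{H}$ whose image under $\pi(a_0)(D-\im)^{-1}$ has no norm-convergent subsequence, contradicting compactness of $\pi(a_0)(D-\im)^{-1}\in\mathcal{K}(\mathcal{H})$ guaranteed by the spectral-triple axioms.

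Once relative compactness of $S(a_0)$ is established, total boundedness of a compact metric subset produces, for any fixed $r>0$, a finite cover $\{B(\omega_k,r)\}_{k=1}^m$ of $\overline{S(a_0)}$; setting $Y:=\bigcup_{k=1}^m B(\omega_k,r)\in\mathcal{I}$ forces $\omega(a_0^*a_0)=0$ for every $\omega\in X\setminus Y$, i.e.\ $a_0\in A_Y$, completing the proof. The main obstacle is the contradiction step: one must couple the variational formula for $d_D$ to the compact-resolvent condition in a quantitative way, and likely needs a mild additional regularity on the admissible pure states (so that GNS vectors lie in a suitable common domain of $D$) to turn the heuristic orthogonalization of the $\pi(a_0)\xi_n$ into a rigorous refutation of compactness.
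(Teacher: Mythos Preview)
Your argument for part~(1) is identical to the paper's: both simply observe that $Y\subset Z$ implies $X\setminus Z\subset X\setminus Y$, so the vanishing condition defining $A_Y$ is stronger than that defining $A_Z$.

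For part~(2), however, the approaches diverge sharply. The paper's proof is essentially a two-line assertion: it notes that $\omega\mapsto\omega(a_0^*a_0)$ is a nonnegative continuous function on $X$, states without further justification that its support is compact in $(X,d_D)$, and then covers by finitely many balls to obtain $Y\in\mathcal{I}$. No use is made of the compact-resolvent condition $a_0(D-\im)^{-1}\in\mathcal{K}(\mathcal{H})$; the compactness is effectively treated as a standing hypothesis on the setup rather than something derived.

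Your proposal is far more ambitious: you try to \emph{prove} the compactness from the spectral-triple axioms. This is a genuinely different and more honest route, but the sketch has real gaps. First, non-compactness of a subset of a metric space need not produce an $\epsilon$-separated sequence: compactness is completeness plus total boundedness, and failure of completeness alone gives no separation. Second, the GNS vectors $\xi_n$ associated to distinct pure states $\omega_n$ live a priori in \emph{different} GNS Hilbert spaces, not in the fixed $\mathcal{H}$ of the spectral triple; to run your contradiction you need the $\omega_n$ to be vector states in the given representation $\pi$, which is an extra assumption on the choice of $X\subset PS(A)$. Third, as you yourself flag, the step linking Connes-distance separation of the $\omega_n$ to failure of norm-compactness of $\pi(a_0)(D-\im)^{-1}$ is only heuristic: the variational witnesses $a$ with $\|[D,a]\|\le1$ do not obviously produce the uniform lower bounds on $\|\pi(a_0)(D-\im)^{-1}\xi_n\|$ that a contradiction with compactness would require.

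In short: the paper sidesteps the analytic difficulty by asserting the conclusion, while you attempt to supply the missing argument but do not close it. Both agree on the final implication (compact support $\Rightarrow$ finite ball cover $\Rightarrow$ $a_0\in A_Y$).
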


\begin{proof}
    (1) If $Y\subset Z$, then $X\setminus Z\subset X\setminus Y$. For any $a\in A_Y$ and any $\omega\in X\setminus Z$, we have $\omega(a^*a)=0$, hence $a\in A_Z$.
    
    (2) For $a_0\in A_0$, the map $\omega\mapsto \omega(a_0^*a_0)$ is a nonnegative continuous function on $X$. Its support is compact with respect to the metric topology induced by $d_D$ and can therefore be covered by finitely many balls, yielding $a_0\in A_Y$ for some $Y\in\mathcal{I}$.
\end{proof}

\subsubsection{Diameter, Support, and Local Transformations}

To quantify the spatial extent of local regions, we define the diameter.

\begin{definition}[Diameter]
    For $Y\in\mathcal{I}$, the diameter of $Y$ is defined by
    \[
        \mathrm{diam}(Y)
        := \sup\{\, d_D(\omega,\sigma) \mid \omega,\sigma \in Y \,\}.
    \]
\end{definition}

Using this notion, we define the support of an operator.

\begin{definition}[Support and diameter of an operator]
    For $E\in A$, let $\mathrm{supp}(E)$ denote the minimal $Y\in\mathcal{I}$ such that $E\in A_Y$. The diameter of $E$ is defined as
    \[
        \mathrm{diam}(E):=\mathrm{diam}(\mathrm{supp}(E)).
    \]
\end{definition}

Operators with finite diameter generate local transformations, acting nontrivially only within bounded regions of $(X,d_D)$. By contrast, degrees of freedom associated with $\ker D$ are insensitive to all such local transformations and therefore encode global information.

This separation between global information and local transformations will be central to the definition of spectral codes in the next section.

\section{Definition and some properties of spectral code}
This section introduces spectral codes, which are the main subject of this paper, and investigates their fundamental properties.
Section~3.1 defines the notion of a spectral code.
Section~3.2 discusses correctable errors for spectral codes.
Section~3.3 shows that a variety of existing error-correcting codes can be reinterpreted as spectral codes.

\subsection{Definition of spectral code}

In this subsection, we introduce the notion of a spectral code, which provides a geometric definition of quantum error correcting codes based on spectral triples. We begin by briefly recalling the standard formulation of quantum error correction, and then explain how it naturally arises from the low-energy structure of a noncommutative Riemannian manifold.

A quantum error correcting code \cite{beny2007generalization,beny2007quantum} is defined as a subspace
\[
    \mathcal{C} = P\mathcal{H}
\]
of a Hilbert space $\mathcal{H}$, where $P$ is an orthogonal projection. The subspace $\mathcal{C}$ encodes logical quantum information. Given a set of error operators $\{E_a\}$ acting on $\mathcal{H}$, the Knill--Laflamme condition \cite{knill1996theory} states that the errors are correctable on $\mathcal{C}$ if and only if
\[
    P E_a^\dagger E_b P = \lambda_{ab} P
\]
for some coefficients $\lambda_{ab}$. This condition implies that, when restricted to the code space, errors act trivially up to an overall scalar and can therefore be reversed.

In this formulation, the code space is specified by a projection, and errors are described by operators belonging to a $C^*$-algebra acting on $\mathcal{H}$.

As discussed in the previous sections, a spectral triple $(A,\mathcal{H},D)$ provides an algebraic encoding of the geometric data of a (possibly noncommutative) Riemannian manifold. The Dirac operator $D$ determines the metric and correlation structure, while its spectrum provides a natural notion of energy scale.

In particular, the low-energy sector of $D$, and especially its zero modes, correspond to global degrees of freedom that are insensitive to the local metric structure. These modes are not suppressed by the inverse Dirac operator and therefore represent infinitely long-range correlations. From the perspective developed in Section~2.3, they encode global information that cannot be affected by local transformations.

This observation motivates defining a quantum code space by projecting onto the low-energy subspace of the Dirac operator. In this way, the intuitive principle of quantum error correction—encoding information into global degrees of freedom protected from local errors—is realized as a geometric construction.

We now give the central definition of this paper.

\begin{definition}[Spectral code]
    Let $(A,\mathcal{H},D)$ be a locally compact spectral triple, in the sense that the Dirac operator induces a metric structure and the local algebras $A_Y$ introduced in Section~2.3 are well-defined.
    
    Let
    \[
        P := \mathbf{1}_{\{0\}}(D)
    \]
    be the spectral projection onto the zero-eigenspace of the Dirac operator. The subspace
    \[
        \mathcal{C} := P\mathcal{H}
    \]
    is referred to as the code space associated with the spectral triple.
    
    We further define the compressed algebra and operator by
    \[
        A_c := P A P, \qquad D_c := P D P = 0,
    \]
    and refer to the triple
    \[
        (A_c,\mathcal{C},D_c)
    \]
    as the \emph{code spectral triple}.
    In this work, an error-correcting code obtained via such a zero-mode projection within the spectral formalism will be called a \emph{spectral code}.
\end{definition}

Although Definition~9 singles out the exact zero-mode sector via the projection $P=\mathbf{1}_{\{0\}}(D)$, it is often natural—both physically and for approximate error correction—to allow a small but finite energy window around zero.
Given a cutoff $\Lambda>0$, we may replace $P$ by the spectral projection onto the low-energy sector
\[
    P_{\Lambda}:=\mathbf{1}_{[-\Lambda,\Lambda]}(D)
    \quad\text{(equivalently, }P_{\Lambda}=\mathbf{1}_{[0,\Lambda]}(|D|)\text{),}
\]
and define the corresponding code space $\mathcal{C}_{\Lambda}:=P_{\Lambda}\mathcal{H}$.
The compressed algebra and operator are then
\[
    A_{c,\Lambda}:=P_{\Lambda}AP_{\Lambda}, \qquad D_{c,\Lambda}:=P_{\Lambda}DP_{\Lambda},
\]
yielding a ``low-energy'' (or band-limited) version of a spectral code, with the zero-mode case recovered at $\Lambda=0$.
This construction extracts the global degrees of freedom of the original noncommutative Riemannian manifold, while discarding local, high-energy modes
(fig.~\ref{fig:spectral}).

\begin{figure}[t]
    \begin{center}
        \includegraphics[width=100mm]{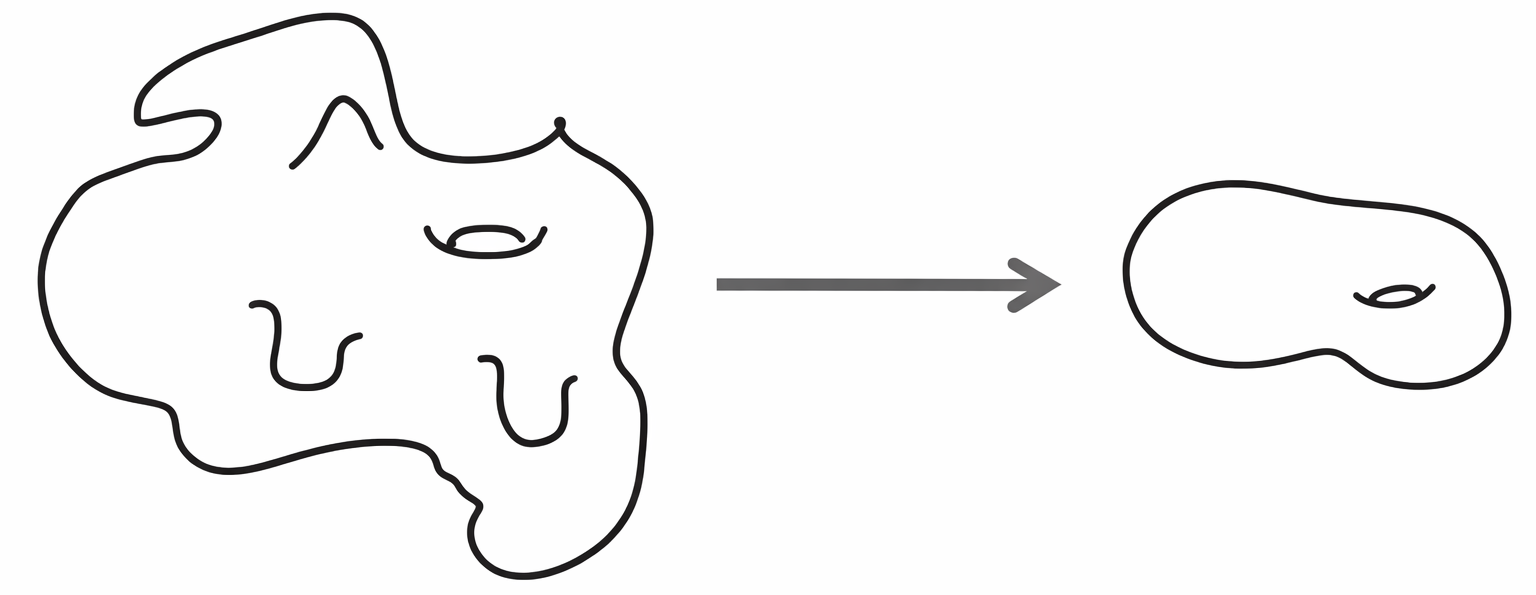}
        \caption{
        Conceptual Figure of spectral code
        }\label{fig:spectral}
    \end{center}
\end{figure}




Finally, the projection
\[
    (A,\mathcal{H},D) \longrightarrow (A_c,\mathcal{C},D_c)
\]
may be interpreted as a low-energy effective description associated with the original noncommutative Riemannian manifold. The spectral code retains only the global modes, which form a protected subspace for encoding quantum information.
In the next section, we prove that this code space remains protected against local errors whose support has bounded diameter.

This perspective will be essential in the subsequent sections, where we demonstrate how known quantum error correcting codes can be recovered within this framework and how controlled geometric deformations lead to improvements in error tolerance.

\subsection{Geometric realization of the Knill--Laflamme condition}

In this subsection, we show how the Knill--Laflamme (KL) conditions arise naturally from the geometric notion of locality introduced in Section~2.3. A key point is that locality is not defined on the entire pure state space $PS(A)$, but rather on a physically relevant subset
\[
    X \subset PS(A),
\]
equipped with the Connes distance $d_D$. All notions of local regions, local algebras, and diameters are understood with respect to this restricted metric space.

Let $P$ be the projection defining a spectral code, typically given by the zero-mode projection $P=\mathbf{1}_{\{0\}}(D)$. We introduce a geometric measure of the robustness of the code against local errors.

\begin{definition}[Code distance]
    The code distance associated with the projection $P$ is defined by
    \[
        d_D(P)
        := \inf \Bigl\{\, \mathrm{diam}(Y) \ \Big|\ Y\in\mathcal{I},\ \exists a\in A_Y \text{ such that } PaP \notin \mathbb{C}P \Bigr\}.
    \]
\end{definition}

Intuitively, $d_D(P)$ represents the minimal spatial scale at which a local operator can act nontrivially on the code space. Operators supported on regions of diameter smaller than $d_D(P)$ act as scalars when restricted to the code space.
This interpretation makes precise the analogy with the distance of a quantum error correcting code.

This intuition is formalized by the following lemma.

\begin{lemma}
    Let $Y\in\mathcal{I}$ satisfy $\mathrm{diam}(Y)< d_D(P)$. Then, for any $a\in A_Y$,
    \[
        PaP \in \mathbb{C}P.
    \]
\end{lemma}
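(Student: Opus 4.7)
The plan is to prove this almost directly by contradiction, since the statement is essentially a reformulation of the defining property of $d_D(P)$ as an infimum. Let $Y \in \mathcal{I}$ with $\mathrm{diam}(Y) < d_D(P)$ be given, and suppose toward contradiction that there exists some $a \in A_Y$ with $PaP \notin \mathbb{C}P$. Then $Y$ witnesses the nontrivial condition appearing in the definition of $d_D(P)$, namely $Y \in \mathcal{I}$ together with the existence of a ``bad'' operator $a \in A_Y$. Consequently $\mathrm{diam}(Y)$ lies in the set over which the infimum in the definition of $d_D(P)$ is taken.

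From this it follows that $d_D(P) \le \mathrm{diam}(Y)$ by the defining property of the infimum as a lower bound. Combined with the hypothesis $\mathrm{diam}(Y) < d_D(P)$, this yields $d_D(P) \le \mathrm{diam}(Y) < d_D(P)$, a contradiction. Hence no such $a \in A_Y$ exists, and every $a \in A_Y$ must satisfy $PaP \in \mathbb{C}P$, as required.

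The argument is essentially tautological, so there is no serious obstacle; the only care needed is to verify that the hypotheses match those of the defining set of $d_D(P)$. Specifically, $Y \in \mathcal{I}$ is explicitly assumed, and the assumed bad operator $a$ belongs to $A_Y$ by supposition, so no additional regularity (such as attainment of the infimum or compactness of $Y$) is required. One may also phrase the argument contrapositively: the set of diameters of regions supporting a bad operator is bounded below by $d_D(P)$, so any $Y$ strictly below this bound supports only operators $a$ for which $PaP$ is a scalar multiple of $P$. This framing makes clear the geometric content of the lemma: locality with respect to the Connes distance, below the code distance scale, is precisely what enforces the scalar behavior on the code subspace that underlies the Knill--Laflamme condition.
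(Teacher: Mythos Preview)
Your proof is correct and follows exactly the same approach as the paper's: a direct contradiction using the defining infimum property of $d_D(P)$. The paper's argument is simply a more terse version of what you wrote.
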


\begin{proof}
    Suppose, for contradiction, that there exists $a\in A_Y$ such that $PaP \notin \mathbb{C}P$. By the definition of the code distance, this implies $d_D(P)\le \mathrm{diam}(Y)$, which contradicts the assumption $\mathrm{diam}(Y)< d_D(P)$. Hence $PaP$ must be proportional to $P$.
\end{proof}

To analyze error correction, one must control the locality of products of error operators. We therefore impose the following natural assumption on the family of local algebras.

\begin{assumption}[Subadditivity of diameter]
    For any $Y,Z\in\mathcal{I}$ and any $a\in A_Y$, $b\in A_Z$, there exists $W\in\mathcal{I}$ such that
    \[
        ab \in A_W,
        \qquad
        \mathrm{diam}(W)\le \mathrm{diam}(Y)+\mathrm{diam}(Z).
    \]
\end{assumption}

This assumption expresses the idea that the support of a product of local operators is contained in the union of their supports, and that the corresponding spatial extent is bounded by the sum of their diameters.

We are now ready to derive the Knill--Laflamme conditions from geometric locality.

\begin{theorem}[Knill--Laflamme condition from locality]
    Let $\{E_\alpha\}\subset A_0$ be a family of error operators satisfying
    \[
        \mathrm{diam}(E_\alpha)\le R
    \]
    for some $R>0$. If
    \[
        2R < d_D(P),
    \]
    then the errors $\{E_\alpha\}$ are correctable on the code space $\mathcal{C}=P\mathcal{H}$, that is,
    \[
        P E_\alpha^\dagger E_\beta P = \lambda_{\alpha\beta} P
    \]
    for some scalars $\lambda_{\alpha\beta}$.
\end{theorem}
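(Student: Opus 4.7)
The plan is to reduce the Knill--Laflamme identity to a direct application of Lemma~11 by showing that every product $E_\alpha^\dagger E_\beta$ is localized in a region of diameter strictly less than the code distance $d_D(P)$. The argument then splits into three steps, the first of which carries the only nontrivial content.

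First I would verify that the adjoint operation does not enlarge the diameter, so that $\mathrm{diam}(E_\alpha^\dagger)\le R$. Writing $Y_\alpha=\mathrm{supp}(E_\alpha)$, the task is to show $E_\alpha^\dagger\in A_{Y_\alpha}$. In the commutative case this is immediate, since $\omega(a^*a)=|\omega(a)|^2=\omega(aa^*)$, and the characterization of $A_Y$ is symmetric in $a$ and $a^*$. In the noncommutative setting the definition of $A_Y$ is asymmetric in this respect, so this property must either be built into the family $\{A_Y\}$ by convention (postulating that each $A_Y$ is a $*$-subalgebra of $A_0$) or supplied as a mild supplementary hypothesis on the local algebras.

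Second, having established $E_\alpha^\dagger\in A_{Y_\alpha}$ and $E_\beta\in A_{Y_\beta}$ with $\mathrm{diam}(Y_\alpha),\mathrm{diam}(Y_\beta)\le R$, I would invoke the Subadditivity Assumption for the pair $(E_\alpha^\dagger,E_\beta)$ to produce $W\in\mathcal{I}$ such that $E_\alpha^\dagger E_\beta\in A_W$ and
\[
\mathrm{diam}(W)\le \mathrm{diam}(Y_\alpha)+\mathrm{diam}(Y_\beta)\le 2R<d_D(P).
\]
Third, Lemma~11 applied to $E_\alpha^\dagger E_\beta\in A_W$ yields $P E_\alpha^\dagger E_\beta P\in\mathbb{C}P$, hence there exist scalars $\lambda_{\alpha\beta}$ with $P E_\alpha^\dagger E_\beta P=\lambda_{\alpha\beta}P$, which is precisely the Knill--Laflamme condition.

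The main obstacle I anticipate is the first step. Because the local algebra $A_Y$ is defined asymmetrically through $\omega(a^*a)$, the behavior of support under the adjoint is not automatic in the genuinely noncommutative regime (one can already see in $M_2(\C)$ that a matrix unit may have $\omega(a^*a)=0$ while $\omega(aa^*)\neq 0$ for a pure state $\omega$). A clean resolution is to take the $*$-closure as part of the definition of $\{A_Y\}$ from Section~2.3, after which the remainder of the proof is a mechanical concatenation of the Subadditivity Assumption and the code-distance Lemma~11, with no further analytic input needed.
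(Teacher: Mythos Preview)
Your approach is essentially the same as the paper's: apply the subadditivity assumption to bound $\mathrm{diam}(E_\alpha^\dagger E_\beta)\le 2R<d_D(P)$ and then invoke the lemma to conclude $P E_\alpha^\dagger E_\beta P\in\mathbb{C}P$. The paper's proof is in fact shorter than yours, because it simply writes $\mathrm{diam}(E_\alpha^\dagger E_\beta)\le \mathrm{diam}(E_\alpha)+\mathrm{diam}(E_\beta)$ without pausing over the adjoint.

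Your first step is not a detour but a genuine observation: the paper's proof tacitly identifies $\mathrm{diam}(E_\alpha^\dagger)$ with $\mathrm{diam}(E_\alpha)$, which, as you point out, does not follow from the asymmetric definition $A_Y=\{a:\omega(a^*a)=0\ \forall\,\omega\in X\setminus Y\}$ in the general noncommutative case. In the crossed-product examples treated later the local algebras $A_Y$ are visibly $*$-closed (the support characterization in terms of the coefficient functions $f_u$ is invariant under $a\mapsto a^*$), so the issue does not arise there, but at the level of generality of the theorem your caveat is well placed. In short: same proof, and you have been more careful than the paper on exactly the one point that deserves care.
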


\begin{proof}
    By assumption, $\mathrm{diam}(E_\alpha)\le R$ for all $\alpha$. By the subadditivity of the diameter,
    \[
        \mathrm{diam}(E_\alpha^\dagger E_\beta)
        \le \mathrm{diam}(E_\alpha)+\mathrm{diam}(E_\beta)
        \le 2R.
    \]
    Since $2R< d_D(P)$, we have $\mathrm{diam}(E_\alpha^\dagger E_\beta)< d_D(P)$. Applying the previous lemma with $a=E_\alpha^\dagger E_\beta$ yields
    \[
        P E_\alpha^\dagger E_\beta P \in \mathbb{C}P,
    \]
    which implies the existence of scalars $\lambda_{\alpha\beta}$ such that
    \[
        P E_\alpha^\dagger E_\beta P = \lambda_{\alpha\beta} P.
    \]
    This is precisely the Knill--Laflamme condition.
\end{proof}

This theorem shows that the code distance $d_D(P)$ controls the correctability of local errors: any family of errors whose support diameter is smaller than $d_D(P)/2$ is correctable. Thus, the Knill--Laflamme conditions emerge as a direct consequence of the geometric separation between local and global degrees of freedom induced by the Dirac operator.

\subsection{Realization of Arbitrary Codes}

\subsubsection{Formal Reconstruction of Arbitrary Codes}

In this subsection, we show that the framework of spectral codes is sufficiently general to reproduce \emph{any} quantum error correcting code in a formal sense. By ``formal reconstruction,'' we mean that for any given quantum error correcting code, one can construct a spectral triple whose zero-mode sector coincides with the code space. This construction is purely algebraic and does not rely on any pre-existing geometric interpretation of the code. This result establishes spectral codes as a universal framework encompassing all conventional quantum error correcting codes.

\begin{restatable}{theorem}{Formal}
Let $\mathcal{C}\subset\mathcal{H}$ be a quantum error correcting code with projection $P$ onto $\mathcal{C}$, and let $\{E_a\}\subset\mathcal{B}(\mathcal{H})$ be a set of error operators satisfying the Knill--Laflamme condition
\[
    P E_a^\dagger E_b P = \lambda_{ab} P.
\]
Then there exists a spectral triple $(A,\mathcal{H},D)$ such that
\[
    \mathcal{C} = \ker D,
\]
and the associated spectral code defined by the zero-mode projection of $D$ coincides with the given quantum error correcting code.
\end{restatable}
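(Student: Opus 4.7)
The plan is to produce an explicit spectral triple $(A,\mathcal{H},D)$ directly from the code data $(\mathcal{C},P)$. The conclusion only requires $\ker D = \mathcal{C}$; the error operators $\{E_a\}$ play no role in the construction itself (they enter only to ensure, via Knill--Laflamme, that $\mathcal{C}$ is a legitimate code). This leaves substantial freedom, which I would exploit by building $D$ so that its kernel is $\mathcal{C}$ by construction and by choosing $A$ minimally, so that the spectral triple axioms reduce to trivial checks.

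Concretely, I would fix an orthonormal basis $\{f_n\}$ of the orthogonal complement $\mathcal{C}^\perp = (I-P)\mathcal{H}$, pick positive reals $\mu_n$ with $\mu_n\to\infty$ (for instance $\mu_n = n$), and define
\[
    D := \sum_n \mu_n \, |f_n\rangle\langle f_n|.
\]
This operator is self-adjoint, acts as zero on $\mathcal{C}$, and has strictly positive discrete spectrum on $\mathcal{C}^\perp$, so $\ker D = P\mathcal{H} = \mathcal{C}$ and hence $\mathbf{1}_{\{0\}}(D) = P$; the spectral code associated with the triple therefore coincides with the given code. For the algebra I would take $A := \mathrm{span}_{\mathbb{C}}\{I,P\}$, the finite-dimensional unital $*$-algebra generated by the code projection, acting on $\mathcal{H}$ by inclusion. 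Since $D$ is block-diagonal with respect to $\mathcal{H} = \mathcal{C}\oplus\mathcal{C}^\perp$, one has $[D,P]=0$, so every commutator $[D,a]$ with $a\in A$ vanishes and the bounded-commutator axiom is automatic; self-adjointness of $D$ and faithfulness of the representation are immediate.

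The only genuinely delicate step, and the main obstacle, is the compact-resolvent axiom. When $\dim \mathcal{C} < \infty$ (which covers all finite-dimensional stabilizer and topological code examples), the choice $\mu_n\to\infty$ makes $(D-\im)^{-1}$ itself compact and we obtain a bona fide unital spectral triple. When $\dim \mathcal{C} = \infty$, as for GKP-type codes, the zero eigenspace is infinite-dimensional and $(D-\im)^{-1}$ fails to be compact; in this case I would invoke the non-unital (\emph{local}) form of the axiom by enlarging $A$ to a $*$-algebra of block-diagonal operators preserving the chosen eigenbasis of $D$, and taking the ideal $A_0\subset A$ to consist of those elements whose block on $\mathcal{C}$ is finite-rank and whose block on $\mathcal{C}^\perp$ has finite support in the basis $\{f_n\}$, so that $a_0(D-\im)^{-1}$ is manifestly compact and $[D,a_0]$ remains bounded (in fact zero). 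Once this case distinction is made, the identification $\ker D = \mathcal{C}$ and the remaining spectral-triple axioms follow directly from the explicit construction.
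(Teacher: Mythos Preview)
Your construction is correct and achieves the stated conclusion, but it differs from the paper's proof in two respects worth noting.

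First, the paper does not work in the original Hilbert space: it \emph{enlarges} $\mathcal{H}$ to $P\mathcal{H}\oplus\bigoplus_{n\ge1}\mathcal{H}_n$ with auxiliary infinite-dimensional summands $\mathcal{H}_n$, and sets $D=0\oplus\bigoplus_n n\,\mathbf{1}_{\mathcal{H}_n}$. Your choice of staying in the given $\mathcal{H}$ and diagonalising $D$ on an orthonormal basis of $\mathcal{C}^\perp$ is more economical and matches the theorem statement more literally. Second, and more substantively, the paper deliberately builds the algebra $A$ so as to contain the error operators $\{E_a\}$ (embedded block-diagonally, with decay conditions on the auxiliary summands). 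You instead take the minimal algebra $A=\mathrm{span}_{\mathbb C}\{I,P\}$, observing correctly that the theorem only asks for $\ker D=\mathcal{C}$. Your reading is defensible, but the paper's choice is motivated by the surrounding framework: for the spectral code to \emph{see} the Knill--Laflamme condition geometrically one wants the errors to live in $A$, so that $PE_a^\dagger E_bP\in\mathbb{C}P$ becomes a statement about the compressed algebra $PAP$. With your minimal $A$ the compressed algebra is just $\mathbb{C}P$, so while the code space is recovered, the error structure is invisible to the triple. Your handling of the compact-resolvent axiom via the finite/infinite-dimensional case split is in fact cleaner than the paper's, which asserts compactness of $(D-\im)^{-1}$ despite infinite-multiplicity eigenvalues and then silently falls back on the locally compact (ideal) version.
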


The proof is constructive and proceeds by explicitly defining an auxiliary algebra and a Dirac-type operator whose kernel reproduces the prescribed code space. The proof of this theorem is given in \ref{app:proof}.

This result shows that spectral codes provide a universal formal framework for quantum error correction. However, the construction presented here is highly non-unique and largely artificial: the resulting spectral triple carries little intrinsic geometric structure and is not expected to arise naturally from physical considerations.

In the following subsections, we therefore focus on more structured and physically meaningful spectral triples, and show how well-known families of quantum error correcting codes—such as classical linear codes\cite{macwilliams1977theory}, stabilizer codes\cite{gottesman1997stabilizer}, GKP codes\cite{gottesman2000encoding}, and topological codes\cite{Kitaev:1997wr}—arise naturally within the spectral code framework.

The construction presented above shows that any quantum error correcting code can be formally realized as a spectral code. This universality admits a natural geometric interpretation: error correcting codes arise as projections of noncommutative spaces.

More precisely, a spectral triple $(A,\mathcal{H},D)$ describes the full set of degrees of freedom of a (possibly noncommutative) Riemannian space across all energy scales. The associated spectral code is obtained by a low-energy projection,
\[
    (A,\mathcal{H},D)
    \;\longrightarrow\;
    (A_c,\mathcal{C},D_c),
\]
where $\mathcal{C}=\ker D$ consists solely of global degrees of freedom. All local structures and high-energy modes are discarded in this projection. In this sense, the spectral code captures precisely the long-range, geometry-insensitive sector of the underlying space. In this sense, an error correcting code may be viewed as the low-energy effective theory of a noncommutative space, or equivalently as a projection extracting its global sector.

From this perspective, the code projection $P$ defining the code space is interpreted geometrically as a projection onto the global structure of a noncommutative space. Error correction is therefore not introduced as an ad hoc quantum-information-theoretic mechanism, but rather emerges as a manifestation of a geometric operation within noncommutative geometry.

An important consequence of this viewpoint is that classical and quantum error correcting codes are treated on equal footing. In the commutative case, spectral triples reduce to ordinary topological spaces, and their low-energy projections reproduce classical error correcting codes. In the noncommutative case, the same construction yields quantum error correcting codes. Thus, the distinction between classical and quantum codes corresponds to the distinction between commutative and noncommutative geometries.

Moreover, the present framework naturally extends beyond a unified treatment of classical and quantum error-correcting codes to situations in which classical and quantum structures coexist. In particular, it allows one to consider codes in which classical information is protected by quantum degrees of freedom, as well as codes in which quantum information is protected by classical structures. In this work, we discuss Berezin–Toeplitz quantization as an example of the latter in Section~5, and holography as an example of the former in Section~6 (fig.~\ref{fig:real}).

\begin{figure}[t]
    \begin{center}
        \includegraphics[width=100mm]{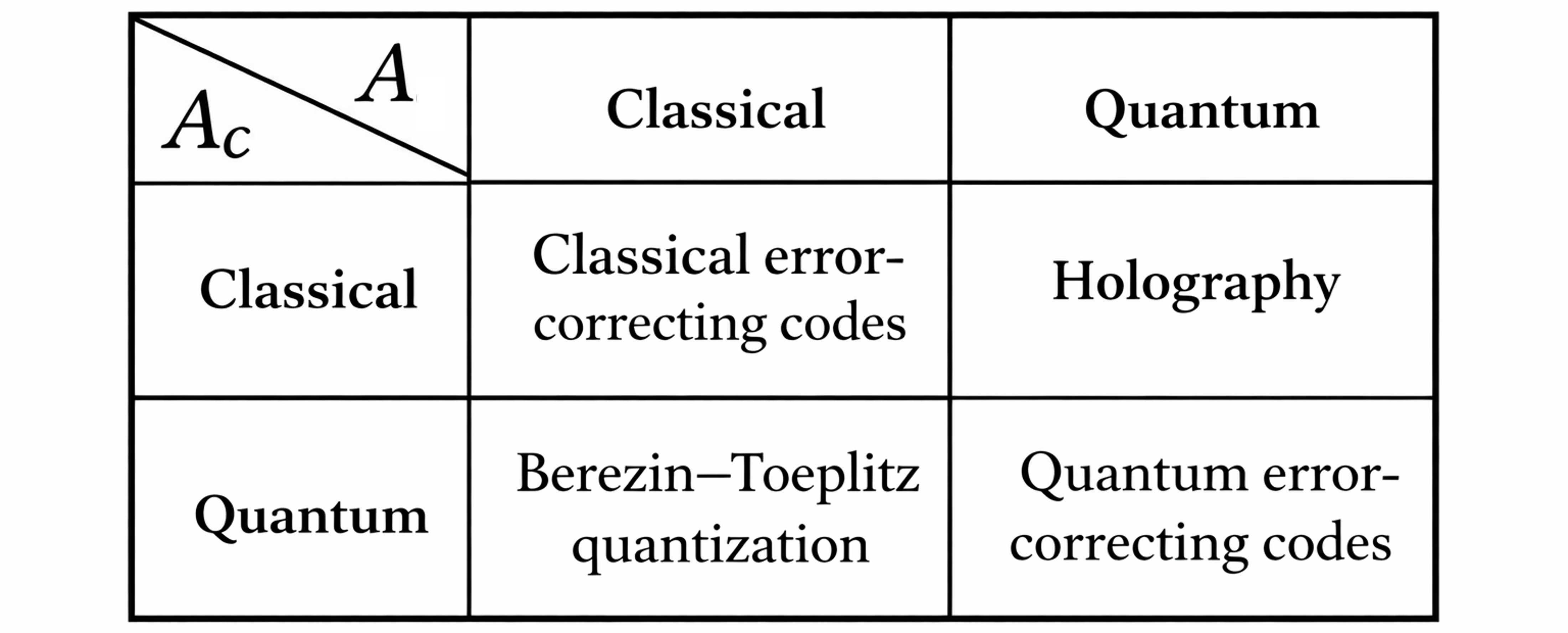}
        \caption{
        Classification of error-correcting structures according to the nature of the algebra and the encoded information.
        Commutative algebras give rise to classical error-correcting codes, while noncommutative algebras yield quantum error-correcting codes. Mixed cases naturally include Berezin–Toeplitz quantization and holography, illustrating that classical and quantum codes arise on equal footing within the spectral framework.
        }\label{fig:real}
    \end{center}
\end{figure}

In summary, arbitrary error correcting codes can be understood as spectral codes arising from projections of noncommutative spaces. This observation provides a unified geometric framework for both classical and quantum error correction. In the following subsections, we restrict attention to spectral triples with additional geometric structure and demonstrate how familiar families of codes naturally emerge from intrinsic geometric constructions rather than purely formal ones.

\subsubsection{Preparations}

In this subsection, we prepare the geometric framework necessary to reconstruct classical linear codes, stabilizer codes, GKP codes, and topological codes within the spectral code formalism. The central idea is to realize a discrete metric structure as a finite spectral triple in a standard way, such that the code distance is recovered as a geometric weight.

\subsubsection*{Discrete Abelian Groups and Weight Functions}

Let $V$ be an abelian group, finite or countable, equipped with a weight function
\[
    \mathrm{wt}: V \to \mathbb{R}_{\ge 0},
\]
satisfying
\[
    \mathrm{wt}(0)=0, \qquad \mathrm{wt}(v)>0 \quad (v\neq 0).
\]
The weight function plays the role of a distance generator and will later be identified with the Connes distance induced by a spectral triple.

We denote by
\[
    C(V) := \{ f: V \to \mathbb{C} \}
\]
the commutative $C^*$-algebra of complex-valued functions on $V$.

\subsubsection*{Group Action and Twisted Crossed Product Algebra}

We define an action of $V$ on $C(V)$ by translations. For each $u\in V$, let
\[
    \alpha_u: C(V) \to C(V), \qquad (\alpha_u f)(x)=f(x-u).
\]

Let $\sigma: V\times V \to U(1)$ be a normalized $2$-cocycle satisfying
\[
    \sigma(u,v)\sigma(u+v,w)=\sigma(v,w)\sigma(u,v+w)
    \quad \text{for all } u,v,w\in V.
\]

The twisted crossed product $C^*$-algebra
\[
    C(V)\rtimes_\sigma V
\]
is generated by $C(V)$ and unitary elements $\{U_u\}_{u\in V}$ subject to the relations
\[
    U_uU_v=\sigma(u,v)U_{u+v},\qquad
    U_u^*=\overline{\sigma(u,-u)}\,U_{-u},\qquad
    U_u f U_u^*=\alpha_u(f).
\]
Any element $a\in C(V)\rtimes_\sigma V$ can be written formally as
\[
    a=\sum_{u\in V} f_u\,U_u, \qquad f_u\in C(V).
\]

This algebra simultaneously encodes position degrees of freedom through $C(V)$ and translation degrees of freedom through the unitaries $U_u$, which will later be interpreted as error operators.

\subsubsection*{Hilbert Space from Pairwise Geometry}

To encode the metric structure, we introduce an auxiliary Hilbert space based on \emph{pairs of points}:
\[
    \mathcal H_m := \bigoplus_{x\neq y\in V} \mathbb C^2,
\]
or equivalently, to avoid duplication,
\[
    \mathcal H_m := \bigoplus_{x<y} \mathbb C^2.
\]
Each summand corresponds to an unordered pair $(x,y)$ and represents the geometric degree of freedom measuring the distance between the two points. This construction coincides with the standard realization of finite spectral triples associated with discrete metric spaces.

\subsubsection*{Dirac Operator}

Using the weight function, we define the Dirac operator $D_m$ on $\mathcal H_m$ by
\[
    D_m
    := \bigoplus_{x\neq y}
    \begin{pmatrix}
        0 & \mathrm{wt}(x-y)^{-1} \\
        \mathrm{wt}(x-y)^{-1} & 0
    \end{pmatrix}.
\]
Thus, pairs of points with small separation correspond to high-energy modes, while distant pairs correspond to low-energy modes. The inverse $D_m^{-1}$ is proportional to the distance and therefore encodes correlation strength.
This is precisely the standard construction of a finite spectral triple reproducing a discrete metric via the Connes distance.

\subsubsection*{Representation and Finite Spectral Triple}

We now define a representation $\pi_m$ of $C(V)\rtimes_\sigma V$ on $\mathcal H_m$. For an element
\[
    a=\sum_{u\in V} f_u\,U_u,
\]
the operator $\pi_m(a)$ acts on each $(x,y)$-component as
\[
    \pi_m(a)\big|_{(x,y)}
    =
    \begin{pmatrix}
        f_0(x) & 0 \\
        0 & f_0(y)
    \end{pmatrix}.
\]
That is, $\pi_m$ acts diagonally by evaluation of the coefficient $f_0$ at the two points.

With this definition, the commutator with the Dirac operator is
\[
    [D_m,\pi_m(a)]\big|_{(x,y)}
    =
    \mathrm{wt}(x-y)^{-1}
    \begin{pmatrix}
        0 & f_0(x)-f_0(y) \\
        f_0(y)-f_0(x) & 0
    \end{pmatrix},
\]
which is bounded. Hence
\[
    \bigl(C(V)\rtimes_\sigma V,\ \mathcal H_m,\ D_m\bigr)
\]
defines a finite (or locally finite) spectral triple, and the induced Connes distance satisfies
\[
    d_{D_m}(\omega_x,\omega_y)=\mathrm{wt}(x-y)
\]
for pure states $\omega_x,\omega_y$.

\subsubsection*{Physical Hilbert Space and Full Spectral Triple}

We now introduce the physical Hilbert space
\[
    \mathcal H_{\mathrm{phys}} := \ell^2(V),
\]
with canonical basis $\{\ket{x}\}_{x\in V}$. The regular representation $\pi_{\mathrm{reg}}$ is defined by
\[
    \pi_{\mathrm{reg}}(f)\ket{x}=f(x)\ket{x},\qquad
    \pi_{\mathrm{reg}}(U_u)\ket{x}=\sigma(u,x)\ket{x+u}.
\]

The full Hilbert space is
\[
    \mathcal H := \mathcal H_m \oplus \mathcal H_{\mathrm{phys}},
\]
and the Dirac operator is
\[
    D := D_m \oplus D_c,
\]
where $\ker D_c=\{0\}$ on $\mathcal H_{\mathrm{phys}}$. Consequently,
\[
    \ker D = \mathcal H_{\mathrm{phys}},
\]
which will serve as the code space of the associated spectral code.

\subsubsection*{Choice of Pure States and Recovery of the Metric}

We select a subset of pure states
\[
    X := \{\omega_x \mid \omega_x(a)=f_0(x) \text{ for } a=\sum_u f_uU_u\}
    \subset PS\bigl(C(V)\rtimes_\sigma V\bigr),
\]
interpreted as the position space. Restricting the Connes distance to $X$, we recover the weight as distance.

\begin{restatable}{lemma}{disred}
    \label{lem:distance-reduction}
    For all $\omega_x,\omega_y\in X$, one has
    \[
        d_D(\omega_x,\omega_y)=d_{D_m}(\omega_x,\omega_y)=\mathrm{wt}(x-y).
    \]
\end{restatable}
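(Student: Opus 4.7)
The plan is to reduce the computation of the Connes distance induced by $D$ to that induced by $D_m$, and then to the standard Connes-distance computation for a commutative finite spectral triple built from a weighted point set, which was already sketched in the discrete example of Section~2.3.

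First I would unpack the direct-sum structure of the full spectral data. Since $D=D_m\oplus D_c$ with $D_c=0$ on $\mathcal H_{\mathrm{phys}}$ (so that $\ker D=\mathcal H_{\mathrm{phys}}$), and since the representation on $\mathcal H$ is $\pi=\pi_m\oplus\pi_{\mathrm{reg}}$ respecting this decomposition, the commutator splits as
\[
    [D,\pi(a)]\;=\;[D_m,\pi_m(a)]\oplus[0,\pi_{\mathrm{reg}}(a)]\;=\;[D_m,\pi_m(a)]\oplus 0,
\]
so $\|[D,\pi(a)]\|=\|[D_m,\pi_m(a)]\|$. The defining constraints for $d_D$ and $d_{D_m}$ therefore coincide. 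Moreover, by the definition of the states $\omega_x$, the differences evaluated on $a=\sum_u f_uU_u$ satisfy $|\omega_x(a)-\omega_y(a)|=|f_0(x)-f_0(y)|$ and depend only on $f_0$. Hence the two suprema are over the same set against the same functional, giving the first equality $d_D(\omega_x,\omega_y)=d_{D_m}(\omega_x,\omega_y)$.

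Next I would compute $d_{D_m}(\omega_x,\omega_y)$. Since $\pi_m(a)$ only sees the $u=0$ component, the supremum can be restricted to elements $a=f\cdot 1$ with $f\in C(V)$. The explicit matrix form of $[D_m,\pi_m(f)]$ recorded in the excerpt gives
\[
    \|[D_m,\pi_m(f)]\|\;=\;\sup_{u\neq v}\frac{|f(u)-f(v)|}{\mathrm{wt}(u-v)},
\]
so the Connes distance acquires the familiar Lipschitz-dual form
\[
    d_{D_m}(\omega_x,\omega_y)
    \;=\;\sup\Bigl\{\,|f(x)-f(y)|\;\Big|\;f\in C(V),\ \sup_{u\neq v}\tfrac{|f(u)-f(v)|}{\mathrm{wt}(u-v)}\le 1\Bigr\}.
\]
Specializing the constraint to $(u,v)=(x,y)$ yields the upper bound $d_{D_m}(\omega_x,\omega_y)\le\mathrm{wt}(x-y)$. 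For the matching lower bound I would exhibit the test function $f(z):=\mathrm{wt}(z-y)$, which satisfies $|f(x)-f(y)|=\mathrm{wt}(x-y)$ and thus saturates the supremum.

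The main obstacle is checking that this test function actually lies in the admissible set, i.e.\ that
\[
    \bigl|\mathrm{wt}(u-y)-\mathrm{wt}(v-y)\bigr|\;\le\;\mathrm{wt}(u-v)
\]
for all $u,v\in V$. This is precisely the subadditivity $\mathrm{wt}(a+b)\le\mathrm{wt}(a)+\mathrm{wt}(b)$ of the weight function on the abelian group $V$, which holds for all concrete choices used in the subsequent subsections (the Hamming weight for classical linear codes, the symplectic weight for stabilizer codes, the Euclidean weight for GKP codes, and the combinatorial edge-distance for topological codes). Under this standing hypothesis the lower bound goes through and the two equalities $d_D(\omega_x,\omega_y)=d_{D_m}(\omega_x,\omega_y)=\mathrm{wt}(x-y)$ follow; if $\mathrm{wt}$ is merely positive and vanishing at $0$, the statement should be read with $\mathrm{wt}$ replaced by its largest subadditive minorant, a situation that does not arise in the applications invoked later.
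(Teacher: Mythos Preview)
Your argument is correct and follows the same route as the paper's proof: use the block-diagonal structure $D=D_m\oplus 0$ and $\pi=\pi_m\oplus\pi_{\mathrm{reg}}$ to reduce $d_D$ to $d_{D_m}$, then compute the latter for the finite metric spectral triple. You are in fact more careful than the paper on the second equality, which the paper simply cites as ``the standard computation'': you explicitly identify that the saturating test function $f(z)=\mathrm{wt}(z-y)$ requires subadditivity (and, strictly, evenness $\mathrm{wt}(-v)=\mathrm{wt}(v)$) of the weight, a hypothesis the paper leaves tacit but which does hold in every concrete instance used downstream.
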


The proof of this lemma is given in \ref{app:proof}.

\subsubsection*{Local Algebras and Code Distance}

Finally, the local algebras introduced in Section~2.3 admit the following concrete description in the discrete crossed-product setting.

\begin{restatable}{lemma}{ay}
    \label{lem:AY-support}
    Let $A=C(V)\rtimes_\sigma V$ and write $a\in A$ as a finite sum
    \[
        a=\sum_{u\in V} f_u U_u, \qquad f_u\in C(V).
    \]
    Fix a subset $X\subset PS(A)$ of ``position states'' $\{\omega_x\}_{x\in V}$, and let $Y\subset X$.
    For such a subset $Y$, we define
    \[
        A_Y:=\{a\in A_0\mid \forall\,\omega\in X\setminus Y,\ \omega(a^*a)=0\}.
    \]
    Then one has the following characterization:
    \[
        A_Y=\Bigl\{\sum_{u\in V} f_uU_u\in A \ \Bigm|\ \forall u\in V,\ supp(f_u)\subset Y+u\Bigr\}, 
    \]
    where
    \[
        Y+u:=\{y+u\mid y\in Y\}\subset V,
    \]
    and we identify the subset $Y\subset X$ with the corresponding subset of $V$ via the canonical correspondence $Y=\{\omega_y\mid y\in V\}$.
\end{restatable}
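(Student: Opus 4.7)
My plan is to reduce the defining condition of $A_Y$ to an explicit pointwise condition on the coefficients $f_u$ by computing $\omega_x(a^*a)$ directly from the twisted crossed-product relations and then invoking positivity.

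Concretely, I would first expand $a^*=\sum_u U_u^* f_u^*$ and use $U_u^*=\overline{\sigma(u,-u)}\,U_{-u}$ together with the commutation rule $U_w g=\alpha_w(g)U_w$ to rewrite
\[
    a^*=\sum_{u\in V}\overline{\sigma(-u,u)}\,\overline{\alpha_u(f_{-u})}\,U_u.
\]
Multiplying with $a=\sum_v f_v U_v$ via $U_uU_v=\sigma(u,v)U_{u+v}$, the only contributions to the $U_0$-component arise from the diagonal pairs $v=-u$, giving
\[
    [a^*a]_0 = \sum_{u\in V}\sigma(u,-u)\overline{\sigma(-u,u)}\,|\alpha_u(f_{-u})|^2.
\]
Positivity of $a^*a$ forces $[a^*a]_0$ to be a nonnegative real-valued function of $x$, so the cocycle phases must collapse to $1$ (this can also be verified directly from the $2$-cocycle identity with $v=-u$ and $w=u$, together with normalization). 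After the change of summation variable $v=-u$ this yields the clean formula
\[
    \omega_x(a^*a)=[a^*a]_0(x)=\sum_{u\in V}|f_u(x+u)|^2,
\]
a sum of nonnegative terms.

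With this identity in hand the lemma becomes essentially combinatorial. Using the canonical identification of $X\setminus Y$ with $V\setminus Y$, the condition $\omega_x(a^*a)=0$ for every $x\in V\setminus Y$ is equivalent, by termwise nonnegativity, to $f_u(x+u)=0$ for every $u\in V$ and every $x\notin Y$. Setting $y=x+u$ transforms this into $f_u(y)=0$ whenever $y-u\notin Y$, i.e.\ whenever $y\notin Y+u$, so $a\in A_Y$ iff $\mathrm{supp}(f_u)\subset Y+u$ for every $u\in V$. The converse direction is immediate from the same formula, establishing the claimed characterization.

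The only genuine subtlety is the bookkeeping of the cocycle phases $\sigma(u,-u)$ and $\overline{\sigma(-u,u)}$; the cleanest route is to invoke positivity of $a^*a$ a priori to conclude that $[a^*a]_0$ is nonnegative real-valued, rather than tracking the phases explicitly. With that shortcut the remainder is a direct expansion followed by a routine change of variables and an application of pointwise positivity, and I do not anticipate any conceptual obstacle.
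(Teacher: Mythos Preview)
Your proposal is correct and reaches the same key identity $\omega_x(a^*a)=\sum_{u}|f_u(x+u)|^2$ that the paper uses, after which the combinatorial argument (the equivalence with $\mathrm{supp}(f_u)\subset Y+u$) is identical. The only implementation difference is that the paper obtains this formula by interpreting $\omega_x$ as the vector state $\langle x|\,\cdot\,|x\rangle$ in the regular representation, so that $\omega_x(a^*a)=\|\pi_{\mathrm{reg}}(a)|x\rangle\|^2$ and the cocycle phases drop out automatically in the norm, whereas you extract the $U_0$-coefficient algebraically and cancel the phases via the normalized $2$-cocycle identity; both routes are valid and yield the same result.
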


The proof of this lemma is given in \ref{app:proof}.

This characterization shows that locality is encoded entirely in the supports of the coefficient functions. It therefore provides a concrete and computable notion of locality in the crossed-product algebra. It provides the technical foundation for computing code distances and for reconstructing concrete error correcting codes in the subsequent subsections.

We also recall the definition
\[
    W:=\{u\in V\mid P\,\pi_{\mathrm{reg}}(U_u)\,P\notin\mathbb C P\},
\]
which identifies those translation operators that act nontrivially on the code space.

\begin{restatable}{theorem}{cc}
    \label{thm:classical-distance}
    The code distance of the spectral code satisfies
    \[
        d_D(P)=\min_{u\in W}\mathrm{wt}(u).
    \]
\end{restatable}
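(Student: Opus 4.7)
The plan is to prove the equality $d_D(P)=\min_{u\in W}\wt(u)$ by establishing the two inequalities separately, with Lemma~\ref{lem:AY-support} providing the bridge between algebraic locality in $A_Y$ and the weight function.

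For the upper bound $d_D(P)\le\min_{u\in W}\wt(u)$, I would fix $u^{*}\in W$ realizing the minimum and construct a localized witness $a=\delta_{x_{0}}U_{u^{*}}$ for a suitable $x_{0}\in V$. By Lemma~\ref{lem:AY-support}, $a$ lies in $A_{Y}$ for $Y$ as small as the two-point set $\{x_{0}-u^{*},x_{0}\}$, whose diameter is precisely $\wt(u^{*})$. To see that $PaP\notin\mathbb{C}P$, I would decompose
\[
P\pi_{\mathrm{reg}}(U_{u^{*}})P=\sum_{x\in V}\sigma(u^{*},x)\,P|x+u^{*}\rangle\langle x|P;
\]
the assumption $u^{*}\in W$ says this sum is not in $\mathbb{C}P$, and if every individual summand $P|x+u^{*}\rangle\langle x|P$ were a scalar multiple of $P$ the sum would be as well, so at least one summand must fail to be in $\mathbb{C}P$ and I can pick $x_{0}$ accordingly. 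Minimizing over $u^{*}\in W$ yields the claimed inequality.

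For the lower bound $d_D(P)\ge\min_{u\in W}\wt(u)$, I would argue by contrapositive. Suppose $a\in A_{Y}$ with $\mathrm{diam}(Y)<\min_{u\in W}\wt(u)$, and expand $a=\sum_{u}f_{u}U_{u}$ as a finite sum in the crossed product. Lemma~\ref{lem:AY-support}, combined with the analogous support condition coming from $a^{*}\in A_{Y}$, implies $\mathrm{supp}(f_{u})\subset Y\cap(Y+u)$. Nonemptyness of this intersection requires $u=y-y'$ for some $y,y'\in Y$, hence $\wt(u)\le\mathrm{diam}(Y)<\min_{w\in W}\wt(w)$, so each surviving $u$ lies outside $W$ and thus satisfies $P\pi_{\mathrm{reg}}(U_{u})P=c_{u}P$ for some scalar $c_{u}$.

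The main technical obstacle is to promote these translation-level scalar identities to a scalar identity for the whole operator $PaP$, since $P$ need not commute with the multiplication operators $\pi_{\mathrm{reg}}(f_{u})$. To close this, I would use the translation invariance of $P=\mathbf{1}_{\{0\}}(D)$, which holds because $D$ vanishes on $\mathcal{H}_{\mathrm{phys}}$ and translations preserve this subspace, so $[P,\pi_{\mathrm{reg}}(U_{u})]=0$. Applying the covariance $f_{u}U_{u}=U_{u}\,\alpha_{-u}(f_{u})$ then rewrites $Pf_{u}U_{u}P=c_{u}\,P\pi_{\mathrm{reg}}(\alpha_{-u}(f_{u}))P$, reducing the problem to analyzing the diagonal part of $a$ on the code space. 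The delicate remaining step is to show that this diagonal contribution is itself forced to be a scalar on $\mathcal{C}$, using the small-diameter support constraint $\mathrm{supp}(\alpha_{-u}(f_{u}))\subset Y$ and the exhaustive definition of $W$; this is where the bulk of the residual work lies.
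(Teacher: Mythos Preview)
Your upper bound is close in spirit to the paper's: the paper takes $a=U_u$ directly and places it in $A_{\{0,u\}}$, while you localize to $\delta_{x_0}U_{u^*}$ and argue via the summand decomposition that some $x_0$ must work. This is a reasonable variant.

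For the lower bound, your approach diverges from the paper's and contains a genuine gap. Your contrapositive strategy needs each term $P(f_uU_u)P$ to land in $\mathbb{C}P$ once $u\notin W$, and to achieve this you invoke $[P,\pi_{\mathrm{reg}}(U_u)]=0$, justified by the claim that ``$D$ vanishes on $\mathcal{H}_{\mathrm{phys}}$''. This is a misreading of the setup: on $\mathcal{H}_{\mathrm{phys}}$ the Dirac operator acts as the \emph{nonzero} operator $D_c$, whose kernel is the code (see the Summary at the end of Section~3.3.2 and the explicit $D_c$ in the classical and stabilizer examples). The projection $P$ is therefore \emph{not} translation invariant in general---in the classical linear code, for instance, $[P,U_u]=0$ holds only for $u\in C$---so your reduction to a ``diagonal part'' collapses, and the residual step you flag is not a technicality but the entire difficulty. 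A smaller issue: you also assume $a^*\in A_Y$ to get $\mathrm{supp}(f_u)\subset Y\cap(Y+u)$, but the paper's $A_Y$ is defined via $\omega(a^*a)=0$ alone and is not obviously $*$-closed.

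The paper runs the lower bound in the opposite direction and thereby sidesteps both problems. Starting from $PaP\notin\mathbb{C}P$ with $Y$ chosen minimal, it extracts a single Fourier component $f_{v_0}U_{v_0}$ with $P(f_{v_0}U_{v_0})P\notin\mathbb{C}P$, picks $z\in\mathrm{supp}(f_{v_0})$, and uses the minimality of $Y$ to force both $z$ and $z-v_0$ into $Y$, giving $\mathrm{diam}(Y)\ge\mathrm{wt}(v_0)$ directly. No commutation of $P$ with translations or multiplication operators is needed: the minimality argument does the geometric work that your algebraic reduction attempts.
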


The proof of this theorem is given in \ref{app:proof}.

\subsubsection*{Summary}

We summarize the crossed-product construction and the resulting distance formula for the code.

Let $V$ be a finite abelian group (or a finite subset endowed with an additive structure), and let
\[
    A:=C(V)\rtimes_{\sigma} V
\]
be the twisted crossed-product $C^*$-algebra determined by a $2$-cocycle $\sigma$ (equivalently, by a phase function $\omega$ in the regular representation).
We consider the spectral triple on the Hilbert space
\[
    \mathcal{H}:=\mathcal{H}_m\oplus \mathcal{H}_{\mathrm{phys}},
    \qquad
    D:=D_m\oplus D_c,
    \qquad
    \pi:=\pi_m\oplus \pi_{\mathrm{reg}},
\]
where the ``metric part'' $(C(V)\rtimes_\sigma V,\mathcal{H}_m,D_m)$ encodes a weight function $\mathrm{wt}$ through the choice of Dirac operator, and the ``code part'' $(C(V)\rtimes_\sigma V,\mathcal{H}_{\mathrm{phys}},D_c)$ satisfies
\[
    \mathcal{C}:=\ker D_c,\qquad
    P_c:\mathcal{H}_{\mathrm{phys}}\to \mathcal{C}
    \ \ \text{(orthogonal projection)}.
\]

We define the set of nontrivial translations by
\[
    W:=\Bigl\{u\in V\ \Bigm|\ P_c\,\pi_{\mathrm{reg}}(U_u)\,P_c \notin \mathbb{C}P_c\Bigr\}.
\]
Then the code distance associated to $P_c$ (as defined via geometric localization by diameter) is given by
\[
    d_D(P_c)=\min_{u\in W}\mathrm{wt}(u).
\]

In particular, by choosing $(V,D_c,\mathrm{wt})$ appropriately —and hence fixing the cocycle data $(\sigma,\omega)$ through the regular representation— one can reproduce the desired distance property in this crossed-product model.

\subsubsection{Classical Linear Codes}

In this subsection, we show that the spectral code framework naturally reproduces classical linear codes. In particular, we demonstrate that the code distance emerges geometrically as the Hamming weight.

Let
\[
    V=\mathbb{F}_2^n
\]
be a vector space over the finite field $\mathbb{F}_2$.
The physical Hilbert space is taken to be
\[
    \mathcal{H}_{\mathrm{phys}}=\ell^2(\mathbb{F}_2^n),
\]
with canonical orthonormal basis $\{\ket{x}\}_{x\in\mathbb{F}_2^n}$.

Let $C\subset\mathbb{F}_2^n$ be a classical linear code.
We define the code subspace as
\[
    \mathcal{C}=\ell^2(C)=\mathrm{span}\{\ket{x}\mid x\in C\},
\]
and denote by
\[
    P:=\sum_{x\in C}\ket{x}\bra{x}
\]
the orthogonal projection onto $\mathcal{C}$.

We define the ``code part'' of the Dirac operator by
\[
    D_c\ket{x}=
    \begin{cases}
        0\cdot\ket{x}, & x\in C,\\
        1\cdot\ket{x}, & x\notin C.
    \end{cases}
\]
Then
\[
    \ker D_c=\ell^2(C)=\mathcal{C},
\]
so the code space is realized as the zero-mode subspace of $D_c$.

On the group $V=\mathbb{F}_2^n$, we choose the Hamming weight
\[
    \mathrm{wt}(u):=\#\{\,i\mid u_i\neq 0\,\}
\]
as the weight function.
As shown in Section~3.3.2, this choice is geometrically implemented by the Dirac operator $D_m$ through the Connes distance:
\[
    d_D(\omega_x,\omega_y)=\mathrm{wt}(x-y).
\]

Thus, the full spectral triple
\[
    \bigl(C(\mathbb{F}_2^n)\rtimes \mathbb{F}_2^n,\ 
    \mathcal{H}_m\oplus\mathcal{H}_{\mathrm{phys}},\ 
    D_m\oplus D_c\bigr)
\]
encodes the Hamming geometry of $\mathbb{F}_2^n$.

Recall the definition
\[
    W:=\{\,u\in V\mid P\,\pi_{\mathrm{reg}}(U_u)\,P\notin\mathbb{C}P\,\},
\]
where $\pi_{\mathrm{reg}}(U_u)\ket{x}=\ket{x+u}$.

If $u\notin C$, then $\pi_{\mathrm{reg}}(U_u)$ maps $\ell^2(C)$ into its orthogonal
complement, and hence
\[
    P\,\pi_{\mathrm{reg}}(U_u)\,P=0\in\mathbb{C}P.
\]

On the other hand, if $u\in C\setminus\{0\}$, then $\pi_{\mathrm{reg}}(U_u)$ preserves $\ell^2(C)$ but acts by a nontrivial permutation of the basis vectors $\{\ket{x}\mid x\in C\}$. Therefore, it is not a scalar multiple of the identity on $\mathcal{C}$, and
\[
    P\,\pi_{\mathrm{reg}}(U_u)\,P\notin\mathbb{C}P.
\]

Consequently, we obtain
\[
    W=C\setminus\{0\}.
\]

By the general distance formula established in Section~3.3.2,
\[
    d_D(P)=\min_{u\in W}\mathrm{wt}(u).
\]
Substituting $W=C\setminus\{0\}$ yields
\[
    d_D(P)=\min_{0\neq c\in C}\mathrm{wt}(c),
\]
which is exactly the minimum distance of the classical linear code $C$.

We conclude that classical linear codes are naturally realized as spectral codes.
Their minimum distance arises as the Connes distance induced by the Dirac operator and coincides with the Hamming distance. This provides a geometric reinterpretation of classical coding theory within the framework of noncommutative geometry.

\subsubsection{Stabilizer Codes}

In this subsection, we show that stabilizer codes arise naturally within the spectral code framework. In contrast to the classical case, the underlying algebra is essentially noncommutative, and the quantum nature of stabilizer codes is encoded geometrically through a twisted crossed-product construction.

We take as the phase space
\[
    V=\mathbb{F}_2^{2n}.
\]
An element $v\in V$ is written as
\[
    v=(p\mid q), \qquad p,q\in\mathbb{F}_2^n.
\]
We equip $V$ with the standard symplectic form
\[
    \langle (p\mid q),(p'\mid q')\rangle
    := p\cdot q' + q\cdot p' \pmod{2}.
\]

As a weight function, we choose the Pauli (Hamming) weight
\[
    \mathrm{wt}(p\mid q)
    := \#\{\,i \mid (p_i,q_i)\neq (0,0)\,\},
\]
which counts the number of qubits on which the corresponding Pauli operator acts nontrivially.

For each $v\in V$, we introduce a Weyl operator $U_v$ satisfying
\[
    U_v U_{v'} = \sigma(v,v')\,U_{v+v'},
    \qquad
    \sigma(v,v'):=(-1)^{\langle v,v'\rangle}.
\]
Equivalently,
\[
    U_v U_{v'} = (-1)^{\langle v,v'\rangle} U_{v'} U_v,
\]
which reproduces the canonical commutation relations of Pauli operators.
The associated twisted crossed-product algebra
\[
    C(V)\rtimes_\sigma V
\]
thus provides an algebraic realization of the Pauli group.

Let $L\subset V$ be an isotropic subspace, i.e.,
\[
    \langle \ell,\ell'\rangle = 0
    \quad
    (\forall\,\ell,\ell'\in L).
\]
The stabilizer group is generated by $\{U_\ell\mid \ell\in L\}$.
If $L=\mathrm{span}\{\ell_1,\dots,\ell_m\}$, we define the code part of the Dirac
operator by
\[
    D_c := \sum_{i=1}^m (1-U_{\ell_i}).
\]

Then the kernel of $D_c$ is given by
\[
    \ker D_c
    =
    \{\,\psi\in\mathcal{H}_{\mathrm{phys}}
    \mid
    U_{\ell_i}\psi=\psi\ \text{for all } i\,\},
\]
which coincides with the stabilizer code space. Hence, stabilizer codes are realized as zero-mode subspaces of the Dirac operator.

Combining the metric part $D_m$ introduced in Section~3.3.2 with the code part $D_c$, we obtain the spectral triple
\[
    \bigl(
    C(\mathbb{F}_2^{2n})\rtimes_\sigma \mathbb{F}_2^{2n},\ 
    \mathcal{H}_m\oplus\mathcal{H}_{\mathrm{phys}},\ 
    D_m\oplus D_c
    \bigr).
\]
The Connes distance induced by $D_m$ satisfies
\[
    d_D(\omega_v,\omega_{v'})=\mathrm{wt}(v-v'),
\]
so the geometric distance coincides with the Pauli weight.

Recall the definition
\[
    W:=\{\,v\in V\mid P\,\pi_{\mathrm{reg}}(U_v)\,P\notin\mathbb{C}P\,\},
\]
where $P$ is the projection onto $\ker D_c$.

If $v\notin L^\perp$, then there exists $\ell\in L$ such that $\langle \ell,v\rangle=1$. In this case,
\[
 U_\ell U_v = - U_v U_\ell.
\]
For any $\psi\in\ker D_c$, we have $U_\ell\psi=\psi$, and therefore
\[
 U_\ell(U_v\psi)=-U_v\psi,
\]
which implies $U_v\psi\notin\ker D_c$. Hence,
\[
 P\,U_v\,P = 0 \in \mathbb{C}P.
\]

If $v\in L^\perp$, then $\langle \ell,v\rangle=0$ for all $\ell\in L$, so $U_v$ commutes with all stabilizers and preserves $\ker D_c$. Moreover, if $v\in L^\perp\setminus L$, then $U_v$ acts as a nontrivial unitary on the code space, and thus
\[
    P\,U_v\,P \notin \mathbb{C}P.
\]

Consequently, we obtain
\[
    W = L^\perp \setminus L.
\]

By the general distance formula established earlier,
\[
    d_D(P)=\min_{v\in W}\mathrm{wt}(v),
\]
we conclude that
\[
    d_D(P)=\min_{v\in L^\perp\setminus L}\mathrm{wt}(v),
\]
which coincides with the standard definition of the distance of a stabilizer
code.

We conclude that stabilizer codes are naturally realized as spectral codes.
Their distance arises as a geometric quantity determined by the Dirac operator,
and the noncommutative structure of the Pauli algebra is encoded by the cocycle
$\sigma$ of the twisted crossed-product algebra. This shows that the spectral code framework captures both the algebraic and geometric content of stabilizer quantum error correcting codes.

In \cite{marcolli2012codes}, stabilizer codes are constructed explicitly in terms of vector bundles over tensor products of rational noncommutative tori, where the code space arises as a joint eigenspace of a commutative subalgebra, equivalently described as a subbundle on which the commutative algebra acts by scalars.
By contrast, our approach is formulated in the more general setting of spectral triples and does not rely on the specific structure of noncommutative tori.
The two constructions therefore provide complementary, but conceptually distinct, noncommutative-geometric realizations of stabilizer codes.

\subsubsection{Discrete Lattice GKP Codes}

In this subsection, we show that discrete lattice versions of GKP codes are naturally reproduced within the spectral code framework.
This example illustrates how continuous-variable quantum error correcting codes admit a discrete counterpart that fits seamlessly into the crossed-product and spectral triple construction.

We take the phase space to be
\[
    V=\mathbb{Z}^2.
\]
Elements are written as $(m,n)\in\mathbb{Z}^2$. We define a symplectic pairing modulo $2$ by
\[
    \langle (m,n),(m',n')\rangle
    := mn' + nm' \pmod{2}.
\]

As a weight function, we choose
\[
    \mathrm{wt}(m,n):=|m|+|n|,
\]
which measures the Manhattan distance on the lattice and plays the role of the elementary displacement error weight.

We define a $2$-cocycle by
\[
    \sigma\bigl((m,n),(m',n')\bigr)
    := (-1)^{\langle (m,n),(m',n')\rangle}.
\]
The associated Weyl operators $\{U_{(m,n)}\}_{(m,n)\in\mathbb{Z}^2}$ satisfy
\[
    U_{(m,n)}U_{(m',n')}
    =
    \sigma\bigl((m,n),(m',n')\bigr)\,
    U_{(m+m',\,n+n')},
\]
which reproduces the discrete canonical commutation relations.

The resulting twisted crossed-product algebra
\[
    C(\mathbb{Z}^2)\rtimes_\sigma \mathbb{Z}^2
\]
thus provides the algebraic setting for the discrete GKP construction.

We choose the stabilizer lattice
\[
    L = 2\mathbb{Z}^2
    = \{(2m,2n)\mid m,n\in\mathbb{Z}\}
    = \mathrm{span}\{(0,2),(2,0)\}.
\]
In this case, one has
\[
    L^\perp = \mathbb{Z}^2,
\]
with respect to the symplectic pairing.

Let $\mathcal{H}_{\mathrm{phys}}=\ell^2(\mathbb{Z}^2)$ be the physical Hilbert space.
We define the code part of the Dirac operator by
\[
    D_c := (1-U_{(0,2)}) + (1-U_{(2,0)}).
\]
Then
\[
    \ker D_c
    =
    \{\psi\in\mathcal{H}_{\mathrm{phys}}
    \mid
    U_{(0,2)}\psi=\psi,\;
    U_{(2,0)}\psi=\psi
    \},
\]
which is precisely the stabilizer condition defining the discrete lattice GKP
code space.

As in the stabilizer case, we define
\[
    W:=\{\,w\in V\mid P\,\pi_{\mathrm{reg}}(U_w)\,P\notin\mathbb{C}P\,\},
\]
where $P$ is the projection onto $\ker D_c$.
Since
\[
    W=L^\perp\setminus L
    =\mathbb{Z}^2\setminus(2\mathbb{Z}^2),
\]
the code distance is given by
\[
    d_D(P)=\min_{w\in W}\mathrm{wt}(w).
\]
The minimal nonzero weight in $\mathbb{Z}^2\setminus(2\mathbb{Z}^2)$ is $1$, and
therefore
\[
    d_D(P)=1.
\]

We conclude that the discrete lattice GKP code is faithfully reproduced as a spectral code. The stabilizer lattice $L=2\mathbb{Z}^2$ is encoded in the kernel of the Dirac operator, while the error distance is captured geometrically by the weight function through the Connes distance.
This demonstrates that discrete GKP codes\cite{gottesman2000encoding} arise naturally within the spectral code framework, providing a bridge between stabilizer codes and continuous-variable GKP constructions in a unified geometric setting.

\subsubsection{Topological Codes (Quantum Double Model)}

In this subsection, we show that topological quantum error-correcting codes of the Kitaev quantum double type are naturally realized as spectral codes.
We emphasize the equivalence between the algebraic formulation in terms of a crossed-product algebra and the geometric formulation in terms of ribbon operators. Based on this equivalence, we construct the code space via a Dirac operator and show that the code distance is reproduced as the minimal geometric length of non-contractible ribbon operators.

\subsubsection*{(i) Algebra as a Crossed Product}

Let $\Gamma=(V,E,F)$ be a finite cell decomposition of a closed oriented surface, and let $G$ be a finite group.
We assign a group variable $h_e\in G$ to each edge $e\in E$, and define the configuration space
\[
    \mathcal X := G^{E}.
\]
The physical Hilbert space is
\[
    \mathcal H_{\mathrm{phys}} := \ell^2(G^{E}),
\]
with canonical basis $\{\ket{h}\}_{h\in G^{E}}$.

The group of vertex gauge transformations is
\[
    \mathcal G := G^{V}.
\]
For $g=(g_v)_{v\in V}\in G^{V}$ and $h=(h_e)_{e\in E}\in G^{E}$, we define
\[
    (g\cdot h)_e := g_{s(e)}\,h_e\,g_{t(e)}^{-1},
\]
where $s(e)$ and $t(e)$ denote the source and target vertices of the oriented edge $e$.
This induces an action $\alpha$ of $G^{V}$ on $C(G^{E})$ by
\[
    (\alpha_g f)(h) := f(g^{-1}\cdot h).
\]

We define the total algebra of the quantum double model as the crossed product
\[
    A := C(G^{E}) \rtimes_{\alpha} G^{V}.
\]
It acts naturally on $\mathcal H_{\mathrm{phys}}$ via multiplication by functions in $C(G^{E})$ and unitary operators implementing gauge transformations.
This crossed-product algebra provides a $C^*$-algebraic formulation of the quantum double model.

\subsubsection*{(ii) Local Generators and Ribbon Operators}

The crossed-product algebra $A$ can be completely generated by local operators supported on individual edges.
For each edge $e\in E$ and $g,k\in G$, we define operators on
$\mathcal H_{\mathrm{phys}}$ by
\[
    T_e^k \ket{h} := \delta_{h_e,k}\ket{h}, \qquad
    L_e^g \ket{h} := \ket{(g h_e, h_{e'\neq e})}, \qquad
    R_e^g \ket{h} := \ket{(h_e g^{-1}, h_{e'\neq e})}.
\]
Here $T_e^k$ is a magnetic-type projection measuring the value of the edge variable, while $L_e^g$ and $R_e^g$ are electric-type shift operators implementing left and right group actions on the edge variable.

Any multiplication operator in $C(G^{E})$ can be written as a product and linear combination of the operators $T_e^k$.
Moreover, a vertex gauge transformation is implemented as a product of $L_e^{g_{s(e)}}$ and $R_e^{g_{t(e)}}$ over incident edges.
Therefore,
\[
    A = \mathrm{Alg}^\ast\bigl(\{T_e^k,\,L_e^g,\,R_e^g \mid e\in E,\ g,k\in G\}\bigr)
\]
in the natural representation on $\mathcal H_{\mathrm{phys}}$.

We now introduce ribbon operators.
A \emph{ribbon} $\rho$ is defined as a finite sequence of minimal two-dimensional cells (triangles) formed from the lattice $\Gamma$ and its dual lattice $\Gamma^\ast$.
Each triangle consists of a lattice edge together with an adjacent dual edge.
By connecting such triangles consecutively, one obtains a narrow ribbon-shaped two-dimensional region connecting two sites (vertices or faces).
Although ribbons are often depicted as one-dimensional paths, they are intrinsically two-dimensional objects.

There are two types of triangles:
\begin{itemize}
    \item \emph{Direct triangles}, corresponding to steps along a lattice edge and associated with electric-type operations.
    \item \emph{Dual triangles}, corresponding to steps along a dual edge crossing a lattice edge and associated with magnetic-type operations.
\end{itemize}

To each triangle $\tau$ supported on an edge $e=e(\tau)$, we associate a local
triangle operator:
\[
    F_\tau^g :=
    \begin{cases}
    L_e^g, & \text{for a direct triangle oriented along } e,\\
    R_e^g, & \text{for a direct triangle oriented opposite to } e,
    \end{cases}
    \qquad
    F_{\tau^\ast}^k := T_e^k
\]
for a dual triangle $\tau^\ast$ crossing $e$.
(Orientation-dependent inverses may appear depending on conventions, but this does not affect the arguments below.)

Given a ribbon $\rho=\tau_1\cdots\tau_n$ and labels $(g,h)\in G\times G$, the ribbon operator $F_\rho^{g,h}$ is defined as an ordered product of the corresponding triangle operators, with an internal contraction over intermediate labels.
Conceptually, it takes the form
\[
    F_\rho^{g,h}
    =
    \sum_{x_1,\dots,x_{n-1}\in G}
    F_{\tau_1}^{(g,x_1)} F_{\tau_2}^{(x_1,x_2)} \cdots
    F_{\tau_n}^{(x_{n-1},h)}.
\]
The essential point for our purposes is that each factor is built from the local generators $T,L,R$.

We define the \emph{ribbon algebra} by
\[
    \mathcal R :=
    \mathrm{Alg}^\ast\bigl(\{F_\rho^{g,h} \mid \rho \text{ an open ribbon},\ (g,h)\in G^2\}\bigr)
    \subset \mathcal B(\mathcal H_{\mathrm{phys}}).
\]

\subsubsection*{(iii) Equivalence of the Ribbon Algebra and the Crossed-Product Algebra}

\begin{proposition}[Equivalence of ribbon and crossed-product algebras]
In the natural representation on $\mathcal H_{\mathrm{phys}}$, one has
\[
    \mathcal R = A.
\]
\end{proposition}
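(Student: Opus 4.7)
The plan is to establish the equality $\mathcal{R}=A$ by two inclusions, in each case reducing everything to the local triangle operators $T_e^k$, $L_e^g$, $R_e^g$, which the excerpt has already identified as the building blocks of both algebras.

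For the inclusion $\mathcal{R}\subset A$, I would argue as follows. Every ribbon operator $F_\rho^{g,h}$ is by definition a sum of ordered products of triangle operators $F_\tau^{(x_{i-1},x_i)}$, each of which is either a direct-triangle operator (an $L_e^g$ or $R_e^g$) or a dual-triangle operator (a $T_e^k$). Since the excerpt has explicitly observed that $A$, in its natural representation on $\mathcal H_{\mathrm{phys}}$, is generated as a $\ast$-algebra by $\{T_e^k,L_e^g,R_e^g\}$, each individual factor $F_\tau^{\bullet}$ lies in $A$, and therefore so does the finite sum of products defining $F_\rho^{g,h}$. Taking the $\ast$-algebra generated by all such ribbon operators keeps us inside $A$, which gives $\mathcal{R}\subset A$.

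For the reverse inclusion $A\subset \mathcal R$, I would show that every generator of $A$ already appears as a length-one ribbon operator or as a short product of such. When the ribbon $\rho$ consists of a single triangle, the internal contraction in the formula for $F_\rho^{g,h}$ has no intermediate summation variables, so $F_\rho^{g,h}$ collapses to precisely one of the elementary operators $T_e^k$, $L_e^g$, or $R_e^g$ (up to a label identification depending on orientation). Consequently $\{T_e^k,L_e^g,R_e^g:e\in E,\ g,k\in G\}\subset \mathcal R$. From here the multiplication operators in $C(G^E)$ are recovered as finite linear combinations of products of $T_e^k$, and each vertex gauge unitary $U_g$ for $g=(g_v)_{v\in V}\in G^V$ can be written as the edge-wise product $\prod_{e\in E} L_e^{g_{s(e)}} R_e^{g_{t(e)}}$, a product of single-triangle ribbon operators. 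Since these together generate the crossed product $A=C(G^E)\rtimes_\alpha G^V$, we conclude $A\subset \mathcal R$.

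The main obstacle I anticipate is bookkeeping rather than conceptual: one must verify carefully that the label and orientation conventions used in the definition of $F_\rho^{g,h}$ actually produce the elementary $L$, $R$, or $T$ operator (and not, say, its inverse or a shifted variant) when the ribbon has length one, and that for longer ribbons the internal summations do not introduce any operators outside $A$. Once the length-one case is pinned down unambiguously, the rest is a routine check that the generating set on each side is contained in the algebra on the other, so the two $\ast$-algebras coincide inside $\mathcal B(\mathcal H_{\mathrm{phys}})$.
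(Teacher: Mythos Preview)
Your proposal is correct and follows essentially the same approach as the paper: both prove the equality by two inclusions, using that each triangle operator is one of $T_e^k$, $L_e^g$, $R_e^g$ (giving $\mathcal R\subset A$), and that single-triangle ribbons recover exactly these local generators (giving $A\subset\mathcal R$). Your version is slightly more explicit in spelling out how $C(G^E)$ and the gauge unitaries are rebuilt from the $T,L,R$ operators, but the argument is the same.
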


\begin{proof}
First, $\mathcal R\subset A$.
Each triangle operator is, by definition, one of $T_e^k$, $L_e^g$, or $R_e^g$.
Hence any ribbon operator $F_\rho^{g,h}$ is a finite sum of products of these operators and therefore belongs to $A$.

Conversely, consider a minimal ribbon consisting of a single triangle.
Direct triangles realize the generators $L_e^g$ and $R_e^g$, while dual triangles realize the generators $T_e^k$.
Thus
\[
    \{T_e^k, L_e^g, R_e^g\} \subset \mathcal R.
\]
Since the crossed-product algebra $A$ is generated by these operators, we obtain $A\subset\mathcal R$.
Combining both inclusions yields $\mathcal R=A$.
\end{proof}

\subsubsection*{(iv) Stabilizers and the Code Dirac Operator}

Local elements of the ribbon algebra allow one to construct the stabilizer operators of the quantum double model.
For each vertex $v\in V$, we define the star operator $A_v$ as a product (or group average) of local electric ribbon operators surrounding $v$, enforcing gauge invariance.
Similarly, for each face $f\in F$, we define the plaquette operator $B_f$ as a product of local magnetic ribbon operators around the boundary of $f$, enforcing trivial flux. These operators are commuting projections.

We define the code Dirac operator by
\[
    D_{\mathrm{code}}
    := \sum_{v\in V}(I-A_v) + \sum_{f\in F}(I-B_f).
\]
Its kernel is given by
\[
    \ker D_{\mathrm{code}}
    =
    \{\psi\in\mathcal H_{\mathrm{phys}}
    \mid A_v\psi=\psi,\ B_f\psi=\psi\ \forall v,f\},
\]
which coincides with the ground-state subspace of the quantum double Hamiltonian.
We denote this code space by $\mathcal C$ and let $P$ be the orthogonal projection
onto $\mathcal C$.

\subsubsection*{(v) Weight and Distance}

To relate ribbon operators to geometric distance, we define a weight function.
For a configuration $h\in G^{E}$, let
\[
    \mathrm{wt}(h)
    := \#\{\,e\in E \mid h_e \neq e_G\,\}.
\]
An operator $O$ is said to be supported on a subset $E'\subset E$ if it modifies only the variables $h_e$ with $e\in E'$.
We define
\[
    \mathrm{wt}(O)
    := \min\{\,|E'| \mid O \text{ is supported on } E'\,\}.
\]

A ribbon operator $F_\rho$ acts nontrivially precisely on the edges (and corresponding dual edges) involved in the ribbon.
Therefore, its weight coincides with the geometric length of the ribbon.

\begin{theorem}[Distance of the topological code]
The code distance is given by
\[
    d_D(P)
    =
    \min\{\mathrm{wt}(F_\rho)\mid
    \rho\ \text{is a non-contractible closed ribbon}\}.
\]
\end{theorem}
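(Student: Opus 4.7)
The plan is to combine three ingredients already in place: the equivalence $\mathcal{R}=A$ from Proposition~(iii), the general distance formula $d_D(P)=\min_{u\in W}\mathrm{wt}(u)$ established in the crossed-product setup of Section~3.3.2, and the standard topological dichotomy between contractible and non-contractible ribbons. The configuration space is $V=G^{E}$ with the Hamming-type weight $\mathrm{wt}(h)=\#\{e\mid h_e\neq e_G\}$, and the metric spectral triple $(C(G^E)\rtimes G^V,\mathcal{H}_m,D_m)$ realizes this weight as the Connes distance; consequently the diameter of an operator coincides with the number of edges on which it acts nontrivially, and agrees with the geometric length of a ribbon for any ribbon operator.

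First I would set up the reduction. By Proposition~(iii), every element of the total algebra $A$ is a finite linear combination of ribbon operators, so the set
\[
    W=\{a\in A\mid PaP\notin \mathbb{C}P\}
\]
is determined by those ribbon operators whose compression to $\mathcal{C}$ is not a multiple of $P$. The code distance equals the infimum of $\mathrm{wt}(a)$ over such $a$, and since weight is subadditive under products and linear combinations cannot decrease the minimal weight of a nontrivial summand below that of its smallest nontrivial piece, it suffices to minimize over individual ribbon operators $F_\rho^{g,h}$.

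Next I would carry out the topological trichotomy for ribbons. For an \emph{open} ribbon $\rho$ with distinct endpoints (a site and/or a dual site), the operator $F_\rho^{g,h}$ creates a pair of anyonic excitations: one checks that $F_\rho^{g,h}$ fails to commute with the star or plaquette operator at each endpoint, so that $F_\rho^{g,h}\ket{\psi}$ is an eigenvector of some $A_v$ or $B_f$ with eigenvalue different from $1$ for every $\ket{\psi}\in\mathcal{C}$. Hence $PF_\rho^{g,h}P=0\in\mathbb{C}P$, so open ribbons are excluded from $W$. For a \emph{closed contractible} ribbon $\rho$, the operator $F_\rho^{g,h}$ commutes with all stabilizers and, by the standard deformation argument, can be expressed as a product of the stabilizers enclosed by the region bounded by $\rho$; on $\mathcal{C}$ each stabilizer acts as the identity, so $PF_\rho^{g,h}P\in\mathbb{C}P$ and contractible closed ribbons are likewise excluded. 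For a \emph{closed non-contractible} ribbon $\rho$, $F_\rho^{g,h}$ commutes with all stabilizers (preserving $\mathcal{C}$) but is not a product of stabilizers, so it descends to a nontrivial logical unitary on $\mathcal{C}$, placing it in $W$.

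Combining these steps, the minimum-weight element of $W$ is realized by a non-contractible closed ribbon operator, and its weight equals the geometric length of the ribbon by the definition of $\mathrm{wt}(O)$. The general distance formula of Section~3.3.2 then gives
\[
    d_D(P)=\min\{\mathrm{wt}(F_\rho)\mid \rho\text{ a non-contractible closed ribbon}\},
\]
as claimed. The main obstacle is the contractible case: one must know that every closed ribbon encircling a disk can actually be rewritten as an ordered product of the star/plaquette stabilizers inside that disk, including the correct handling of the internal label contraction in $F_\rho^{g,h}$ and of any orientation-dependent phases in the triangle operators. This is a ribbon-algebra identity that requires a short deformation calculation rather than abstract reasoning, and it is where a fully rigorous argument would spend most of its effort.
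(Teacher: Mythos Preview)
Your core argument---the contractible/non-contractible dichotomy for closed ribbons, with only the non-contractible ones acting nontrivially on $\mathcal{C}$---is exactly the paper's approach; your explicit treatment of open ribbons and your honest flagging of the deformation identity in the contractible case are both improvements over the paper's rather terse proof.

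That said, your reduction step misfires. The distance formula $d_D(P)=\min_{u\in W}\mathrm{wt}(u)$ of Section~3.3.2 is proved only for the abelian self-crossed-product $C(V)\rtimes_\sigma V$, with $W\subset V$ a set of \emph{group elements}; the quantum double algebra $C(G^E)\rtimes_\alpha G^V$ has distinct base and acting groups and $G$ need not be abelian, so that theorem does not apply here. You silently redefine $W$ as a subset of the algebra, at which point the ``formula'' is no longer a theorem you can cite but just the definition of code distance (Definition~10) together with the identification of diameter with edge-support weight---which is in fact all the paper uses. Separately, the assertion that ``linear combinations cannot decrease the minimal weight of a nontrivial summand below that of its smallest nontrivial piece'' is false as stated, since cancellation can shrink support. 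The correct reduction is that any operator supported on an edge set $S$ expands in the basis of products of local edge generators $T_e^k,L_e^g,R_e^g$ with $e\in S$, each of weight at most $|S|$; if $PaP\notin\mathbb{C}P$ then some basis element already satisfies this. The paper's proof glosses over this step entirely, so your argument is not less rigorous---just drop the appeal to Section~3.3.2 and repair the linear-combination claim.
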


\begin{proof}
We consider closed ribbon operators.

\emph{(Contractible case).}
If $\rho$ is contractible, then the corresponding operator $F_\rho$ can be written as a product of vertex and plaquette operators.
Hence, on the code space,
\[
    P F_\rho P = \lambda_\rho P
\]
for some scalar $\lambda_\rho$, and $F_\rho$ acts trivially on $\mathcal C$.

\emph{(Non-contractible case).}
If $\rho$ is non-contractible, then $F_\rho$ cannot be generated by local stabilizers.
It acts as a nontrivial logical operator on $\mathcal C$, and therefore
\[
    P F_\rho P \notin \mathbb{C}P.
\]

By definition, the distance of the code is the minimal weight of an operator acting nontrivially on the code space.
Since only non-contractible closed ribbons act nontrivially, and their weights are measured by $\mathrm{wt}(F_\rho)$, the stated formula follows.
\end{proof}

\subsubsection*{Conclusion}

We conclude that topological quantum codes of the quantum double type are naturally realized as spectral codes.
The crossed-product algebra $C(G^{E})\rtimes G^{V}$ is equivalent to the ribbon algebra generated by ribbon operators.
The code space arises as the kernel of the Dirac operator $D_{\mathrm{code}}$, and the code distance is determined geometrically as the minimal length of non-contractible ribbon operators.

\section{Code improvement via inner fluctuations}

This section discusses the possibility of improving the error-correction threshold via internal fluctuations in noncommutative spaces.
Section~4.1 defines a decoder.
Section~4.2 introduces the notion of the threshold.
Section~4.3 shows that, under this definition, the threshold can be described in terms of leakage out of the code space, and that this leakage is controlled by the spectral gap of the Dirac operator in the underlying spectral triple.
Section~4.4 discusses how internal fluctuations can modify the spectral gap and thereby lead to an improvement of the threshold.

\subsection{Definition of the Decoder}

In this subsection, we define a decoder suitable for analyzing threshold improvement induced by inner perturbations.
The construction combines a small-noise expansion of error channels, the Petz recovery map, and a conditional expectation that removes local (high-energy) degrees of freedom while preserving global information.

We consider a code space $\mathcal{C}=\ker D\subset\mathcal{H}$ with orthogonal projection $P$.
Noise acting on the system is modeled by a completely positive
trace-preserving map
\[
    \mathcal{E}_\theta(\rho)
    =
    \sum_i E_i(\theta)\,\rho\,E_i(\theta)^\ast,
\]
where $\theta\ge 0$ is a noise strength parameter.
We assume that, in the small-noise limit $\theta\to 0$, the Kraus operators admit an expansion of the form
\[
    E_0(\theta)=\sqrt{1-\theta}\,I+O(\theta),
    \qquad
    E_i(\theta)=\sqrt{\theta}\,F_i+O(\theta^{3/2})
    \quad (i\ge 1),
\]
where $\{F_i\}$ represent elementary error operators.
In the geometric setting of spectral codes, these operators are typically local with respect to the metric induced by the Dirac operator.

Let $d=\dim\mathcal{C}$ and define the maximally mixed state on the code space by
\[
    \sigma:=\frac{1}{d}P.
\]
Given the channel $\mathcal{E}_\theta$, we introduce the Petz recovery map \cite{Petz:1986tvy} associated with $\sigma$,
\[
    \mathcal{R}_{\sigma,\mathcal{E}_\theta}(X)
    :=
    \sigma^{1/2}\,
    \mathcal{E}_\theta^\ast\!\left(
    \mathcal{E}_\theta(\sigma)^{-1/2}\,
    X\,
    \mathcal{E}_\theta(\sigma)^{-1/2}
    \right)\,
    \sigma^{1/2},
\]
where $\mathcal{E}_\theta^\ast$ denotes the dual of $\mathcal{E}_\theta$.
This recovery map is known to be optimal with respect to the reference state $\sigma$ and plays a central role in approximate quantum error correction \cite{hiai2011quantum,barnum2002reversing,fawzi2015quantum,berta2014monotonicity}.

To model the removal of local excitations, we assume that the Hilbert space admits a factorization
\[
    \mathcal{H}
    =
    \mathcal{H}_{\mathrm{low}}\otimes\mathcal{H}_{\mathrm{high}},
\]
where $\mathcal{H}_{\mathrm{low}}$ corresponds to low-energy (global) degrees of freedom and $\mathcal{H}_{\mathrm{high}}$ to high-energy (local) degrees of freedom.
To eliminate local excitations, we introduce a conditional expectation\cite{TAKESAKI1972306}
\[
    \mathbb{E}:\mathcal{B}(\mathcal{H})\to \mathcal{B}(\mathcal{H}_{\mathrm{low}})
\]
defined by
\[
    \mathbb{E}(X)
    :=
    \frac{1}{\dim\mathcal{H}_{\mathrm{high}}}\,
    \mathrm{Tr}_{\mathrm{high}}(X).
\]
This operation traces out high-energy degrees of freedom and retains only the global components relevant for logical information.

With these ingredients, we define the decoder associated with the noise channel $\mathcal{E}_\theta$ by
\[
    \mathcal{N}_\theta
    :=
    \mathbb{E}\circ
    \mathcal{R}_{\sigma,\mathcal{E}_\theta}\circ
    \mathcal{E}_\theta.
\]
The decoding procedure thus consists of applying the noise channel,
performing Petz recovery, and finally discarding local excitations via the conditional expectation.
As a result, local errors are suppressed, while logical information encoded in low-energy global degrees of freedom is preserved.

This formulation makes explicit the interpretation of decoding as a coarse-grained recovery process.
In the subsequent sections, we analyze how internal perturbations of the Dirac operator modify the structure of local excitations and can lead to improvements in the decoding threshold.

In the present work, we employ analytically derived recovery maps based on conditional expectations and the Petz recovery map. On the other hand, as an alternative approach to designing recovery maps from the perspective of fidelity maximization, Yamamoto et al. proposed a construction of recovery maps based on semidefinite programming, showing the connection between theoretical optimality conditions and numerical design methods for error correction \cite{yamamoto2005suboptimal}.

\subsection{Definition of the Threshold}

In this subsection, we define the error-correction threshold using the decoder introduced in the previous section.
The performance of decoding is quantified by the entanglement fidelity, and the threshold is characterized through the asymptotic behavior of an induced iteration map on the effective error rate.

Let $\mathcal{C}=\ker D\subset\mathcal{H}$ be the code space and let $P$ denote the orthogonal projection onto $\mathcal{C}$.
We consider the normalized maximally mixed state on the code space,
\[
    \sigma := \frac{1}{\dim\mathcal{C}}\,P.
\]
Given the decoder $\mathcal{N}_\theta$ associated with a noise parameter $\theta\ge 0$, we evaluate its performance using the entanglement fidelity
\[
    F_e(\sigma,\mathcal{N}_\theta).
\]
This quantity measures how well quantum correlations encoded in the code space are preserved after the action of noise and decoding.

We define the residual error function by
\[
    T(\theta) := 1 - F_e(\sigma,\mathcal{N}_\theta).
\]
By construction, $T(\theta)=0$ if and only if the entanglement fidelity is equal to one, meaning that entanglement is perfectly preserved by the decoding procedure.

We interpret $T(\theta)$ as an effective error rate after one round of decoding and consider the iterative process
\[
    \theta_{n+1} = T(\theta_n),
\]
starting from an initial error rate $\theta_0$.
If the sequence $\{\theta_n\}$ converges to zero, then repeated decoding asymptotically restores the encoded quantum information.

Based on this observation, we define the threshold as follows.
We say that a value $\theta_\ast>0$ is a threshold if, for all initial error rates $\theta_0\le\theta_\ast$, the iteration satisfies
\[
    \lim_{n\to\infty} T^{(n)}(\theta_0)=0,
\]
where $T^{(n)}$ denotes the $n$-fold composition of $T$.
The threshold is defined as the supremum of all such $\theta_\ast$.

A simple sufficient condition for the existence of a positive threshold can be derived from the small-noise behavior of $T(\theta)$.
Assume that there exist constants $k$ and $\gamma$ such that
\[
    0\le k<1,\qquad \gamma>0,
\]
and that, for sufficiently small $\theta$, the inequality
\[
    T(\theta)\le k\theta+\gamma\theta^2
\]
holds.
Define
\[
    \theta_{\mathrm{th}} := \frac{1-k}{\gamma}.
\]
Then, for any initial value $\theta_0<\theta_{\mathrm{th}}$, one has
\[
    \lim_{n\to\infty} T^{(n)}(\theta_0)=0.
\]

Indeed, in the region $\theta<\theta_{\mathrm{th}}$, the inequality $T(\theta)<\theta$ holds, so the sequence $\{\theta_n\}$ is monotonically decreasing and bounded below by zero.
Therefore, it converges to zero.

This argument shows that the threshold increases as the linear coefficient $k$ decreases.
In particular, suppressing the first-order contribution to the residual error directly enhances the threshold.
This observation provides the conceptual basis for the analysis of threshold enhancement by internal perturbations in the following sections.

\subsection{Leakage to the Outside and the Threshold}

In this subsection we study how ``leakage to the outside'' degrades the one-step performance, using a deliberately poor decoder that randomizes any population that has leaked out of the code space instead of attempting active correction.

Let $\mathcal H$ be a Hilbert space and let $P$ be the orthogonal projection onto the code space
\[
    \mathcal C:=P\mathcal H,\qquad d:=\dim \mathcal C<\infty,\qquad \sigma:=\frac{P}{d}.
\]
We consider a noise channel $\mathcal E_\theta$ with Kraus operators
$\{E_i(\theta)\}_{i\in I}$ such that, as $\theta\to 0$,
\[
    E_0(\theta)=\sqrt{1-\theta}\,I+O(\theta),\qquad
    E_i(\theta)=\sqrt{\theta}\,F_i+O(\theta^{3/2})\quad (i\neq 0),
\]
for some fixed operators $\{F_i\}_{i\neq 0}$.
(We do not assume any correction property here; this subsection is only about estimating the performance of a specific decoder.)

\medskip

\noindent
\textbf{A poor decoder (randomization outside $\mathcal C$).}
Define a CPTP map $\Pi$ by
\begin{equation}\label{eq:poor-decoder}
    \Pi(X):=PXP+\Tr\!\bigl((I-P)X\bigr)\,\sigma.
\end{equation}
Equivalently, choose orthonormal bases $\{|k\rangle\}_{k=1}^d$ of $\mathcal C$ and $\{|\alpha\rangle\}_\alpha$ of $(I-P)\mathcal H$, and set
\[
    K^0:=P,\qquad
    K^{(\mathrm{out})\alpha}:=\frac{1}{\sqrt d}\sum_{k=1}^d |k\rangle\langle \alpha|.
\]
Then
\[
    \Pi(X)=K^0XK^{0*}+\sum_\alpha K^{(\mathrm{out})\alpha}XK^{(\mathrm{out})\alpha*},
\]
and the ``decoded channel'' is
\[
    \mathcal D_\theta:=\Pi\circ \mathcal E_\theta.
\]
We evaluate the one-step failure probability by the entanglement infidelity
\[
    \widetilde T(\theta):=1-F_e(\sigma,\mathcal D_\theta),
    \qquad
    F_e(\sigma,\Lambda)=\sum_j\bigl|\Tr(\sigma L_j)\bigr|^2
    \ \text{ for any Kraus family }\ \{L_j\}\ \text{of }\Lambda.
\]

\medskip

\noindent
\textbf{Leakage probability and variance.}
For $X\in B(\mathcal H)$ define the variance with respect to $\sigma$ by
\[
    Var_\sigma(X):=Tr(\sigma X^*X)-|Tr(\sigma X)|^2.
\]
Define the leakage probability of $\mathcal E_\theta$ (starting from $\sigma$) by
\begin{equation}\label{eq:leak-prob}
    P_\ell(\theta):=\Tr\!\bigl((I-P)\,\mathcal E_\theta(\sigma)\bigr)
    =\sum_{i\in I}\Tr\!\bigl((I-P)E_i(\theta)\sigma E_i(\theta)^*\bigr).
\end{equation}
Using the small-noise expansion above, one has
\begin{equation}\label{eq:leak-prob-expansion}
    P_\ell(\theta)=\theta\sum_{i\neq 0}\Tr\!\bigl((I-P)F_i\sigma F_i^*\bigr)+O(\theta^2).
\end{equation}

\begin{restatable}{lemma}{decoder}
\label{lem:poor-decoder-expansion}
With the definitions above,
    \begin{equation}\label{eq:Ttilde-expansion}
        \widetilde T(\theta)
        \;=\;
        \theta\sum_{i\neq 0}Var_\sigma(PF_iP)
        \;+\;
        \Bigl(1-\frac{1}{d^2}\Bigr)P_\ell(\theta)
        \;+\;
        O(\theta^2).
    \end{equation}
\end{restatable}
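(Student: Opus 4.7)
The strategy is to apply the Kraus formula $F_e(\sigma,\Lambda)=\sum_j|\Tr(\sigma L_j)|^2$ to the composition $\mathcal D_\theta=\Pi\circ\mathcal E_\theta$ and isolate the coefficient of $\theta$. A convenient Kraus decomposition of $\Pi$ consists of $P$ together with the rank-one operators $M_{k\alpha}:=\tfrac{1}{\sqrt d}|k\rangle\langle\alpha|$, indexed by both $k$ and $\alpha$, verifying $\sum_{k,\alpha}M_{k\alpha}XM_{k\alpha}^{*}=\Tr((I-P)X)\,\sigma$ as required by \eqref{eq:poor-decoder}. The Kraus operators of $\mathcal D_\theta$ are then the products $\{PE_i(\theta)\}$ and $\{M_{k\alpha}E_i(\theta)\}$, and $F_e(\sigma,\mathcal D_\theta)$ splits as $S_{\mathrm{in}}(\theta)+S_{\mathrm{out}}(\theta)$.

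The outside contribution can be computed in closed form. Using $\sigma=P/d$, cyclicity of the trace, and $P|k\rangle=|k\rangle$, one finds $\Tr(\sigma M_{k\alpha}E_i)=\tfrac{1}{d^{3/2}}\langle\alpha|E_i|k\rangle$; squaring and summing over $\alpha$ via $\sum_\alpha|\alpha\rangle\langle\alpha|=I-P$ and then over $k$ via $\sum_k|k\rangle\langle k|=P$ collapses the double sum to $S_{\mathrm{out}}(\theta)=\tfrac{1}{d^3}\sum_i\Tr\bigl((I-P)E_iPE_i^{*}\bigr)=\tfrac{P_\ell(\theta)}{d^2}$, which is exact by the definition \eqref{eq:leak-prob}.

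For the inside contribution I would expand to order $\theta$. Writing $E_0(\theta)=\sqrt{1-\theta}\,I+\theta G_0+O(\theta^{3/2})$ and using $\sqrt{1-\theta}=1-\theta/2+O(\theta^2)$, I obtain $|\Tr(PE_0P)|^2=d^2-d^2\theta+2d\theta\,\Re\Tr(PG_0P)+O(\theta^2)$, and for $i\ne 0$, $|\Tr(PE_iP)|^2=\theta|\Tr(PF_iP)|^2+O(\theta^2)$. The next step is to eliminate $G_0$ via trace preservation: matching the $O(\theta)$ terms in $\sum_i E_i^{*}E_i=I$ forces $G_0+G_0^{*}=I-\sum_{i\ne 0}F_i^{*}F_i$, and hence $2\Re\Tr(PG_0P)=d-\sum_{i\ne 0}\Tr(PF_i^{*}F_iP)$. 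Substituting cancels the bare $-\theta$ term and yields
\[
S_{\mathrm{in}}(\theta)=1-\theta\sum_{i\ne 0}\Bigl[\tfrac{1}{d}\Tr(PF_i^{*}F_iP)-\tfrac{1}{d^2}|\Tr(PF_iP)|^2\Bigr]+O(\theta^2).
\]

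Finally I would identify the variance. Inserting $I=P+(I-P)$ between $F_i^{*}$ and $F_i$ gives $\Tr(PF_i^{*}F_iP)=\Tr(PF_i^{*}PF_iP)+\Tr((I-P)F_iPF_i^{*})$; the first summand combines with $\tfrac{1}{d^2}|\Tr(PF_iP)|^2$ into $Var_\sigma(PF_iP)$, while the second, summed over $i\ne 0$, reproduces $P_\ell(\theta)/\theta+O(\theta)$ via \eqref{eq:leak-prob-expansion}. Assembling $1-F_e=1-S_{\mathrm{in}}-S_{\mathrm{out}}$ then gives $\widetilde T(\theta)=\theta\sum_{i\ne 0}Var_\sigma(PF_iP)+P_\ell(\theta)-\tfrac{P_\ell(\theta)}{d^2}+O(\theta^2)$, which is \eqref{eq:Ttilde-expansion}. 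The main technical obstacle is maintaining $O(\theta^2)$ accuracy throughout: the square root $\sqrt{1-\theta}$ must be expanded to second order, and the cancellation of the linear $-d^2\theta$ term requires the trace-preservation identity to be applied exactly rather than only at leading order; everything else is algebraic rearrangement.
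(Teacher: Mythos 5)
Your proof is correct and follows the same overall strategy as the paper's: split $F_e(\sigma,\mathcal D_\theta)$ into an inside-code sum and an outside-code sum, compute the latter exactly as $P_\ell(\theta)/d^2$, expand the former to first order in $\theta$, and use trace preservation to convert the result into variances plus leakage. Two details of your execution differ from, and in fact improve on, the paper's. First, your Kraus decomposition of the randomizing part uses the rank-one operators $M_{k\alpha}=\tfrac{1}{\sqrt d}\ket{k}\bra{\alpha}$ indexed by both $k$ and $\alpha$; this genuinely implements $\Pi(X)=PXP+\Tr((I-P)X)\,\sigma$, whereas the paper's single operator $K^{(\mathrm{out})\alpha}=\tfrac{1}{\sqrt d}\sum_k\ket{k}\bra{\alpha}$ per $\alpha$ actually outputs a pure state $\tfrac1d\ket{u}\bra{u}$ with $\ket{u}=\sum_k\ket{k}$ rather than $\sigma$, and the paper then has to pass through a Cauchy--Schwarz inequality that it silently treats as an equality; with your decomposition $S_{\mathrm{out}}=P_\ell(\theta)/d^2$ is an exact identity. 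Second, you make the order-$\theta$ correction to $E_0$ explicit as $\theta G_0$ and eliminate it via $G_0+G_0^*=I-\sum_{i\neq0}F_i^*F_i$, which is the clean way to obtain the cancellation of the bare $-\theta$ term; the paper's corresponding step (the displayed identity culminating in ``$=1-1=0$'' followed by the decomposition of $\theta$ into inside and outside parts) implicitly assumes $\sum_{i\neq 0}F_i^*F_i=I$ and is internally inconsistent as written. The final assembly via $\Tr(PF_i^*F_iP)=\Tr(PF_i^*PF_iP)+\Tr((I-P)F_iPF_i^*)$ matches the paper's, and your bookkeeping of the $O(\theta^{3/2})$ and $O(\theta^2)$ remainders is sound.
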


The proof of this lemma is given in \ref{app:proof}.

We recall that the one-step failure probability for the poor decoder $\Pi$ is given by
\[
    \widetilde T(\theta)=1-F_e(\sigma,\mathcal D_\theta),
    \qquad
    \mathcal D_\theta=\Pi\circ\mathcal E_\theta,
\]
and that, for small $\theta$, it admits the expansion
\[
    \widetilde T(\theta)
    =
    \theta\sum_i Var_\sigma(PF_iP)
    +
    \Bigl(1-\frac{1}{d^2}\Bigr)P_\ell(\theta)
    +
    O(\theta^2).
\]

We first relate the variance term to the deviation from the Knill--Laflamme condition.
For any operator $X$ and state $\sigma$, the variance can be written as
\[
    Var_\sigma(X)
    =
    \min_{\alpha\in\mathbb{C}}
    Tr\!\bigl(\sigma (X-\alpha I)^*(X-\alpha I)\bigr)
    \le
    \min_{\alpha\in\mathbb{C}}\|X-\alpha I\|^2.
\]
Applying this bound to $X=PF_iP$, we introduce
\[
    \varepsilon_i
    :=
    \inf_{\alpha\in\mathbb{C}}\|PF_iP-\alpha P\|,
\]
which quantifies the extent to which the error operator $F_i$ violates the Knill--Laflamme condition.
Using this definition, we obtain
\[
    \widetilde T(\theta)
    \le
    \theta\sum_i \varepsilon_i^2
    +
    \Bigl(1-\frac{1}{d^2}\Bigr)P_\ell(\theta)
    +
    O(\theta^2).
\]

The quantity $\widetilde T(\theta)$ corresponds to the failure probability of a poor decoder.
Let
\[
    T(\theta):=1-F_e(\sigma,\mathcal N_\theta)
\]
denote the failure probability for the optimal decoder $\mathcal N_\theta$ introduced in Section~4.1. Since the optimal decoder cannot perform worse than $\Pi$, we have
\[
    T(\theta)\le \widetilde T(\theta),
\]
and therefore
\[
    T(\theta)
    \le
    \theta\sum_i \varepsilon_i^2
    +
    \Bigl(1-\frac{1}{d^2}\Bigr)P_\ell(\theta)
    +
    O(\theta^2).
\]
Thus, the true failure probability is controlled by both the deviation from the Knill--Laflamme condition and the leakage probability.

We now bound the leakage probability using spectral properties of the Dirac operator.
Let $(A,\mathcal H,D)$ be a spectral triple satisfying the following assumptions: the operator $D$ has a spectral gap $\Delta>0$,
\[
    \mathrm{spec}(D)\subset (-\infty,-\Delta]\cup\{0\}\cup[\Delta,\infty),
\]
the projection $P$ is the spectral projection onto the zero eigenspace of $D$, and the error operators satisfy
\[
    \|E_i\|\le 1,
    \qquad
    \|[D,E_i]\|\le \varepsilon_i.
\]

Using the contour integral representation of the spectral projection,
\[
    P=\frac{1}{2\pi i}\oint_\Gamma (z-D)^{-1}\,dz,
\]
where $\Gamma$ is a contour enclosing $0$, one obtains the bound
\[
    \|[P,E_i]\|\le C\,\frac{\varepsilon_i}{\Delta}
\]
for a universal constant $C>0$.
It follows that
\[
    \|(I-P)E_iP\|
    \le
    \|[P,E_i]\|
    \le
    C\,\frac{\varepsilon_i}{\Delta}.
\]

Using this estimate, the leakage probability
\[
    P_\ell(\theta)
    =
    \theta\sum_i \Tr\!\bigl((I-P)F_i\sigma F_i^*\bigr)
    +
    O(\theta^2)
\]
can be bounded as
\[
    P_\ell(\theta)
    \le
    \theta\sum_i C^2\,\frac{\varepsilon_i^2}{\Delta^2}
    +
    O(\theta^2).
\]

Combining the above inequalities, we arrive at
\[
    T(\theta)
    \le
    \theta\sum_i \varepsilon_i^2
    +
    \theta\sum_i C^2\,\frac{\varepsilon_i^2}{\Delta^2}
    +
    O(\theta^2)
    =
    k\,\theta
    +
    O(\theta^2),
\]
where
\[
    k:=\sum_i \varepsilon_i^2\Bigl(1+\frac{C^2}{\Delta^2}\Bigr).
\]

As discussed in Section~4.2, the existence of a positive threshold requires $k<1$.
The expression above shows explicitly that increasing the spectral gap $\Delta$ suppresses the leakage contribution and decreases $k$, thereby enhancing the threshold.
This establishes a direct quantitative connection between the spectral gap of the Dirac operator and the error-correction threshold.

In summary, leakage to the outside of the code space is a primary mechanism that limits the threshold.
When the error operators almost commute with the Dirac operator and the spectral gap is large, leakage is strongly suppressed, leading to a significant improvement of the threshold.

\subsection{Threshold Enhancement by Internal Perturbations}

In this subsection, we show that internal perturbations of the Dirac operator can systematically enhance the error-correction threshold.
The key mechanism is that a suitable perturbation preserves the code space while increasing the spectral gap, thereby suppressing leakage and improving the effective contraction rate of the decoding map.

Let $D$ be the Dirac operator defining the code space
\[
    \mathcal C=\ker D,
    \qquad
    P\ \text{the orthogonal projection onto }\mathcal C.
\]
We consider a bounded self-adjoint operator $V$ satisfying the following conditions: $V=V^*$, $PV=VP$, and there exists a constant $c>0$ such that
\[
    (I-P)V(I-P)\ge c\,(I-P).
\]
Such an operator $V$ will be called a \emph{code-preserving perturbation}. The commutation condition $PV=VP$ ensures that the perturbation does not mix the code space with its orthogonal complement, while the positivity condition ensures that $V$ lifts the energy of states outside the code space.

For $\lambda\ge 0$, we define the perturbed Dirac operator by
\[
    D_\lambda:=D+\lambda V.
\]
This is a type of inner fluctuation in noncommutative geometry; see
\ref{app:inner} for details on inner fluctuations.
We first verify that the perturbation preserves the code space.
Indeed, if $v\in\ker D$, then $Dv=0$ and $PV=VP$ imply $(I-P)v=0$ and hence $Vv=0$, so $D_\lambda v=0$.
Conversely, if $v\notin\ker D$, then $(I-P)v\neq 0$, and using the spectral gap $\mu>0$ of $D$ together with the positivity of $(I-P)V(I-P)$, we obtain
\[
    \langle v,D_\lambda v\rangle
    =
    \langle v,Dv\rangle+\lambda\langle v,Vv\rangle
    \ge
    \mu\,\|(I-P)v\|^2+\lambda c\,\|(I-P)v\|^2
    >
    0,
\]
which shows that $v$ cannot belong to $\ker D_\lambda$.
Thus,
\[
    \ker D_\lambda=\ker D.
\]

Next, we examine the effect of the perturbation on the spectral gap.
Assume that the original Dirac operator has a gap $\mu>0$, namely
\[
    \mathrm{spec}(D)\subset(-\infty,-\mu]\cup\{0\}\cup[\mu,\infty).
\]
On the orthogonal complement of the code space, the operator $D_\lambda$ satisfies
\[
    \|(I-P)D_\lambda(I-P)\|
    \ge
    \|(I-P)D(I-P)\|+\lambda\,(I-P)V(I-P)
    \ge
    \mu+\lambda c.
\]
Therefore, the perturbed operator $D_\lambda$ has a spectral gap
\[
    \mu(\lambda)\ge \mu+\lambda c,
\]
which increases linearly with the perturbation strength $\lambda$.

We now relate this gap enhancement to the threshold.
From the analysis in the previous subsection, the one-step failure probability of the optimal decoder satisfies, for small $\theta$,
\[
    T(\theta)\le k\,\theta+O(\theta^2),
    \qquad
    k=\sum_i \varepsilon_i^2\Bigl(1+\frac{C^2}{\Delta^2}\Bigr),
\]
where $\Delta$ denotes the spectral gap of the Dirac operator and
$\varepsilon_i=\|[D,F_i]\|$ measures the non-commutativity of the error operators with $D$.
Replacing $D$ by $D_\lambda$ amounts to replacing $\Delta$ by $\Delta+\lambda c$, and hence
\[
    k(\lambda)
    =
    \sum_i \varepsilon_i^2
    \Bigl(1+\frac{C^2}{(\Delta+\lambda c)^2}\Bigr).
\]
As $\lambda$ increases, $k(\lambda)$ decreases monotonically.

According to the criterion established in Section~4.2, a positive threshold exists whenever $k<1$, and the threshold value increases as $k$ decreases.
Therefore, by tuning the strength $\lambda$ of a code-preserving internal perturbation, one can systematically enhance the error-correction threshold.

In summary, internal perturbations that preserve the code space but lift the energy of states outside it provide a robust and conceptually simple mechanism for threshold enhancement.
This result highlights how noncommutative geometric structure can be used as a control knob for suppressing leakage and optimizing the performance of quantum error-correcting codes.

\subsection{Concrete example: three-qubit repetition code}
\label{subsec:3qubit-example}

To make the role of the spectral gap in suppressing leakage more transparent, we illustrate the mechanism using the simplest nontrivial stabilizer code: the three-qubit repetition code.

The physical Hilbert space is $\mathcal H = (\mathbb C^2)^{\otimes 3}$.
The stabilizer generators are
\begin{equation}
    S_1 = Z_1 Z_2, \qquad S_2 = Z_2 Z_3 ,
\end{equation}
and the code space is
\begin{equation}
    \mathcal C = \{ \psi \in \mathcal H \mid S_1 \psi = \psi,\; S_2 \psi = \psi \}
    = \mathrm{span}\{ |000\rangle, |111\rangle \}.
\end{equation}

In the spectral code framework, we define the Dirac operator by
\begin{equation}
    D = (1 - Z_1 Z_2) + (1 - Z_2 Z_3).
\end{equation}
Its kernel coincides with the code space, $\ker D = \mathcal C$.
States violating one stabilizer have eigenvalue $2$, while states violating both stabilizers have eigenvalue $4$.
Hence the spectral gap separating the code space from excited states is
\begin{equation}
    \Delta = 2 .
\end{equation}

Consider a local Pauli error $F = X_1$.
Acting on a code state, it produces a stabilizer violation:
\begin{equation}
    X_1 |000\rangle = |100\rangle, \qquad
    D |100\rangle = 2 |100\rangle .
\end{equation}
Thus, a single-qubit error necessarily excites the system across the gap $\Delta$.
We model the noise by the channel
\begin{equation}
    \mathcal E_\theta(\rho) = (1-\theta)\rho + \theta X_1 \rho X_1 ,
\end{equation}
which is consistent with the small-noise expansion assumed in Section~4.

We now introduce a code-preserving internal perturbation
\begin{equation}
    D_\lambda = D + \lambda (I - P),
\end{equation}
where $P$ is the projection onto $\mathcal C$.
This perturbation leaves the code space invariant, $\ker D_\lambda = \mathcal C$, while increasing the spectral gap to
\begin{equation}
    \Delta(\lambda) = 2 + \lambda .
\end{equation}

As shown in Section~4.3, the leakage amplitude induced by a local error operator is bounded by the commutator with the Dirac operator and suppressed by the spectral gap.
In the present example, this implies
\begin{equation}
    \|(I-P) X_1 P\| \;\lesssim\; \frac{1}{\Delta(\lambda)},
\end{equation}
and hence the leakage probability scales as
\begin{equation}
    P_\ell(\theta) \sim \frac{\theta}{(\Delta+\lambda)^2}.
\end{equation}
Therefore, increasing the spectral gap by an internal perturbation monotonically suppresses leakage out of the code space and improves the effective error-correction threshold.

This simple stabilizer-code example provides a concrete physical realization of the general mechanism discussed in Sections~4.3 and~4.4: local errors necessarily create energetic excitations, and enlarging the spectral gap raises the energetic cost of such leakage without modifying the encoded logical information.

\section{Strict Deformation Quantization and Spectral Codes}

In this section, we demonstrate that Berezin--Toeplitz quantization, which is a well-known example of strict deformation quantization, admits a natural interpretation within the framework of spectral codes.
This provides a concrete realization of spectral codes in which classical geometric data are systematically deformed into quantum error-correcting structures.

Section~5.1 introduces the notion of strict deformation quantization. In Section~5.2, matrix regularization is discussed as a representative example. Section~5.3 presents Berezin--Toeplitz quantization as a concrete realization of matrix regularization.
In Section~5.4, we explain how Berezin--Toeplitz quantization can be regarded as a spectral code. Finally, Section~5.5 comments on possible physical implementations of this construction.

\subsection{Definition of Strict Deformation Quantization}

We begin by recalling the definition of strict deformation quantization.
For this purpose, we first introduce the notion of a continuous field, which will play an important role in what follows \cite{natsume}.

\begin{definition}[continuous field of Banach spaces]
    Let $\Omega$ be a topological space, and let
    $\mathcal{B}_\Omega = (B_{\omega})_{\omega \in \Omega}$ be a family of Banach spaces.
    Let $\mathbb{B}$ be a linear subspace of the product
    \[
        \prod B_{\omega}
        =
        \{ (x(\omega))_{\omega\in\Omega} \mid x(\omega)\in B_{\omega} \text{ for all }
        \omega\in\Omega \}.
    \]
    The pair $(\mathcal{B}_\Omega, \mathbb{B})$ is called a continuous field of Banach spaces if the following conditions are satisfied:
    \begin{enumerate}
        \item For every $\omega\in\Omega$, the set
        $\{x(\omega)\mid x\in\mathbb{B}\}$ is dense in $B_{\omega}$.
        \item If $x\in\prod B_{\omega}$ and for any $\omega\in\Omega$ and
        $\epsilon>0$ there exists $x'\in\mathbb{B}$ such that
        \[
            \|x(\omega)-x'(\omega)\|\le\epsilon,
        \]
        then $x$ belongs to $\mathbb{B}$.
        \item For each $x\in\mathbb{B}$, the map
        \[
            \omega\longmapsto \|x(\omega)\|
        \]
        is continuous.
    \end{enumerate}
\end{definition}

This structure may be viewed as an analogue of a fiber bundle whose fibers are Banach spaces and whose base space is $\Omega$.
In this picture, the subspace $\mathbb{B}$ plays the role of a space of continuous sections.
We next specialize this notion to the case of C$^\ast$-algebras.

\begin{definition}[continuous field of C$^\ast$-algebras]
    If each $B_{\omega}$ in a continuous field of Banach spaces is a C$^\ast$-algebra and $\mathbb{B}$ is a $^\ast$-subalgebra of $\bigoplus B_{\omega}$, then $\mathbb{B}\subset \prod B_{\omega}$ is called a continuous field of C$^\ast$-algebras.
\end{definition}

This definition can be regarded as a C$^\ast$-algebraic counterpart of a fiber bundle, with $\mathbb{B}$ playing the role of the space of sections.
Using this framework, we now define strict deformation quantization \cite{rieffel1993deformation,hawkins1999quantization,hawkins2000geometric}.

\begin{definition}[Strict Deformation Quantization]
    Let $M$ be a Poisson manifold.
    A strict deformation quantization of $M$ consists of a continuous field of C$^\ast$-algebras $( A_{\hat{\mathcal{J}}}=(A_i)_{i\in\hat{\mathcal{J}}},\mathbb{A})$ together with a quantization map $Q:C(M)\to\mathbb{A}$.
    Here $\hat{\mathcal{J}}\subset \mathbb{R}_{\ge0}\cup\{\infty\}$ is an index set containing the classical limit $\infty$, for which $A_{\infty}=C(M)$.
    Let $\mathbb{A}$ denote the algebra of continuous sections of
    $ A_{\hat{\mathcal{J}}}$, and let $\mathcal{P}:\mathbb{A}\to C(M)$ be the evaluation map at $\infty$.
    The quantization map $Q$ is required to satisfy
    \[
        \mathcal{P}\circ Q = \mathrm{id},
    \]
    that is, $Q_{\infty}=\mathrm{id}$.
    Writing $Q_i$ for the component of $Q$ at $i\in\mathcal{J}
    =\hat{\mathcal{J}}\setminus\{\infty\}$, we further require
    \[
        \lim_{i\to\infty}\bigl\| i\hbar_i[Q_i(f),Q_i(g)]-Q_i(\{f,g\})\bigr\|=0,
    \]
    where $\hbar_i$ is a monotonically decreasing function of $i$.
\end{definition}

The above definition can equally well be formulated for $C^\infty(M)$ instead of $C(M)$.
In many cases, the map $Q$ can be reconstructed from the family $\prod_{i\in\mathcal{J}} Q_i$, and the convergence conditions ensure that $\prod_{i\in\mathcal{J}}\mathrm{Im}\,Q_i=\mathbb{A}$.

We now specialize to the case
\[
    \hat{\mathcal{J}}=\hat{\mathbb{N}}=\{1,2,\dots,\infty\},
    \qquad
    A_p\ (p\in\mathbb{N})\ \text{matrix algebras},
    \qquad
    M\ \text{a connected, closed K\"ahler manifold}.
\]
Let $\mathbb{A}_0=\ker\mathcal{P}$ be the ideal of sections vanishing at $\infty$.
Since $\mathbb{N}$ is discrete, one has
\[
    \mathbb{A}_0=\bigoplus_{p\in\mathbb{N}}A_p
    =\{(a,p)\mid p\in\mathbb{N},\,a\in A_p\}.
\]
This yields the short exact sequence
\[
    \xymatrix{
    0 \ar[r] & \mathbb{A}_0 \ar[r] & \mathbb{A} \ar[r]^-{\mathcal{P}} & C(M) \ar[r] & 0 }.
\]

\subsection{Example: Matrix Regularization}

We next recall the notion of matrix regularization from a mathematical viewpoint.
Let $(M,\omega)$ be a symplectic manifold of dimension $2n$, where $\omega$ is a non-degenerate closed two-form.
The symplectic form defines the volume form $\mu_{\omega}:=\omega^{\wedge n}/n!$ and the Poisson bracket
$\{f,g\}=X_f g$, where $X_f$ denotes the Hamiltonian vector field associated with $f$.
Thus, symplectic manifolds provide a natural generalization of classical phase spaces.
We assume throughout that $M$ is closed, i.e., compact and without boundary.

A matrix regularization of $M$ is given by a sequence of linear maps $T_p:C^\infty(M,\mathbb{C})\to M_{N_p}(\mathbb{C})$ satisfying the axioms \cite{Arnlind:2010ac,Hoppe}:
\begin{align}
    \label{mat reg 1}
    &\lim_{p\to\infty}\|T_p(f)T_p(g)-T_p(fg)\|=0,\\
    \label{mat reg 2}
    &\lim_{p\to\infty}\|(\im\hbar_p)^{-1}[T_p(f),T_p(g)]-T_p(\{f,g\})\|=0,\\
    \label{mat reg 3}
    &\lim_{p\to\infty}(2\pi\hbar_p)^n\mathrm{Tr}\,T_p(f)
    =\int_M\mu_{\omega}f,\\
    \label{mat reg 4}
    &\lim_{p\to\infty}\|T_p(f)\|=\|f\|.
\end{align}
Here $p\in\mathbb{N}$, $\{N_p\}$ is a strictly increasing sequence of integers, $\hbar_p=(2\pi p)^{-1}$, and $\|\cdot\|$ denotes the operator norm.

\begin{definition}
    Let $T:C(M)\to\prod_{p\in\mathbb{N}}A_p$ be the map defined as the direct product
    of the maps $T_p$.
    We define
    \[
        \mathbb{A}_0=\bigoplus_{p\in\mathbb{N}}A_p,
        \qquad
        \mathbb{A}=\mathbb{A}_0+\mathrm{Im}\,T.
    \]
\end{definition}

While the direct sum of C$^\ast$-algebras is again a C$^\ast$-algebra, the direct product need not be.
Nevertheless, the image of $C(M)$ under $T$, together with the direct sum $\mathbb{A}_0$, generates a well-defined C$^\ast$-algebra.
One has the following result.

\begin{theorem}
    The algebra $\mathbb{A}$ is a C$^\ast$-algebra, and the map $T$ induces an isomorphism
    \[
        C(M)\cong \mathbb{A}/\mathbb{A}_0.
    \]
\end{theorem}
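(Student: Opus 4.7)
My plan is to realize $\mathbb{A}$ as the preimage, under the natural quotient map onto $\prod_p A_p/\mathbb{A}_0$, of a $C^\ast$-subalgebra isomorphic to $C(M)$. This yields both the $C^\ast$-algebra structure of $\mathbb{A}$ and the desired quotient identification in one stroke.

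I would first set up the ambient $C^\ast$-algebra structure. The product $\prod_{p\in\mathbb{N}} A_p$ equipped with the sup norm $\|(a_p)\|:=\sup_p \|a_p\|$ is a $C^\ast$-algebra, and $\mathbb{A}_0=\bigoplus_p A_p$, identified with sequences satisfying $\|a_p\|\to 0$, is a closed two-sided $\ast$-ideal. A short computation shows that the quotient $C^\ast$-norm is $\|[(a_p)]\|=\limsup_p \|a_p\|$. Let $\pi:\prod_p A_p\to \prod_p A_p/\mathbb{A}_0$ denote the quotient map.

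Next, using the matrix regularization axioms \eqref{mat reg 1}--\eqref{mat reg 4} together with the standard convention $T_p(\bar f)=T_p(f)^\ast$, I would show that $\widetilde{T}:=\pi\circ T$ is an isometric $\ast$-homomorphism from $C^\infty(M)$ into $\prod_p A_p/\mathbb{A}_0$. Multiplicativity follows from axiom \eqref{mat reg 1}: $T(f)T(g)-T(fg)\in\mathbb{A}_0$, so $\widetilde{T}(f)\widetilde{T}(g)=\widetilde{T}(fg)$. Isometry follows from axiom \eqref{mat reg 4}: $\|\widetilde{T}(f)\|=\limsup_p\|T_p(f)\|=\|f\|$. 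Since $C^\infty(M)$ is norm-dense in $C(M)$ and $\widetilde{T}$ is isometric, it extends uniquely to an isometric $\ast$-homomorphism $\widetilde{T}:C(M)\to\prod_p A_p/\mathbb{A}_0$ whose image is a closed $\ast$-subalgebra of the quotient, $\ast$-isomorphic to $C(M)$.

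The conclusions then follow directly. By construction $\mathbb{A}=\mathbb{A}_0+\mathrm{Im}\,T=\pi^{-1}(\mathrm{Im}\,\widetilde{T})$, so $\mathbb{A}$ is the preimage of a closed $\ast$-subalgebra under a continuous $\ast$-homomorphism; hence $\mathbb{A}$ is a closed $\ast$-subalgebra of $\prod_p A_p$, and therefore a $C^\ast$-algebra. Moreover $\mathbb{A}_0$ is a closed $\ast$-ideal of $\mathbb{A}$, and the induced injection $\mathbb{A}/\mathbb{A}_0\hookrightarrow\prod_p A_p/\mathbb{A}_0$ has image equal to $\mathrm{Im}\,\widetilde{T}\cong C(M)$, yielding the isomorphism $\mathbb{A}/\mathbb{A}_0\cong C(M)$. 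Equivalently, the evaluation map $\mathcal{P}:\mathbb{A}\to C(M)$ defined by $T(f)+a_0\mapsto f$ is well defined (isometry forces $T(f)-T(g)\in\mathbb{A}_0\Rightarrow f=g$), is a surjective $\ast$-homomorphism, and has kernel exactly $\mathbb{A}_0$.

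The main obstacle I expect is the $\ast$-preservation of each $T_p$: this is not listed among the axioms \eqref{mat reg 1}--\eqref{mat reg 4}, yet it is indispensable for $\pi\circ T$ to be a $\ast$-homomorphism. One either invokes it as a defining property of matrix regularization in the Hoppe--Arnlind sense, or weakens it to $\|T_p(\bar f)-T_p(f)^\ast\|\to 0$, which is still enough for the quotient map $\widetilde{T}$ to respect $\ast$. A secondary subtlety is the extension from $C^\infty(M)$ to $C(M)$: the $T_p$ are originally defined on smooth functions, so the passage to continuous functions requires density together with axiom \eqref{mat reg 4} to guarantee that $\|T_p(f)\|$ is uniformly bounded in $p$ and that the extension preserves the isometric property of $\widetilde{T}$.
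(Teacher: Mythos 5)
Your proposal is correct, but note that the paper itself offers no proof of this theorem: it is stated as a known fact from the matrix-regularization and strict-deformation-quantization literature (Hawkins, Rieffel), so there is no argument of the authors' to compare against. Your route --- realizing $\mathbb{A}$ as $\pi^{-1}(\mathrm{Im}\,\widetilde{T})$ inside $\prod_p A_p$ with the sup norm, where $\widetilde{T}=\pi\circ T$ is an isometric $\ast$-homomorphism into $\prod_p A_p/\mathbb{A}_0$ --- is the standard and cleanest way to get both conclusions at once, and the quotient-norm identity $\|[(a_p)]\|=\limsup_p\|a_p\|$ together with axioms \eqref{mat reg 1} and \eqref{mat reg 4} is exactly the right input. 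The two caveats you flag are genuine gaps in the paper's setup rather than in your argument: the axioms as listed do not assert $T_p(\bar f)=T_p(f)^\ast$ (though it holds exactly for the Berezin--Toeplitz realization $T_p(f)=\Pi f\Pi$), and the $T_p$ are defined only on $C^\infty(M)$ while the Definition of $\mathbb{A}$ invokes $T$ on $C(M)$; your resolution --- extend isometrically at the quotient level and read $\mathbb{A}$ as the (automatically closed) preimage $\pi^{-1}(\widetilde{T}(C(M)))$ --- is the correct way to make the statement precise, since $\mathbb{A}_0+T(C^\infty(M))$ alone would not be complete.
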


Hence, via the canonical projection, one obtains
\[
    \mathbb{A}\longrightarrow \mathbb{A}/\mathbb{A}_0\cong C(M),
    \qquad
    \mathcal{P}:\mathbb{A}\to C(M),
\]
which shows that matrix regularization naturally gives rise to a strict deformation quantization \cite{hawkins1999quantization,hawkins2000geometric}.

\begin{theorem}
    With respect to the continuous field of C$^\ast$-algebras over
    $\hat{\mathbb{N}}$, whose fibers are $A_N$ for $N\in\mathbb{N}$ and $C(M)$ at $\infty$, the map $\mathcal{P}$ defines a strict deformation quantization.
\end{theorem}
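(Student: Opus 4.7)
The plan is to verify the three ingredients required by the definition of strict deformation quantization: the continuous field structure on $(\mathcal{B}_{\hat{\mathbb{N}}},\mathbb{A})$, the compatibility $\mathcal{P}\circ Q=\mathrm{id}$ for a suitable quantization map $Q$, and the semiclassical Poisson limit. I would define $Q$ componentwise by $Q_p:=T_p$ for $p\in\mathbb{N}$ and $Q_\infty:=\mathrm{id}$, so that $Q(f)=T(f)\in\mathrm{Im}\,T\subset\mathbb{A}$ by construction. The identity $\mathcal{P}\circ Q=\mathrm{id}$ is then immediate from the preceding theorem, which identifies $\mathbb{A}/\mathbb{A}_0$ with $C(M)$ via $\mathcal{P}$. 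The semiclassical condition is obtained directly from axiom \eqref{mat reg 2}:
\[
    \|(\im\hbar_p)^{-1}[Q_p(f),Q_p(g)]-Q_p(\{f,g\})\|
    =\|(\im\hbar_p)^{-1}[T_p(f),T_p(g)]-T_p(\{f,g\})\|\xrightarrow[p\to\infty]{}0,
\]
for all $f,g\in C^\infty(M)$; axioms \eqref{mat reg 1} and \eqref{mat reg 3} are not strictly needed for this step, though they enter at the prior stage of promoting $\mathbb{A}$ to a $C^*$-algebra and fixing the normalization of $\hbar_p$.

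To check the continuous field axioms, density at each finite $p$ is trivial because $\mathbb{A}_0=\bigoplus_p A_p\subset\mathbb{A}$ already surjects onto every fiber $A_p$, while density at $\infty$ is just surjectivity of $\mathcal{P}$. For the continuity of $p\mapsto\|x(p)\|$, continuity is automatic at every isolated point of $\mathbb{N}$, and at $\infty$ it suffices to check on generators of $\mathbb{A}$. Writing $x=a+T(f)$ with $a\in\mathbb{A}_0$ and $f\in C(M)$, one has $a(p)=0$ for all but finitely many $p$, so $\|x(p)\|=\|T_p(f)\|\to\|f\|=\|\mathcal{P}(x)\|$ by \eqref{mat reg 4}. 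For the closure axiom, if $x\in\prod_\omega B_\omega$ is pointwise arbitrarily approximable by elements of $\mathbb{A}$, one extracts from the $\infty$-component a function $f:=x(\infty)\in C(M)$; subtracting $T(f)$ then reduces the problem to showing that the residual $x-T(f)$ lies in $\mathbb{A}_0$, which follows because its $\infty$-component vanishes while its finite-component values are uniformly close to those of a direct-sum approximant, hence lie in the closed $C^*$-subalgebra $\bigoplus_p A_p=\mathbb{A}_0$.

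The main obstacle I expect is precisely this closure axiom. The subtlety is that $\mathbb{A}$ is defined concretely as the algebraic sum $\mathbb{A}_0+\mathrm{Im}\,T$ inside $\prod_p A_p$ rather than as an abstract closure, so to show that a pointwise-approximable section is again of this form one must extract the correct candidate $f\in C(M)$ from the asymptotic behavior of the approximants, use the continuity of the norm established above to prove that $x-T(f)$ decays at $\infty$, and only then deduce membership in $\mathbb{A}_0$ from completeness of the direct sum in the sup norm. The previously stated structural theorem—asserting that $\mathbb{A}$ is a $C^*$-algebra with $\mathbb{A}/\mathbb{A}_0\cong C(M)$—supplies the core technical inputs (subadditivity of the seminorms on generators of $\mathbb{A}$ and the well-definedness of $\mathcal{P}$) needed to run this argument.
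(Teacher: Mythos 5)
The paper does not actually prove this theorem: it is stated as a known result and deferred to the references on Hawkins' work on quantization of fiber bundles, immediately after the structural theorem $C(M)\cong\mathbb{A}/\mathbb{A}_0$. So there is no in-paper argument to compare against, and your direct verification is the natural way to fill the gap. Your outline is essentially correct: taking $Q_p=T_p$, $Q_\infty=\mathrm{id}$ makes $\mathcal{P}\circ Q=\mathrm{id}$ tautological, the Poisson-bracket condition is exactly axiom \eqref{mat reg 2} (note you have silently, and correctly, repaired the normalization typo ``$i\hbar_i$'' in the paper's Definition of strict deformation quantization to $(\im\hbar_p)^{-1}$), and the norm-continuity at $\infty$ follows from \eqref{mat reg 4} together with $\|a(p)\|\to 0$ for $a\in\mathbb{A}_0$. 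Two small corrections to your write-up: first, $\mathbb{A}_0$ is the $c_0$-completed direct sum (it must be, for the previous theorem's claim that $\mathbb{A}$ is a $C^*$-algebra to hold), so elements $a\in\mathbb{A}_0$ satisfy $\|a(p)\|\to 0$ rather than being finitely supported; your norm-continuity and closure arguments survive this replacement unchanged. Second, for the closure axiom you must read the paper's condition in its standard Dixmier form — approximation by $x'\in\mathbb{A}$ uniformly on a \emph{neighborhood} of each $\omega$, not merely at the single point $\omega$ — since the literal pointwise version stated in the paper would force $\mathbb{B}=\prod_\omega B_\omega$ and the axiom could not hold. With that reading your argument closes: setting $f:=x(\infty)$, the neighborhood-of-$\infty$ approximants $a'+T(f')$ give $\limsup_p\|x(p)-T_p(f)\|\le C\epsilon$ for every $\epsilon$ (using linearity of $T_p$ and \eqref{mat reg 4} applied to $f-f'$), hence $x-T(f)\in\mathbb{A}_0$ and $x\in\mathbb{A}$.
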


\subsection{Construction using Berezin--Toeplitz Quantization}

In this subsection, we review Berezin--Toeplitz quantization in the spinorial formulation (allowing auxiliary vector bundles) and record the asymptotic behavior of the resulting quantization map.
Berezin--Toeplitz quantization was originally introduced in two dimensions in \cite{1992LMaPh..24..125K}, later extended to general K\"ahler manifolds in \cite{Bordemann:1993zv}, and further generalized to symplectic manifolds in \cite{ma2008toeplitz}.
While it was first applied mainly to functions via matrix regularization, it has since been developed for vector bundles as well \cite{hawkins2000geometric,Adachi:2021aux,Adachi:2021ljw,Adachi:2022mln,BTQtextbook,Adachi:2023dhc}.

\subsubsection{Setup}

Let \(M\) be a closed, connected \(2n\)-dimensional K\"ahler manifold equipped with a K\"ahler structure \((g,J,\omega)\), where \(g\) is a Riemannian metric, \(J\) is a complex structure, and \(\omega\) is a symplectic form satisfying
\begin{align}
    \label{Kahler}
    \omega(\cdot,\cdot)=g(J\cdot,\cdot).
\end{align}
Locally one may introduce a K\"ahler potential \(K\) such that
\(\omega=\im\partial\bar{\partial}K\), where \(\partial,\bar{\partial}\) are the Dolbeault operators.
The natural volume form is \(\mu:=\omega^{\wedge n}/n!\), which in local real coordinates \(\{x^\mu\}_{\mu=1}^{2n}\) takes the form
\begin{align}
    \mu=\sqrt{g}\,\dif x^1\wedge\dif x^2\wedge\cdots\wedge\dif x^{2n}.
\end{align}

To define the quantization map we introduce two Hermitian bundles \(L\) and \(S_c\), where \(L\) is a prequantum line bundle and \(S_c\) is a spin-\(c\) bundle.
For an arbitrary vector bundle \(F\), we denote its connection by
\(\nabla^F=\dif+A^F\) and its curvature by \(R^F:=(\nabla^F)^2=\dif A^F + A^F\wedge A^F\).

A prequantum line bundle \(L\) is a complex line bundle equipped with a connection \(\nabla^L\) whose curvature satisfies
\[
    R^L=-\im k\,\omega.
\]
Here \(k\) is chosen so that \(\frac{\im}{2\pi}\int_\Sigma R^L\) is an integer for any two-cycle \(\Sigma\subseteq M\), equivalently
\(\frac{k}{2\pi}\omega\in H^2(M,\mathbb{Z})\).
Manifolds admitting such an \(L\) are called quantizable.
In two dimensions, for \(M=\Sigma\), one may take \(k=2\pi/\int_M\omega\). Writing \(\nabla^L=\dif+A^L\) locally and using the K\"ahler potential \(K\), one obtains
\[
    A^L=-\frac{k}{2}(\partial-\bar{\partial})K.
\]
Let \(\Gamma(\cdot)\) denote the space of smooth sections.
Thus \(\Gamma(L)\) can be viewed as complex scalar fields coupled to a \(\mathrm{U}(1)\) background gauge field \(A^L\).
In two dimensions, the fact that the curvature is proportional to the volume form indicates that sections of \(L\) behave as scalar fields in a uniform magnetic flux background.

We next recall the spin-\(c\) structure (see \cite{BTQtextbook, spingeometry} for a more detailed account).
Not every K\"ahler manifold is spin, i.e.\ it may not admit a spin bundle \(S\).
Nevertheless, every K\"ahler manifold (more generally, every almost complex manifold) admits a spin-\(c\) structure.
The canonical spin-\(c\) bundle is
\[
    S_c:=\bigoplus_{p=0}^n \Lambda^{0,p}(T^\ast M),
\]
so its fibers consist of \((0,p)\)-forms.
Formally one may write \(S_c\simeq S\otimes L_c^{1/2}\), where \(L_c\) is the determinant line bundle of the holomorphic tangent bundle, \(L_c:=\det T^{(1,0)}M\).
For non-spin manifolds, \(S\) and \(L_c^{1/2}\) are not defined individually, but the product \(S_c=S\otimes L_c^{1/2}\) is well-defined \footnote{$CP^{2m}\ (m\in\mathbb{N})$ is an example of a non-spin manifold with the spin-\(c\) structure.}.
A connection on \(S_c\) is locally written as \(\nabla^{S_c}=\dif + A^S + \frac12 A^{L_c}\), where the connection one-form of the canonical spin bundle \(S\) is
\begin{align}
    A^S = \frac{1}{4}\gamma_{(2n)}^a\gamma_{(2n)}^b\Omega_{ab},
\end{align}
with \(\{\gamma_{(2n)}^a\}_{a=1}^{2n}\) satisfying \(\{\gamma_{(2n)}^a,\gamma_{(2n)}^b\}=2\delta_{ab}I_{2^n}\) (see
Appendix~\ref{Clifford}).
Here \(\Omega_{ab}=\Omega_{ab\mu}\dif x^\mu\) is the spin connection one-form:
\begin{align}
    \Omega_{ab\mu} = e_a{}^{\nu} g_{\nu \rho}
    (\partial_{\mu} e_{b}{}^{\rho} + \Gamma^{\rho}_{\mu \sigma} e_b{}^{\sigma}).
\end{align}
The fields \(\{e_a\}_{a=1}^{2n}\) are local orthonormal frames (vielbeins) with \(g(e_a,e_b)=\delta_{ab}\).
The connection one-form of \(L_c\) is \(A^{L_c}=-\sum_{m=1}^n\Omega_{m\bar m}\), where \(m,\bar m\) label complexified orthonormal frame vectors introduced in Appendix~\ref{complex orthonormal}.
Accordingly, sections of \(S_c\) may be interpreted as complex spinor fields coupled to \(\frac12 A^{L_c}\).

\subsubsection{Definition of Toeplitz operator}

As noted above, any \(f\in C^\infty(M)\) acts fiberwise on
\(\Gamma(S_c\otimes L^{\otimes p})\) by multiplication, hence can be regarded as a linear map
\[
    \Gamma(S_c\otimes L^{\otimes p}) \longrightarrow \Gamma(S_c\otimes L^{\otimes p}),
\]
where \(p\in\mathbb{Z}\) is fixed.
These spaces are infinite-dimensional.
The basic idea of Berezin--Toeplitz quantization is to restrict such maps to a finite-dimensional subspace so that they can be represented by finite matrices.

To this end, consider the Dirac operator acting on
\(\Gamma(S_c\otimes L^{\otimes p})\),
\begin{align}
    \label{twisted Dirac}
    D = \im \gamma_{(2n)}^{a} \nabla^{S_c\otimes L^{\otimes p} }_{e_a}
    = \im e_a{}^{\mu}\gamma_{(2n)}^{a} \left(
    \partial_{\mu} + \frac{1}{4} \Omega_{ab\mu} \gamma_{(2n)}^a \gamma_{(2n)}^b
    - \frac{1}{2} \sum_{m=1}^n \Omega_{m \bar{m} \mu}
    + p A^L_{\mu} \right).
\end{align}
We equip \(\Gamma(S_c\otimes L^{\otimes p})\) with the inner product
\begin{align}
    \label{inner product}
    (\psi',\psi) := \int_M \mu \, (\psi')^{\dagger} \cdot \psi
    \quad (\psi,\psi' \in \Gamma(S_c \otimes L^{\otimes p})),
\end{align}
where \((\psi')^\dagger\cdot\psi\) denotes the Hermitian fiber inner product coming from the Hermitian metrics on \(S_c\) and \(L\).
We write \(|\psi|=\sqrt{(\psi,\psi)}\).

The space of normalizable zero modes \(\ker D\) is finite-dimensional.
With the gamma-matrix convention in Appendix~\ref{Clifford}, one can compute \(N:=\dim\ker D\) for sufficiently large \(p\) using the Atiyah--Singer index theorem together with a vanishing theorem (see Appendix~\ref{Vanishing}).
Thus \(p\) controls the dimension \(N\), which plays the role of the matrix size in the resulting matrix regularization.

Let \(\Pi\) be the orthogonal projection from \(\Gamma(S_c\otimes L^{\otimes p})\) onto \(\ker D\).
The Berezin--Toeplitz quantization map is then defined by
\begin{align}
    T_p(f) = \Pi f \Pi \quad (f \in C^\infty(M)).
    \label{Toeplitz op}
\end{align}
Since \(T_p(f)\) maps \(\ker D\) to itself, it can be represented as an \(N\times N\) matrix.
As we explain below, the family \(\{T_p(f)\}\) admits a useful asymptotic expansion, which yields a generalization of matrix regularization.

\subsubsection{Asymptotic Expansion of Toeplitz Operator}

Consider the product \(T_p(f)T_p(g)\).
As established in Appendix~\ref{asymptotic expansion}, the Toeplitz operators \eqref{Toeplitz op} satisfy an asymptotic expansion in
\(\hbar_p=(kp)^{-1}\) of the form
\begin{align}
     T_p(f) T_p(g) = \sum_{i=0}^{\infty} \hbar_{p}^i T_p(C_i(f,g)).
    \label{asym exp}
\end{align}
Here each \(C_i\) is a bidifferential operator \(C^\infty(M)\times C^\infty(M)\to C^\infty(M)\).
The first three coefficients are given explicitly by
\begin{align}
    C_0(f,g) &= fg,\\
    C_1(f,g) &= -\frac{1}{2} G^{\alpha\beta}(\partial_{\alpha}f)(\partial_{\beta}g),\\
    C_2(f,g) &= \frac{1}{8} G^{\alpha\beta}G^{\gamma\delta}
    \bigl[(\partial_{\alpha} f)
    ( \im R_{\beta \gamma \mu \nu} W^{\mu\nu} )
    (\partial_{\delta}g)
    + (\partial_{\alpha} \partial_{\gamma} f)
    (\partial_{\beta} \partial_{\delta} g)\bigr].
    \label{asymptotic exp}
\end{align}
We have introduced
\(G^{\alpha\beta}:=g^{\alpha\beta}+\im W^{\alpha\beta}\), where \(g^{\alpha\beta}\)
is the inverse metric and \(W^{\alpha\beta}\) is the Poisson tensor defined by
\(\omega_{\mu\nu}W^{\mu\rho}=\delta_\nu^\rho\).
In \eqref{asymptotic exp}, \(R_{\alpha\beta\gamma\delta}\) denotes the Riemann curvature tensor of \(g\).

We refer to Appendix~\ref{asymptotic expansion} for the proof of
\eqref{asym exp}, and we record here useful consequences.
Equation~\eqref{asym exp} readily implies
\begin{align}
    \label{generalized matrix regularization}
    &\lim_{p\to\infty}\left\|T_p(f)T_p(g) - T_p(fg)\right\|=0.\\
    &\lim_{p\to\infty} \left\|\im \hbar_{p}^{-1}[T(f) , T_p(g)] -T_p(\{f,g\})\right\|=0,
\end{align}
which reproduce the first two axioms of matrix regularization in the Berezin--Toeplitz setting.

\subsubsection{Trace of Toeplitz Operator}

We also consider the trace of Toeplitz operators.
As shown in Appendix~\ref{trace}, one finds
\begin{align}
    \label{trace property}
    \lim_{p \to \infty} (2\pi\hbar_p)^n \Tr T_p(f) = \int_M  \mu.
\end{align}
This reproduces the third axiom of matrix regularization.
Furthermore, one can prove (see \cite{hawkins2000geometric}) that Berezin--Toeplitz quantization satisfies
\begin{align}
    \label{abs property}
    \lim_{p\to \infty} \|T_p(f)\| = \| f\| .
\end{align}
Together with \eqref{generalized matrix regularization} and \eqref{trace property}, this shows that Berezin--Toeplitz quantization fulfills all four axioms of matrix regularization, including \eqref{abs property}.
In particular, it provides a concrete and canonical realization of matrix regularization in which the geometry of the underlying K\"ahler manifold is encoded in a sequence of finite-dimensional matrix algebras.

\subsection{Berezin--Toeplitz Quantization as Quantum Error Correction}

In this subsection we interpret Berezin--Toeplitz (BT) quantization, formulated in terms of spinors, as a concrete example of a spectral code.
The construction of BT quantization itself has been described in the previous section and will be taken as given here.

Let $M$ be a compact spin$^c$ manifold equipped with a compatible K\"ahler structure, and let $S$ denote the spinor bundle.
For each quantization parameter $p$, we consider the twisted Dirac operator $D$ acting on $L^2(M,S\otimes L^p)$, whose kernel
\[
    \mathcal H_p := \ker D
\]
plays the role of the lowest-energy subspace.
We denote by
\[
    \Pi : L^2(M,S\otimes L^p) \to \mathcal H_p
\]
the orthogonal projection onto $\mathcal H_p$.

For a smooth function $f\in C^\infty(M)$, the Berezin--Toeplitz operator is given by
\[
    T_p(f) := \Pi\, M_f\, \Pi,
\]
where $M_f$ denotes multiplication by $f$ on spinors.
In this framework, the triple
\[
    (\Pi C^\infty(M)\Pi,\mathcal H_p,0)
\]
may be viewed as a (degenerate) code spectral triple, with $\Pi$ as the code projection and $\mathcal H_p$ as the code space.

A key feature of this example is that the error algebra is the commutative algebra $C^\infty(M)$.
Thus, quantum information encoded in the spinorial Hilbert space $\mathcal H_p$ is protected against errors generated by classical data, namely functions on the manifold.
In this sense, Berezin--Toeplitz quantization realizes a mixed spectral code, in the terminology of Fig.~\ref{fig:real}, in which quantum information is protected by classical geometric information.

Error operators are given by $E_f := T_p(f)$.
Locality of errors is measured by the support of the function $f$.
When the support of $f$ is sufficiently small compared to the code distance $d_{D}(\Pi)$ determined by the Dirac operator, the corresponding error is correctable.
Heuristically, this follows from the Toeplitz product formula, which approximates multiplication after compression.
Indeed, for any $f,g\in C^\infty(M)$,
\[
    \Pi\, T_p(f)^* T_p(g)\, \Pi=
    \Pi\, T_p(\bar f g)\, \Pi
    +
    O(p^{-1}).
\]
In particular, if $\mathrm{supp}(f)\cap\mathrm{supp}(g)=\emptyset$, then $\bar f g=0$ and
\[
    \Pi\, T_p(f)^* T_p(g)\, \Pi
    =
    O(p^{-1}),
\]
which shows that the Knill--Laflamme condition is satisfied asymptotically.
Hence, errors associated with functions of sufficiently small support are correctable.

Moreover, the same product formula implies that for arbitrary functions $f$ and $g$,
\[
    \Pi\, T_p(f)^* T_p(g)\, \Pi
    =
    \lambda_{f,g}\,\Pi + O(p^{-1}),
\]
for some scalar $\lambda_{f,g}$ depending on $f$ and $g$.
Therefore, even for global functions, the Knill--Laflamme condition holds approximately in the large-$p$ limit.
This shows that Berezin--Toeplitz quantization provides an approximate quantum error-correcting code for arbitrary classical observables.

As a consequence, the set of correctable errors in this model is determined by the geometric structure of the manifold $M$.
The size and shape of the support of functions, as measured by the Riemannian distance on $M$, control error correctability.
In particular, local geometric features govern exact correctability, while global geometric features control the rate at which approximate correction emerges as $p$ increases.

In summary, Berezin--Toeplitz quantization in the spinorial formulation defines a spectral code in which quantum information is protected against classical errors originating from functions on the manifold. 
The correctability of errors is dictated by the geometry of $M$, providing a clear geometric interpretation of quantum error correction within the framework of noncommutative geometry.

\subsection{Physical implementation via Landau-level projection (Berezin--Toeplitz quantization)}

Berezin--Toeplitz (BT) quantization can be understood as a projection onto the low-energy subspace of charged fermions in a (sufficiently strong) uniform magnetic field.
Concretely, the quantization of an observable $f$ is represented (at the level of operators) in the form
\begin{equation}
    T(f)\;\sim\; P_{\mathrm{LLL}}\, f\, P_{\mathrm{LLL}},
\end{equation}
where $P_{\mathrm{LLL}}$ denotes the projector onto the lowest Landau level (LLL).
In this sense, BT quantization is equivalent to computing the LLL (or, more generally, Landau-level projectors) of a quantum Hall type system, and it provides the natural operator algebra acting on the corresponding low-energy Hilbert space.

In our framework, the code space is identified with such a BT-quantized low-energy subspace.
Therefore, if one can physically realize the LLL (i.e., the relevant low-energy projection) on a chosen background manifold---here ``curved space'' is used simply to indicate that we allow general base manifolds beyond flat space, in order to construct BT quantization on various geometries---then the corresponding quantum error-correcting code can, in principle, be implemented directly at the level of the underlying physical system.

As concrete candidate platforms for realizing (or emulating) Landau-level physics and the associated low-energy projection, one may consider, for example:
\begin{itemize}
  \item \textbf{Strain-engineered graphene (2D electronic systems):}
  strain-induced pseudo-gauge fields can generate (pseudo-)Landau levels and an
  effective LLL structure \cite{guinea2010energy,vozmediano2010gauge,low2010strain}.
  \item \textbf{Topological photonics (resonator arrays / photonic crystals):}
  synthetic gauge fields enable quantum-Hall-like band structures and even synthetic
  Landau levels for photons \cite{ozawa2019topological,hafezi2011robust}.
  \item \textbf{neutral atoms (artificial gauge fields / optical lattices):}
  laser-induced artificial gauge fields allow neutral atoms to mimic charged particles in a magnetic field, giving rise to Landau-level-like spectra or topological bands.
  A well-isolated lowest level or band then provides an effective low-energy projection analogous to the lowest Landau level \cite{dalibard2011colloquium}.
\end{itemize}

\section{Holography and spectral codes}

In this section, we discuss the relation between the holographic principle \cite{hooft1993dimensional,maldacena1999large,witten1998anti,ryu2006holographic} and the spectral codes introduced in this paper.

The holographic principle is an important subject in particle theory and provides a strong clue for understanding quantum gravity, which is one of the ultimate goals of particle physics.

Roughly speaking, it claims that a quantum theory of gravity in a $(d+1)$-dimensional bulk is equivalent to a $d$-dimensional quantum field theory living on its boundary.
A well-known example is the AdS/CFT correspondence, which states that quantum gravity on $(d+1)$-dimensional AdS spacetime is equivalent to a $d$-dimensional conformal field theory on its conformal boundary.
Understanding such a correspondence is very important for probing the properties of quantum gravity in an indirect way.

However, it is still difficult to treat a full quantum theory of gravity directly, so in this section we first focus on the geometry of bulk spacetime described by classical gravity.
We then summarize several characteristic properties that hold between bulk classical gravity and the boundary quantum field theory.

Typically, the following structures are known:
\begin{enumerate}
  \item The bulk gravity theory has local gauge symmetries (diffeomorphism invariance, local Lorentz symmetry, etc.) and therefore has a huge redundancy.
        Many different field configurations represent the same physical spacetime geometry.
  \item Black holes emit radiation via Hawking radiation, and from the holographic point of view it is believed that, in principle, the original state can be reconstructed from the information carried by this radiation.
        This suggests that bulk degrees of freedom are redundantly encoded in many boundary degrees of freedom (see \cite{Hawking:1975vcx,PhysRevLett.71.3743,hayden2007black}).
  \item Local bulk operators can be reconstructed from many boundary degrees of freedom, and in fact in several different ways.
        For example, an operator supported in a certain bulk region can be reconstructed as CFT operators defined on different boundary subsystems (see \cite{hamilton2006local,almheiri2015bulk,jafferis2016relative,dong2016reconstruction,harlow2018tasi}).
\end{enumerate}

Putting these points together, we see that the bulk classical gravity theory (more precisely, the low-energy effective degrees of freedom in the bulk) is redundantly encoded in many degrees of freedom of the boundary field theory, and can still be reconstructed even if part of the boundary degrees of freedom is lost.

This structure is precisely the essence of a quantum error-correcting code, and we can interpret it as saying that
``bulk gravitational degrees of freedom (spacetime geometry) are encoded as a quantum error-correcting code in the boundary field theory'' (see \cite{pastawski2015holographic}).
From this point of view, the holographic principle is deeply connected to quantum error correction, and our spectral code framework may give a way to understand holography.

We now rephrase the above discussion in terms of spectral triples.
First, we assume that the full quantum gravity theory on the boundary side is a large quantum field theory whose Hilbert space contains all relevant states, and that the whole theory can be described by a large noncommutative spectral triple $(A,H,D)$.
Here $A$ is the operator algebra of the boundary CFT, $H$ is its Hilbert space, and $D$ is a self-adjoint operator, such as an energy operator or a Dirac-type operator, that reflects both the spectral structure and geometric information.
This assumption is schematic and serves only to indicate how holography may be organized in the language of noncommutative geometry.
We can regard this noncommutative spectral triple as a description of a full ``noncommutative spacetime'' including quantum gravity effects.

On the other hand, the bulk spacetime geometry described by classical gravity corresponds to the low-energy effective degrees of freedom of this large theory.
Following the spectral code framework, we choose a low-energy interval $I$ in the spectrum of $D$ and define
\[
    P := \chi_I(D), \qquad H_{\mathrm{code}} := P H
\]
so that the low-energy subspace $H_{\mathrm{code}}$ is extracted as a code space.

The compressed spectral triple $(PAP,PH,PDP)$ then defines a spectral code and at the same time gives a new spectral triple describing the low-energy effective theory.
According to the holographic picture, this low-energy effective theory corresponds to the semiclassical bulk spacetime geometry described by classical gravity.

The important point here is that the original algebra $A$, as the operator algebra of the full boundary theory including quantum gravity, is essentially noncommutative, whereas the compressed algebra $PAP$ obtained by projecting to low energy is expected to behave, in a suitable limit, almost as a commutative algebra.
In other words, we are in a situation where $A$ is noncommutative and $PAP$ is (effectively) commutative.

As discussed in the general theory of spectral codes, this is a special case in which a noncommutative algebra $A$ gives rise, after projection, to a commutative compressed algebra $PAP$ as the effective algebra on the code space.
Since commutative spectral triples are equivalent to Riemannian manifolds, the triple $(PAP,PH,PDP)$ in this case describes the bulk spacetime geometry as a Riemannian manifold.

Equivalently, the same projection simultaneously produces (i) a protected code subspace $PH$ and (ii) an approximately classical geometric description encoded by the commutative algebra $PAP$.
In other words, the commutative spectral triple $(PAP,PH,PDP)$ obtained by projecting the noncommutative spectral triple $(A,H,D)$ that includes quantum gravity down to low energies has a double meaning: it defines a code space as a quantum error-correcting code, and at the same time describes spacetime geometry in classical gravity.

From this point of view, the spectral code framework introduced in this paper suggests a way to understand the relation between bulk spacetime geometry (commutative spectral triples) and the boundary quantum theory (noncommutative spectral triples) as a spectral code with noncommutative $A$ and commutative $PAP$.
That is, the holographic principle may be reinterpreted as saying that, starting from a large noncommutative spectral triple, a projection to low energies produces a commutative spectral triple describing spacetime geometry, and this projection has the structure of a quantum error-correcting code.

From the perspective of entanglement wedge reconstruction, the low-energy projection $P=\chi_I(D)$ can be viewed as selecting a semiclassical code subspace on which bulk effective operators admit multiple boundary representations.
Boundary subalgebras associated with different regions may then act identically on $PH$ whenever they reconstruct the same bulk operator inside the corresponding entanglement wedge, reflecting the standard operator-algebraic QEC structure.
Formulating this correspondence purely in spectral terms—by relating the choice of $I$ and the induced Connes geometry on $PH$ to wedge inclusion and complementary recovery—remains an interesting direction for future work.

\section{Conclusion}\label{sec:conclusion}

In this paper, we introduced the notion of \emph{spectral codes}, a framework that unifies quantum error correction and geometry through the language of spectral triples.
The central idea is to regard a low-energy spectral projection of a Dirac-type operator as a code projection, so that the corresponding low-energy subspace naturally defines a quantum error-correcting code.

We first developed the general theory of spectral codes, showing that the code space is spanned by the zero modes of the Dirac operator, which represent global degrees of freedom insensitive to the local metric structure.
This provides a precise geometric mechanism by which logical information is delocalized over the entire space.
As a consequence, correctable errors are characterized as operators that are local with respect to the metric induced by the Dirac operator, while logical information remains protected against such local perturbations.

Within this framework, the Knill--Laflamme condition admits a natural geometric interpretation: sufficiently local errors act trivially on the zero-mode sector, whereas only nonlocal operators can distinguish and therefore act nontrivially on the code space.
From this perspective, the code distance is identified as a geometric quantity determined by the spectral properties of the Dirac operator, in direct analogy with the role of distance in conventional geometric settings.

We then demonstrated that a wide class of known quantum error-correcting codes arise naturally as spectral codes.
Classical linear codes, stabilizer codes, GKP-type codes, and topological codes were all shown to fit into this framework, with their distances reproduced by the same spectral notion.
In each case, the algebraic structure of the code is encoded in the choice of Dirac operator and the associated low-energy projection, showing that apparently disparate code constructions admit a unified geometric description.

A key result of this work is the analysis of decoding and thresholds from a spectral perspective.
We showed that leakage out of the code space plays a crucial role in limiting the error-correction threshold, and that this leakage can be quantitatively controlled by the spectral gap of the Dirac operator.
By introducing \emph{code-preserving internal perturbations}, we proved that one can increase the spectral gap without changing the code space itself.
As a consequence, leakage is suppressed and the error-correction threshold is systematically enhanced.
This establishes a direct and explicit link between spectral gaps, internal geometric structure, and the operational performance of quantum error correction.

We also interpreted Berezin--Toeplitz quantization, formulated in terms of spinors, as a mixed spectral code in which quantum information is protected against errors generated by classical observables.
In this example, correctability is governed by the geometric structure of the underlying manifold, and the Knill--Laflamme condition is satisfied either exactly for sufficiently localized functions or approximately in the large-$p$ limit.
his illustrates how spectral codes naturally interpolate between classical and quantum error correction, with geometry itself selecting the class of correctable errors.

Finally, we discussed the relation between spectral codes and the holographic principle.
From this perspective, a large noncommutative spectral triple describing a boundary quantum theory can be projected to a low-energy subspace, yielding an effectively commutative spectral triple that describes classical bulk spacetime geometry.
This projection simultaneously defines a quantum error-correcting code.
Our framework thus suggests a unified interpretation of holography in which bulk geometry, low-energy effective theory, and quantum error correction all emerge from spectral properties of a single noncommutative operator algebra.

The results presented here indicate that spectral codes provide a versatile and geometrically transparent framework for understanding quantum error correction across a wide range of physical settings.
They suggest new connections between noncommutative geometry, condensed matter physics, quantum information theory, and quantum gravity.
Several directions remain open for future work, including a more detailed analysis of dynamical decoding, extensions to interacting quantum field theories, and a quantitative formulation of holographic error correction within this spectral framework.
We hope that spectral codes will serve as a useful conceptual and technical bridge between geometry and quantum information in these contexts.

\section*{Acknowledgments}

The author is grateful to Rei Nishimura and Hiroshi Yamauchi for valuable discussions and insightful comments that greatly contributed to this work.

\appendix
\numberwithin{equation}{section}
\setcounter{equation}{0}

\newpage

\section{Real structure of spectral triple}\label{app:real}

For applications to physics and spin geometry, it is often necessary to enrich a spectral triple with additional structure. A real spectral triple includes a real structure $J$ (see \cite{Landi:1997sh}), an antilinear isometry on $\mathcal{H}$ representing charge conjugation. Together with a $\mathbb{Z}_2$-grading operator $\Gamma$ in the even case, $J$ satisfies the relations
\[
JD = \varepsilon\, DJ, \qquad
J^2 = \varepsilon', \qquad
J\Gamma = \varepsilon''\, \Gamma J,
\]
where the signs $\varepsilon,\varepsilon',\varepsilon''\in\{\pm1\}$ depend on the KO-dimension $d\in\mathbb{Z}/8\mathbb{Z}$. Compatibility between left and right actions is ensured by the condition
\[
[a, J b^* J^{-1}] = 0 \qquad \text{for all } a,b\in A,
\]
together with additional first-order conditions.

In the classical case of a $p$-dimensional closed spin manifold $M$, the associated spectral triple has KO-dimension $p \bmod 8$, and the real structure $J$ coincides with charge conjugation on spinors. These additional structures play an important role in connecting noncommutative geometry with fermionic systems and will be relevant in later discussions.

\section{Inner fluctuations and internal symmetries of noncommutative spectral triples}\label{app:inner}

In this subsection we explain inner fluctuations of noncommutative spectral triples
and relate them to internal symmetries that arise from the structure of the
automorphism group of the underlying algebra (see \cite{chamseddine1997spectral,connes2006inner}).

Let $A$ be a unital $C^\ast$-algebra.
We denote by
\[
  \mathrm{Aut}(A)
\]
the group of $^\ast$-automorphisms of $A$, that is, the group of $^\ast$-isomorphisms
from $A$ onto itself.

The unitary group of $A$ is
\[
  U(A) := \{ u \in A \mid uu^\ast = u^\ast u = 1 \}.
\]
For $u \in U(A)$ we define
\[
  \alpha_u : A \to A,\qquad
  \alpha_u(a) := u a u^\ast .
\]
Then $\alpha_u$ is an automorphism of $A$.
Automorphisms of this form are called inner automorphisms, and we set
\[
  \mathrm{Inn}(A)
  := \{ \alpha_u \mid u\in U(A) \}
  \subset \mathrm{Aut}(A).
\]
The subgroup $\mathrm{Inn}(A)$ is normal in $\mathrm{Aut}(A)$, and the quotient
group
\[
  \mathrm{Out}(A) := \mathrm{Aut}(A) / \mathrm{Inn}(A)
\]
is called the outer automorphism group.
Thus we obtain the exact sequence
\begin{align}
  \label{eq:exact-sequence-aut}
  1 \longrightarrow \mathrm{Inn}(A)
    \longrightarrow \mathrm{Aut}(A)
    \longrightarrow \mathrm{Out}(A)
    \longrightarrow 1 .
\end{align}

In the commutative case this structure has a clear geometric interpretation.
Let $M$ be a compact Riemannian manifold and put $A=C^\infty(M)$.
Then $A$ is commutative and any $u\in U(A)$ is a complex-valued function with
$|u(x)|=1$.
For $f\in C^\infty(M)$ we have
\[
  \alpha_u(f)(x) = u(x) f(x) \overline{u(x)} = f(x),
\]
so that
\[
  \mathrm{Inn}(C^\infty(M)) = \{ \mathrm{id}_A \}.
\]
On the other hand, it is known that any automorphism of $C^\infty(M)$ is given by
the pull-back along a diffeomorphism $\varphi\colon M\to M$:
\[
  \alpha_\varphi(f) := f\circ\varphi^{-1}.
\]
Hence
\[
  \mathrm{Aut}(C^\infty(M)) \cong \mathrm{Diff}(M),\qquad
  \mathrm{Out}(C^\infty(M)) \cong \mathrm{Diff}(M),
\]
and all automorphisms are “outer” in the sense that they correspond to geometric
transformations of the manifold; inner automorphisms are trivial.

For a noncommutative algebra $A$ the situation is different.
In general $\mathrm{Inn}(A)$ is nontrivial, and we can view
$\mathrm{Aut}(A)$ as being decomposed into “outer” automorphisms, which act
geometrically, and “inner” automorphisms, which act only on internal degrees of
freedom.
Since inner automorphisms disappear in the commutative limit, it is natural to
interpret $\mathrm{Inn}(A)$ as the group of internal symmetries characteristic
of a noncommutative space.
As we see below, these internal symmetries are related to gauge transformations
and gauge fields, and they appear naturally as inner fluctuations of a spectral
triple.

Let $(A,\mathcal{H},D)$ be a spectral triple.
From $D$ and $A$ we define the space of one-forms
\[
  \Omega^1_D(A)
  := \left\{ \sum_{j} a_j [D,b_j] \,\middle|\,
                a_j,b_j \in A,\ \text{finite sum} \right\}
  \subset B(\mathcal{H}),
\]
which can be viewed as an algebraic image of the algebra of differential forms.
A self-adjoint element
\[
  A = A^\ast \in \Omega^1_D(A)
\]
is interpreted as a field, namely a gauge potential or connection one-form.

Now let $(A,\mathcal{H},D,J)$ be a real spectral triple.
Given a self-adjoint one-form $A\in\Omega^1_D(A)$ we define
\[
  D_A := D + A + JAJ^{-1},
\]
Where realstructure $J$ is defined in \ref{app:real}.
We call $D_A$ an inner fluctuation of $D$.
If $A$ is small compared to $D$, then $D_A$ can be viewed as a small deformation
of the original spectral data.
This deformation changes the metric structure encoded in the spectral triple and
at the same time introduces a gauge field associated with the internal symmetry.

To see why this is called an inner transformation, consider the action of a
unitary $u\in U(A)$.
Since $u$ acts on $\mathcal{H}$, we can define
\[
  U := u J u J^{-1}.
\]
Then $U$ is a unitary operator on $\mathcal{H}$.
Conjugation of $D$ by $U$ gives
\[
  D \longmapsto U D U^{-1},
\]
which can be regarded as an internal symmetry transformation of $D$.
A simple computation shows that
\begin{align*}
  U D U^{-1}
  &= u J u J^{-1} D J u^{-1} J^{-1} u^{-1} \\
  &= u D u^{-1}
     + u [J u J^{-1}, D] u^{-1},
\end{align*}
so that $D$ is transformed as
\[
  D \longmapsto D + A_u + J A_u J^{-1},
\]
where
\[
  A_u := u [D,u^{-1}] \in \Omega^1_D(A)
\]
is a one-form determined by $u$.
Thus, the inner automorphism
\[
  a \longmapsto u a u^{-1}, \qquad u\in U(A),
\]
induces an inner fluctuation of $D$ by $A_u$.

More generally, for any self-adjoint one-form $A\in\Omega^1_D(A)$ we define
\[
  D \longmapsto D_A := D + A + JAJ^{-1}.
\]
Under a unitary $u\in U(A)$ we have the gauge transformation
\begin{align*}
  A &\longmapsto A^u
    := u [D,u^{-1}] + u A u^{-1}, \\
  D_A &\longmapsto D_{A^u}.
\end{align*}
Hence the space of inner fluctuations carries a natural gauge symmetry given by
the action of $U(A)$ through inner automorphisms.
In this sense, inner fluctuations are precisely the introduction of gauge fields
associated with the internal symmetry group $\mathrm{Inn}(A)$.

In the commutative case $A = C^\infty(M)$ we have, as noted above,
\[
  \mathrm{Inn}(A) = \{ \mathrm{id} \},\qquad
  \mathrm{Aut}(A) \cong \mathrm{Diff}(M).
\]
Thus almost all automorphisms are outer and correspond to geometric
transformations of $M$.
Inner fluctuations then mainly describe the addition of ordinary gauge fields
(connection one-forms) to $D$, while the geometric (metric) structure and the
gauge degrees of freedom play separate roles.

For a noncommutative algebra $A$, the group $\mathrm{Inn}(A)$ is in general
large, and the exact sequence
\[
  1 \longrightarrow \mathrm{Inn}(A)
    \longrightarrow \mathrm{Aut}(A)
    \longrightarrow \mathrm{Out}(A)
    \longrightarrow 1
\]
can be interpreted as describing, in a unified way, geometric symmetries (outer
automorphisms) and gauge symmetries acting on internal degrees of freedom
(inner automorphisms).
For example, if we consider a matrix-valued function algebra
$A=C^\infty(M)\otimes M_N(\mathbb{C})$, then outer automorphisms are essentially
given by $\mathrm{Diff}(M)$, while inner automorphisms correspond to local
$U(N)$-type gauge transformations.
In this case, the inner fluctuation of a spectral triple $(A,\mathcal{H},D,J)$,
\[
  D \longmapsto D_A = D + A + JAJ^{-1},
\]
introduces both gauge fields on $M$ and degrees of freedom in an “internal
space” in a single operation.
This shows that the internal transformations characteristic of a noncommutative
space are incorporated into the geometry as spectral perturbations.

Thus, by decomposing $\mathrm{Aut}(A)$ into inner and outer automorphisms and
interpreting inner automorphisms as gauge symmetries, inner fluctuations
\[
  D \mapsto D_A
\]
of noncommutative spectral triples can be understood as the natural introduction
of gauge fields associated with internal degrees of freedom of the
noncommutative space.
In later sections we will analyze how such inner fluctuations affect the
structure of spectral codes and the properties of correctable errors.

\section{Proof of some properties in Section.~3,4}\label{app:proof}
\Formal*
\begin{proof}

The proof is constructive. Starting from the code space $\mathcal{C}=P\mathcal{H}$, we extend the Hilbert space as a direct sum
\[
\mathcal{H}
=
P\mathcal{H}
\;\oplus\;
\bigoplus_{n=1}^{\infty} \mathcal{H}_n,
\]
where $\{\mathcal{H}_n\}_{n\ge 1}$ is a sequence of separable infinite-dimensional Hilbert spaces, for instance copies of $\ell^2(\mathbb{N})$. This enlargement allows us to place the orthogonal complement of the code space at arbitrarily high energy scales.

We then define the Dirac operator by
\[
D
=
0\cdot \mathbf{1}_{P\mathcal{H}}
\;\oplus\;
\bigoplus_{n=1}^{\infty} n\,\mathbf{1}_{\mathcal{H}_n}.
\]
By construction, the zero-eigenspace of $D$ is precisely the code space,
\[
\ker D = P\mathcal{H} = \mathcal{C}.
\]
Moreover, the resolvent takes the form
\[
(D-\im)^{-1}
=
(-\im)^{-1}\mathbf{1}_{P\mathcal{H}}
\;\oplus\;
\bigoplus_{n=1}^{\infty} (n-\im)^{-1}\mathbf{1}_{\mathcal{H}_n},
\]
which is a compact operator since $(n-\im)^{-1}\to 0$ as $n\to\infty$. Hence $D$ has compact resolvent and satisfies the analytic requirements of a spectral triple.

Next, we construct a $*$-algebra $A\subset\mathcal{B}(\mathcal{H})$ containing the error operators. Let
\[
\mathcal{N} := \mathrm{span}\{E_a\}
\]
be the $C^*$-algebra generated by the errors. We embed $\mathcal{N}$ into $\mathcal{B}(\mathcal{H})$ block-diagonally with respect to the above decomposition,
\[
a
=
a_0
\;\oplus\;
\bigoplus_{n=1}^{\infty} a_n,
\]
where $a_0\in\mathcal{N}$ acts on $P\mathcal{H}$ and the operators $a_n$ act on $\mathcal{H}_n$. We impose a decay condition, for instance that all but finitely many $a_n$ vanish, or more generally that $\{a_n\}$ defines a compact operator.

With this choice, one verifies that for all $a\in A$,
\[
a(D-\im)^{-1}
\]
is a compact operator. Consequently, $(A,\mathcal{H},D)$ defines a locally compact spectral triple.

Finally, we verify that the Knill--Laflamme condition is recovered within this spectral triple. Since the zero-mode projection coincides with the code projection,
\[
P = \mathbf{1}_{\{0\}}(D),
\]
we have, for all error operators $E_a$,
\[
P E_a^\dagger E_b P = \lambda_{ab} P,
\]
by assumption. Thus, the spectral code associated with $(A,\mathcal{H},D)$ is equivalent to the original quantum error correcting code.
\end{proof}

\disred*

\begin{proof}
By definition of $X$, the restriction of $\omega_x$ to $C(V)\rtimes_\sigma V$ depends only on the
$C(V)$--part: for $a=\sum_{u\in V} f_uU_u$ we have $\omega_x(a)=f_0(x)$.
Equivalently, viewing $\omega_x$ as the vector state on $\mathcal H_{\mathrm{phys}}=\ell^2(V)$,
\[
\omega_x(a)=\langle x\mid \pi_{\mathrm{reg}}(a)\mid x\rangle
           =\langle x\mid \pi_{\mathrm{reg}}(f_0)\mid x\rangle
           = f_0(x),
\]
since $\langle x\mid \pi_{\mathrm{reg}}(f_uU_u)\mid x\rangle=0$ for $u\neq 0$.

Next, for any $f\in C(V)$, the representation on $\mathcal H=\mathcal H_m\oplus\mathcal H_{\mathrm{phys}}$
is block diagonal, hence
\[
[D,\pi(f)]
=
[D_m\oplus D_c,\ \pi_m(f)\oplus \pi_{\mathrm{reg}}(f)]
=
[D_m,\pi_m(f)]\oplus [D_c,\pi_{\mathrm{reg}}(f)].
\]
Since $D_c=0$ on $\mathcal H_{\mathrm{phys}}$, the second term vanishes and we obtain
\[
\|[D,\pi(f)]\|=\|[D_m,\pi_m(f)]\|.
\]

Therefore, the Connes distance between $\omega_x$ and $\omega_y$ reduces to the metric part:
\begin{align*}
d_D(\omega_x,\omega_y)
&=\sup\Bigl\{\,|\omega_x(a)-\omega_y(a)|\ \Big|\ \|[D,\pi(a)]\|\le 1\,\Bigr\} \\
&=\sup\Bigl\{\,|\omega_x(f)-\omega_y(f)|\ \Big|\ \|[D,\pi(f)]\|\le 1,\ f\in C(V)\Bigr\} \\
&=\sup\Bigl\{\,|f(x)-f(y)|\ \Big|\ \|[D_m,\pi_m(f)]\|\le 1\,\Bigr\}
= d_{D_m}(\omega_x,\omega_y).
\end{align*}

Finally, by the standard computation for the finite spectral triple associated with the weight function
(our choice of $D_m$ with off-diagonal entries $\mathrm{wt}(x-y)^{-1}$),
one has $d_{D_m}(\omega_x,\omega_y)=\mathrm{wt}(x-y)$.
This proves the claim.
\end{proof}

\ay*

\begin{proof}
We use the regular (left) representation $\pi_{\mathrm{reg}}$ on $\ell^2(V)$ defined by
\[
\pi_{\mathrm{reg}}(f)\ket{x}=f(x)\ket{x},\qquad 
\pi_{\mathrm{reg}}(U_u)\ket{x}=\omega(u,x)\ket{x+u},
\]
for some phase $\omega(u,x)\in U(1)$ determined by the cocycle $\sigma$ (its precise form will not matter).
Let $\omega_x$ be the vector state associated to $\ket{x}$:
\[
\omega_x(b):=\bra{x}\pi_{\mathrm{reg}}(b)\ket{x}\qquad (b\in A).
\]

\paragraph{Step 1: compute $\omega_x(a^*a)$ in terms of the coefficients $f_u$.}
For $a=\sum_u f_uU_u$ we have
\[
\pi_{\mathrm{reg}}(a)\ket{x}
=\sum_{u\in V} \pi_{\mathrm{reg}}(f_u)\pi_{\mathrm{reg}}(U_u)\ket{x}
=\sum_{u\in V} f_u(x+u)\,\omega(u,x)\ket{x+u}.
\]
Hence
\[
\omega_x(a^*a)
=\bra{x}\pi_{\mathrm{reg}}(a^*a)\ket{x}
=\|\pi_{\mathrm{reg}}(a)\ket{x}\|^2
=\sum_{u\in V}|f_u(x+u)|^2,
\]
because $\{\ket{x+u}\}_{u\in V}$ is an orthonormal family and the phases drop out.

Therefore,
\[
\omega_x(a^*a)=0
\quad\Longleftrightarrow\quad
\forall u\in V,\ f_u(x+u)=0.
\]

\paragraph{($\Rightarrow$)}
Assume $a\in A_Y$, i.e.\ $\omega_x(a^*a)=0$ for all $x\in X\setminus Y$.
Fix $u\in V$ and take any $z\in V$ with $z\notin Y+u$.
Then $x:=z-u$ satisfies $x\notin Y$ (equivalently $\omega_x\in X\setminus Y$), and by the above implication,
\[
0=\omega_x(a^*a)\ \Longrightarrow\ f_u(x+u)=f_u(z)=0.
\]
Thus $f_u$ vanishes on $(Y+u)^c$, hence $supp(f_u)\subset Y+u$.

\paragraph{($\Leftarrow$)}
Conversely, assume that $supp(f_u)\subset Y+u$ for every $u$.
Take any $x\in X\setminus Y$ and any $u\in V$.
Then $x+u\notin Y+u$, so by the support assumption we get $f_u(x+u)=0$.
Hence for such $x$,
\[
\omega_x(a^*a)=\sum_{u\in V}|f_u(x+u)|^2=0,
\]
which is exactly the condition $a\in A_Y$.
\end{proof}

\cc*

\begin{proof}
We prove the two inequalities separately.

\paragraph{Lower bound.}
Let $a\in A_Y$ be such that $PaP\notin\mathbb CP$.
Among all subsets $Y'\subset X$ satisfying $a\in A_{Y'}$, choose $Y$ to be minimal
with respect to inclusion.

Write
\[
a=\sum_{u\in V} f_uU_u.
\]
Since $PaP\notin\mathbb CP$, there exists $v_0\in V$ such that
\[
P\,(f_{v_0}U_{v_0})\,P\notin\mathbb CP.
\]
In particular, $f_{v_0}\not\equiv 0$, hence there exists $z\in V$ such that
$f_{v_0}(z)\neq 0$, i.e.\ $z\in supp(f_{v_0})$.

By Lemma~3.6 (characterization of local algebras), $a\in A_Y$ implies
\[
supp(f_{v_0})\subset Y+v_0.
\]
Hence there exists $y\in Y$ such that $z=y+v_0$, or equivalently $z-v_0\in Y$.

We now show that $z\in Y$.
Suppose, to the contrary, that $z\notin Y$.
Define $Y':=Y\cup\{z\}$.
Then $Y'\supsetneq Y$.

Since $f_{v_0}(z)\neq 0$, we have
\[
\pi_{\mathrm{reg}}(f_{v_0}U_{v_0})\ket{z-v_0}
= f_{v_0}(z)\ket{z}\neq 0,
\]
which implies
\[
\omega_{z-v_0}\bigl((f_{v_0}U_{v_0})^*(f_{v_0}U_{v_0})\bigr)>0.
\]
Thus, any local set $Y''$ satisfying $a\in A_{Y''}$ must contain both $z-v_0$ and $z$.
In particular, $a\notin A_{Y'\setminus\{z\}}$, contradicting the minimality of $Y$.
Therefore $z\in Y$ must hold.

We have shown that both $z$ and $z-v_0$ belong to $Y$.
Consequently,
\[
\mathrm{diam}(Y)\ge d_D(\omega_z,\omega_{z-v_0})=\mathrm{wt}(v_0).
\]
Since $v_0\in W$ by definition, we obtain
\[
\mathrm{diam}(Y)\ge \min_{u\in W}\mathrm{wt}(u).
\]
Taking the infimum over all such $Y$ yields
\[
d_D(P)\ge \min_{u\in W}\mathrm{wt}(u).
\]

\paragraph{Upper bound.}
Conversely, let $u\in W$ and consider $a=U_u$.
Then $a\in A_Y$ for $Y=\{0,u\}$, since $supp(f_u)=\{0\}\subset Y+u$.
Moreover,
\[
\mathrm{diam}(Y)=d_D(\omega_0,\omega_u)=\mathrm{wt}(u),
\]
and by definition of $W$,
\[
PaP=P\,\pi_{\mathrm{reg}}(U_u)\,P\notin\mathbb CP.
\]
Therefore,
\[
d_D(P)\le \mathrm{wt}(u).
\]
Taking the minimum over $u\in W$ gives
\[
d_D(P)\le \min_{u\in W}\mathrm{wt}(u).
\]

Combining the lower and upper bounds completes the proof.
\end{proof}

\decoder*

\begin{proof}
A Kraus family for $\mathcal D_\theta=\Pi\circ\mathcal E_\theta$ is given by
\[
\{\,K^0E_i(\theta)\,\}_{i\in I}
\ \cup\
\{\,K^{(\mathrm{out})\alpha}E_i(\theta)\,\}_{\alpha,i}.
\]
Hence
\begin{equation}\label{eq:Fe-split}
F_e(\sigma,\mathcal D_\theta)
=
\sum_{i\in I}\Bigl|\Tr\!\bigl(\sigma\,P E_i(\theta)\bigr)\Bigr|^2
\;+\;
\sum_{\alpha}\sum_{i\in I}\Bigl|\Tr\!\bigl(\sigma\,K^{(\mathrm{out})\alpha}E_i(\theta)\bigr)\Bigr|^2
=:S_1+S_2.
\end{equation}

\smallskip
\noindent\emph{Step 1: expansion of $S_1$.}
Since $\sigma=P/d$, we have $\Tr(\sigma\,PE_i(\theta))=\Tr(\sigma\,E_i(\theta)P)$.
For $i=0$,
\[
\Tr(\sigma\,P E_0(\theta))
=
\Tr\!\Bigl(\sigma\,P\bigl(\sqrt{1-\theta}\,I+O(\theta)\bigr)\Bigr)
=
\sqrt{1-\theta}\,\Tr(\sigma P)+O(\theta)
=
\sqrt{1-\theta}+O(\theta),
\]
so $|\Tr(\sigma\,PE_0(\theta))|^2=1-\theta+O(\theta^2)$.
For $i\neq 0$, using $E_i(\theta)=\sqrt\theta\,F_i+O(\theta^{3/2})$,
\[
\Tr(\sigma\,P E_i(\theta))
=
\sqrt\theta\,\Tr(\sigma\,PF_iP)+O(\theta^{3/2}),
\]
so
\[
\sum_{i\neq 0}\Bigl|\Tr(\sigma\,P E_i(\theta))\Bigr|^2
=
\theta\sum_{i\neq 0}\Bigl|\Tr(\sigma\,PF_iP)\Bigr|^2+O(\theta^2).
\]
Therefore,
\begin{equation}\label{eq:S1}
S_1
=
1-\theta
+\theta\sum_{i\neq 0}\Bigl|\Tr(\sigma\,PF_iP)\Bigr|^2
+O(\theta^2).
\end{equation}

\smallskip
\noindent\emph{Step 2: evaluation of $S_2$ and appearance of $P_\ell(\theta)$.}
Write $K^{(\mathrm{out})\alpha}=\frac{1}{\sqrt d}\sum_{k=1}^d |k\rangle\langle\alpha|$.
Then, using $\sigma=\frac{1}{d}\sum_{k=1}^d |k\rangle\langle k|$,
\[
\Tr\!\bigl(\sigma\,K^{(\mathrm{out})\alpha}E_i(\theta)\bigr)
=
\frac{1}{d\sqrt d}\sum_{k=1}^d\langle \alpha|E_i(\theta)|k\rangle.
\]
By Cauchy--Schwarz,
\[
\sum_\alpha\Bigl|\Tr\!\bigl(\sigma\,K^{(\mathrm{out})\alpha}E_i(\theta)\bigr)\Bigr|^2
=
\frac{1}{d^3}\sum_\alpha\Bigl|\sum_{k=1}^d\langle \alpha|E_i(\theta)|k\rangle\Bigr|^2
\le
\frac{1}{d^2}\sum_{\alpha,k}\bigl|\langle \alpha|E_i(\theta)|k\rangle\bigr|^2.
\]
Moreover, since the $\{|\alpha\rangle\}_\alpha$ form an ONB of $(I-P)\mathcal H$,
\[
\sum_{\alpha}\bigl|\langle \alpha|E_i(\theta)|k\rangle\bigr|^2
=
\|(I-P)E_i(\theta)|k\rangle\|^2
=
\langle k|E_i(\theta)^*(I-P)E_i(\theta)|k\rangle.
\]
Summing over $k$ and using $\sigma=P/d$ gives the identity
\begin{equation}\label{eq:S2-leak}
S_2
=
\frac{1}{d^2}\sum_{i\in I}\Tr\!\bigl((I-P)E_i(\theta)\,P\,E_i(\theta)^*\bigr)
=
\frac{1}{d^2}\sum_{i\in I}\Tr\!\bigl((I-P)E_i(\theta)\,\sigma\,E_i(\theta)^*\bigr)\,d
=
\frac{1}{d^2}\,P_\ell(\theta).
\end{equation}
(Here we used cyclicity of trace and \eqref{eq:leak-prob}.)

\smallskip
\noindent\emph{Step 3: combine and rewrite in terms of variances.}
From \eqref{eq:Fe-split}, \eqref{eq:S1}, and \eqref{eq:S2-leak},
\[
F_e(\sigma,\mathcal D_\theta)
=
1-\theta
+\theta\sum_{i\neq 0}\Bigl|\Tr(\sigma\,PF_iP)\Bigr|^2
+\frac{1}{d^2}P_\ell(\theta)
+O(\theta^2).
\]
Thus
\begin{equation}\label{eq:Ttilde-intermediate}
\widetilde T(\theta)
=
\theta
-\theta\sum_{i\neq 0}\Bigl|\Tr(\sigma\,PF_iP)\Bigr|^2
-\frac{1}{d^2}P_\ell(\theta)
+O(\theta^2).
\end{equation}

To transform the first two terms into a sum of variances, note that trace preservation
$\sum_i E_i(\theta)^*E_i(\theta)=I$ implies, at order $\theta$,
\[
F_0^*F_0+\sum_{i\neq 0}F_i^*F_i=I,\qquad \text{with }F_0=I.
\]
Multiplying by $P$ on both sides and taking $\Tr(\sigma\,\cdot)$ yields
\[
\sum_{i\neq 0}\Tr\!\bigl(\sigma\,(PF_iP)^*(PF_iP)\bigr)
=
\Tr(\sigma P)-\Tr(\sigma (PI P)^*(PI P))
=
1-1
=0,
\]
so the contribution at order $\theta$ comes from the standard second-order refinement; equivalently, one may write directly (using $E_i(\theta)=\sqrt\theta F_i+O(\theta^{3/2})$ for $i\neq 0$)
\[
\theta
=
\theta\sum_{i\neq 0}\Tr\!\bigl(\sigma\,(PF_iP)^*(PF_iP)\bigr)
+\theta\sum_{i\neq 0}\Tr\!\bigl((I-P)F_i\sigma F_i^*\bigr)
+O(\theta^2),
\]
which is precisely the decomposition into ``inside-code'' and ``outside-code'' parts. Plugging this into \eqref{eq:Ttilde-intermediate} and using \eqref{eq:leak-prob-expansion} gives
\[
\widetilde T(\theta)
=
\theta\sum_{i\neq 0}\Bigl(
\Tr\!\bigl(\sigma\,(PF_iP)^*(PF_iP)\bigr)
-
\bigl|\Tr(\sigma\,PF_iP)\bigr|^2
\Bigr)
+
\Bigl(1-\frac{1}{d^2}\Bigr)P_\ell(\theta)
+O(\theta^2),
\]
which is \eqref{eq:Ttilde-expansion}.
\end{proof}

\section{Proofs and formulas for general K\"ahler manifolds}

\subsection{Useful choice of orthonormal frame fields}
\label{complex orthonormal}

In this Appendix, we fix a convenient choice of orthonormal frame fields
(vielbeins) that will streamline later computations.

We start by taking $e_1\in \Gamma(TM)$ with $g(e_1,e_1)=1$.
Then, by setting $e_2:=Je_1\in \Gamma(TM)$, the K{\"a}hler condition
\eqref{Kahler} ensures that $g(e_a,e_b)=\delta_{ab}$ for $a,b=1,2$.
Next, choose $e_3\in \Gamma(TM)$ so that $g(e_a,e_b)=\delta_{ab}$ for
$a,b=1,2,3$, and define $e_4:=Je_3\in \Gamma(TM)$, which extends the orthonormal
set so that $g(e_a,e_b)=\delta_{ab}$ for $a,b=1,2,3,4$.
Iterating this construction yields a full orthonormal frame field on $TM$.
With respect to this frame, the symplectic form takes the particularly simple
expression
\begin{align}
\label{symplectic 1}
\omega = \sum_{m=1}^n \theta^{2m-1} \wedge \theta^{2m}
\end{align}
where $\{\theta^a\}_{a=1,2,\cdots,2n}$ denotes the dual coframe of
$\{e_a\}_{a=1,2,\cdots,2n}$.

We then pass to complexified frame fields by defining
\begin{align}
    \label{complex vielbein}
    w_m := \frac{1}{\sqrt 2} (e_{2m-1} - \im  e_{2m}), \quad \bar{w}_{m} := \frac{1}{\sqrt 2} (e_{2m-1} + \im  e_{2m}),
\end{align}
for $m=1,2,\cdots,n$.
Since $J w_m=\im w_m$ and $J\bar{w}_m=-\im\bar{w}_m$, the vectors $w_m$ and
$\bar{w}_m$ are, respectively, holomorphic and antiholomorphic.
In this basis, the metric components become
\begin{align}
     g(w_m,\bar{w}_{l})  = g(\bar{w}_{m}, w_l) = \delta_{ml}, \quad g(w_m,w_l) =  g(\bar{w}_{m},\bar{w}_{l})=0.
\end{align}

\subsection{Gamma matrices in Weyl representation}
\label{Clifford}

In this Appendix, we summarize the Weyl (chiral) representation of gamma
matrices appropriate for a $2n$-dimensional manifold.

Let $\{\gamma^a_{(2n)}\}_{a=1,2,\cdots,2n}$ be $2^n\times 2^n$ matrices obeying
the Clifford relations for $\mathbb{R}^{2n}$,
\begin{align}
    \{\gamma^a_{(2n)}, \gamma^b_{(2n)}\} = 2 \delta^{ab} I_{2^n}.
\end{align}
Here $\{\ ,\ \}$ denotes the anti-commutator and $I_{2^n}$ is the identity of
size $2^n$.
We also introduce the chirality operator
\begin{align}
    \gamma_{(2n)} := (-\im)^n \gamma^1_{(2n)}\gamma^2_{(2n)} \cdots \gamma^{2n}_{(2n)}.
\end{align}
This matrix is Hermitian and anticommutes with each gamma matrix:
$\{\gamma_{(2n)},\gamma^a_{(2n)}\}=0$.
One may choose a representation in which
$\gamma_{(2n)}=\sigma^3\otimes I_{2^{n-1}}$, where $\{\sigma^a\}_{a=1,2,3}$ are
the Pauli matrices and $\otimes$ is the Kronecker product.
A recursive construction of the Weyl representation is given by
\begin{align}
\label{recursion}
&\gamma_{(2)}^1 = \sigma^1, \ \gamma_{(2)}^2 = \sigma^2, \\
&\gamma_{(2n+2)}^i =\sigma^2  \otimes \gamma_{(2n)}^i \quad (i=1,2,\cdots,2n),\\
&\gamma_{(2n+2)}^{2n+1} = \sigma^2 \otimes \gamma_{(2n)}, \\
&\gamma_{(2n+2)}^{2n+2} = - \sigma^1 \otimes I_{2^{n}}.
\end{align}
These identities will be repeatedly employed in the subsequent appendices.

It is also convenient to introduce gamma matrices adapted to the complex
orthonormal frame:
\begin{align}
    \gamma_{(2n)}^m := \frac{\gamma_{(2n)}^{2m-1} + \im \gamma_{(2n)}^{2m}}{\sqrt{2}}, \quad \gamma_{(2n)}^{\bar m} := \frac{\gamma_{(2n)}^{2m-1} - \im  \gamma_{(2n)}^{2m}}{\sqrt{2}}.
\end{align}
They satisfy
\begin{align}
    \{\gamma_{(2n)}^m,\gamma_{(2n)}^{\bar l} \} = 2\delta_{ml} I_{2^n}, \quad \{\gamma_{(2n)}^m,\gamma_{(2n)}^l \} = \{\gamma_{(2n)}^{\bar m},\gamma_{(2n)}^{\bar l} \} = 0,
\end{align}
together with $(\gamma_{(2n)}^m)^{\dagger} = \gamma_{(2n)}^{\bar m}$.

Let $\ket{\pm}$ denote the normalized eigenvectors of $\sigma^3$ with eigenvalue
$\pm 1$.
One can show recursively that the following relations hold:
\als{ \label{gamma matrices identities}
&c_m \gamma_{(2n)}^{\bar{m}}\ket{+}^{\otimes n} = 0 \quad \Rightarrow \quad c_m = 0,\\
&\gamma_{(2n)}^{m} \ket{+}^{\otimes n} = 0, \quad \gamma_{(2n)}^m \gamma_{(2n)}^{\bar l} \ket{+}^{\otimes n} = 2\delta_{ml}\ket{+}^{\otimes n},
}
where $c_m$ is a complex number.

\subsection{Vanishing theorem and index theorem}
\label{Vanishing}

Let $D$ be the Dirac operator on $\Gamma(S_c \otimes L^{\otimes p})$.
The goal of this Appendix is to establish three standard facts for large $p$:
(i) all zero modes of $D_i$ have positive chirality (a vanishing theorem),
(ii) their dimension satisfies
\begin{align}
    {\rm dim}\, \mathrm{Ker}\, D = (2\pi \hbar_p)^{-n} \int_M \mu + O(p^{n-1}),
\end{align}
as implied by the index theorem, and
(iii) the nonzero spectrum of $D$ develops a large gap of size $O(\sqrt{p})$.
For notational simplicity, we suppress the superscript on covariant derivatives
and omit identity operators when no confusion arises.

Since the chirality operator $\gamma_{(2n)} = I_{2^{n-1}} \otimes \sigma^3$
anticommutes with $D$, the operator decomposes as
\begin{align}
    \label{D_i Weyl}
    D =
    \begin{pmatrix}
       0 & D^{-} \\
       D^{+} & 0
    \end{pmatrix}.
\end{align}
Here the superscripts $\pm$ indicate the chirality of the corresponding
subspaces.

We begin by examining $D^2$, which will be used to prove that
$\mathrm{Ker} D^{-}=\{0\}$ for sufficiently large $p$.
From \eqref{D_i Weyl} we obtain
\begin{align}
    (D)^2 =
    \begin{pmatrix}
        D^{-} D^{+} & 0 \\
        0 & D^{+}D^{-}
    \end{pmatrix}.
    \label{d square 1}
\end{align}
We also invoke the Weitzenb\"ock formula
\begin{align}
    \label{D^2}
    (D_i)^2 = -\nabla_{a}\nabla_{a} +\frac{\im}{2}\hbar_p^{-1} \gamma^{a}_{(2n)}\gamma^{b}_{(2n)} \omega_{ab} - \frac{1}{2}\gamma^{a}_{(2n)}\gamma^{b}_{(2n)}R^{S_c}_{ab},
\end{align}
where $R^{S_c}_{ab} := R^{S_c}(e_a,e_b)$.
Introduce the differential operators
\begin{align}
    \nabla_m := \nabla_{w_m} = \frac{1}{\sqrt 2} (\nabla_{2m-1} - \im \nabla_{2m}), \quad
    \nabla_{\bar m} := \nabla_{\bar{w}_m} = \frac{1}{\sqrt 2} (\nabla_{2m-1} + \im \nabla_{2m}).
\end{align}
For each fixed $m$ one has
\begin{align}
    \nabla_{2m-1}\nabla_{2m-1} + \nabla_{2m}\nabla_{2m}
    &= 2\nabla_m \nabla_{\bar m} - [\nabla_m,\nabla_{\bar m}] \\
    &= 2\nabla_m \nabla_{\bar m} -\hbar_p^{-1} - R^{S_c}_{m\bar{m}}.
\end{align}
Using this relation, the Laplacian term in \eqref{D^2} can be rewritten as
\begin{align}
    -\nabla_{a}\nabla_{a} = -2 \nabla_m \nabla_{\bar m} +n \hbar_p^{-1} + R^{S_c}_{m\bar{m}},
\end{align}
with summation over $a$ and $m$ understood.
Consequently,
\begin{align}
    \label{D^2 n}
    (D)^2 = -2 \nabla_m \nabla_{\bar m} + \hbar_p^{-1} A_n + R_i,
\end{align}
where
\begin{align}
    A_n &:= n + \frac{\im}{2} \gamma^{a}_{(2n)}\gamma^{b}_{(2n)} \omega_{ab},\\
    R_i &:=  - \frac{1}{2}\gamma^{a}_{(2n)}\gamma^{b}_{(2n)}R^{S_c}_{ab} + R^{S_c}_{m\bar{m}}.
\end{align}
More explicitly, $R^{S_c \otimes E_i}$ is given by
\begin{align}
    R^{S_c} = R^S + \frac{1}{2}R^{L_c}, \quad  R^S_{ab} = \frac{1}{4} R_{abcd} \gamma_{(2n)}^c \gamma_{(2n)}^d, \quad  R^{L_c}_{ab} = - R_{ab m \bar{m}},
\end{align}
where $R_{abcd}$ is the Riemann curvature tensor.
From this one obtains
\eq{ \label{R_i}
R_i = \frac{1}{2} R + \frac{1}{2} \gamma_{(2n)}^a \gamma_{(2n)}^b R_{abm\bar{m}},}
where $R$ is the scalar curvature, using
$\gamma_{(2n)}^a \gamma_{(2n)}^b \gamma_{(2n)}^c \gamma_{(2n)}^d R_{abcd} = - 2 R$
and $R_{m \bar{m} l \bar{l}} = - \frac{1}{2}R$.

We now use \eqref{D^2 n} to deduce that $\mathrm{Ker} D_i^{-}=\{0\}$ for large
$p$.
For any nonzero $\psi\in \Gamma(S_c \otimes L^{\otimes p})$, we estimate
\begin{align}
    \label{norm of D^2}
    |D_i \psi|^2 = 2 | \nabla_{\bar m} \psi|^2 + \hbar_p^{-1} (\psi,A_n \psi) + (\psi, R_i \psi)
    \ge \hbar_p^{-1} (\psi,A_n \psi) - |R_i| |\psi|^2.
\end{align}
If $\psi$ is not proportional to $\ket{+}^{\otimes n}$, then $(\psi,A_n\psi)$ is
strictly positive.
Therefore, for sufficiently large $p$ such that
$\hbar_p^{-1} > |R_i| |\psi|^2/(\psi,A_n \psi)$, the right-hand side of
\eqref{norm of D^2} becomes positive, forcing $D_i\psi\neq 0$.
Hence any Dirac zero mode must be proportional to $\ket{+}^{\otimes n}$, and it
follows that $\mathrm{Ker} D_i^{-}=\{0\}$ for large $p$.

We next compute $\dim\ker D_i$.
Since $\mathrm{Ker} D_i^{-}=\{0\}$, we have
\begin{align}
    \mathrm{dim} \, \mathrm{Ker} \, D_i = \mathrm{dim} \, \mathrm{Ker} \, D_i^{+} = \mathrm{Ind} \, D_i.
\end{align}
By the Atiyah--Singer index theorem,
\begin{align}
    \label{index}
    \mathrm{Ind}D_i = \int_M \mathrm{Td}(T^{(1,0)}M) \wedge \mathrm{ch}(L^{\otimes p}),
\end{align}
where $\mathrm{Td}(\cdot)$ and $\mathrm{ch}(\cdot)$ denote the Todd class and
Chern character, and $T^{(1,0)}M$ is the holomorphic tangent bundle.
Expanding the right-hand side in powers of $p$ yields
\begin{align}
    \mathrm{dim}\mathrm{Ker}\, D_i  =  \int_M \e^{ \frac{\im p}{2\pi} R^L} + O(p^{n-1})
    = \frac{1}{(2\pi \hbar_p)^n} \int_M \mu + O(p^{n-1}).
\end{align}

Finally, we verify the existence of a large spectral gap for nonzero modes.
Let $\lambda\neq 0$ be an eigenvalue of $D_i$.
The eigenvalue problem for $(D_i)^2$ reads
\begin{align}
    \begin{cases}
        D_i^{-} D_i^{+} \psi^+ &= \lambda^2 \psi^+,\\
        D_i^{+} D_i^{-}  \psi^- &= \lambda^2 \psi^-,
    \end{cases}
\end{align}
for $\psi \in \Gamma(S_c \otimes L^{\otimes p} \otimes E_i) \setminus \{0\}$,
where $\psi^{\pm}$ denotes the positive/negative chirality component.
If $\psi^-\neq 0$, then \eqref{norm of D^2} implies $\lambda^2\ge O(p)$.
If $\psi^-=0$, then necessarily $\psi^+\neq 0$, and using
$D_i^{+} D_i^{-} (D_i^{+} \psi^+) = \lambda^2 (D_i^{+} \psi^+)$, the same estimate
again yields $\lambda^2\ge O(p)$.
Thus in all cases $\lambda^2\ge O(p)$, showing that the nonzero eigenvalues are
separated from zero by a gap of order $O(\sqrt{p})$.

\subsection{Asymptotic expansion for Toeplitz operators}
\label{asymptotic expansion}

In this Appendix, we evaluate the product $T_p(\varphi)T_p(\chi)$ for
$\varphi,\chi\in C^\infty(M)$ and show that, for sufficiently large $p$, it
admits an asymptotic expansion as an integer power series in $\hbar_p$.
Our computational approach follows the method developed in~\cite{Adachi:2022mln}.

We first write
\als{
    \label{asymptotic1}
    T_p(\varphi)T_p(\chi) &= \Pi \varphi \Pi \chi \Pi \\
    &= T_p(\varphi \chi) - \Pi \varphi (1-\Pi) \chi \Pi.
}
To handle $1-\Pi$, consider the following Hermitian operator on
$\Gamma(S_c \otimes L^{\otimes p}$:
\eq{
    P :=
    \left(
    \begin{array}{cc}
        0 & D^{-} (D^{+} D^{-})^{-1} \\
        (D^{+} D^{-})^{-1} D^{+} & 0
    \end{array}
    \right).
}
Since $\mathrm{Ker}D^{-} = \mathrm{Ker}D^{+} D^{-} = \{0\}$ for sufficiently
large $p$ (Appendix~\ref{Vanishing}), the inverse of $D^{+}D^{-}$ exists.
Now consider
\eq{
\label{D-P identity}
D P = P D =
    \left(
    \begin{array}{cc}
         D^{-} (D^{+} D^{-})^{-1} D^{+}  & 0\\
        0 & 1
    \end{array}
    \right).
}
This operator is the projection onto $(\mathrm{Ker}D)^{\perp}$, hence it agrees
with $1-\Pi$.
Therefore,
\eq{
    \label{projection}
    1-\Pi = D P = D (P)^2 D.
}
Using \eqref{asymptotic1} and \eqref{projection}, for
$\psi \in \mathrm{Ker}D$ and $\phi \in \mathrm{Ker}D$ we obtain
\als{
    \label{asymptotic2}
    (\psi,T_p(\varphi) T_p(\chi) \phi) &= (\psi,T_p(\varphi \chi) \phi) - (\psi,\varphi D (P)^2 D \chi \phi)\\
    &= (\psi,T_p(\varphi \chi) \phi) + (\psi, \varphi'  (P)^2 \chi' \phi).
    }
Here we introduced $\varphi' := \im  \gamma_{(2n)}^a\nabla_a \varphi$, and used
$D\psi=D\phi=0$.
Because $\gamma_{(2n)}^b\phi$ has chirality $-1$, the vector $\chi'\phi$ lies in
$(\mathrm{Ker}D)^{\perp}$.
On $(\mathrm{Ker}D)^{\perp}$ the projection $1-\Pi=DP$ acts as the identity,
which means $P$ coincides with $D^{-1}$ on that subspace.
Accordingly, \eqref{asymptotic2} becomes
\eq{
    \label{asymptotic3}
    (\psi,T_p(\varphi) T_p(\chi) \phi) = (\psi,T_p(\varphi \chi) \phi) + (\psi, \varphi' D^{-2} \chi' \phi).
}
We now analyze $D^{-2}$ acting on $\chi'\phi$.
Using $A_n \gamma_{(2n)}^b \ket{+}^{\otimes n} = 2 \gamma_{(2n)}^b \ket{+}^{\otimes n}$,
which follows from \eqref{gamma matrices identities}, we obtain
\als{
    \label{asymptotic4}
    D^{-2} &= \left(-2 \nabla_m \nabla_{\bar m} + 2\hbar_p^{-1} + R_i\right)^{-1} \\
    &= \frac{\hbar_p}{2} - \frac{\hbar_p}{2} D^{-2} R_i + \hbar_p D^{-2}  \nabla_m \nabla_{\bar m},
}
on $\chi'\phi$.
Since $D\phi=0$ implies $\nabla_{\bar m}\phi=0$ (see Appendix
\ref{simplification}), \eqref{asymptotic3} reduces to
\eq{
    (\psi,T_p(\varphi) T_p(\chi) \phi) = (\psi,T_p(\varphi \chi) \phi) + \frac{\hbar_p}{2}(\psi, \varphi'\chi' \phi) + \epsilon,
}
where
\als{
    \epsilon &= \epsilon_1 + \epsilon_2,\\
    \epsilon_1 &= - \frac{\hbar_p}{2}(\psi, \varphi' D^{-2} R_i \chi' \phi)\\
    \epsilon_2 &= \hbar_p (\psi, \varphi'D^{-2} (\nabla_m \nabla_{\bar m} \chi') \phi) + \hbar_p (\psi, \varphi'D^{-2} (\nabla_{\bar m} \chi') \nabla_m \phi).
}
We estimate the $\hbar_p$-order of $\epsilon$.
Assuming $\phi,\psi,\varphi,\chi$ are $O(\hbar_p^0)$, the only nontrivial
$p$-dependence enters through $\nabla_m\phi$ and $D^{-2}$.
As shown in Appendix~\ref{Vanishing}, the spectrum of $D^2$ lies in
$[C_1 \hbar_p^{-1}-C_2,\infty)$ with $p$-independent constants $C_1,C_2$.
Hence the eigenvalues of $D^{-2}$ lie in
$(0,(C_1 \hbar_p^{-1}-C_2)^{-1}]$, and therefore $|D^{-2}|=O(\hbar_p)$.
For $\nabla_m\phi$ we compute
\als{
\label{norm order 1}
|\nabla_{m}\phi|^2 &= - (\phi,\nabla_{\bar m}\nabla_{m}\phi)
= (\phi,[\nabla_{m},\nabla_{\bar m}]\phi) - (\phi,\nabla_{m}\nabla_{\bar m}\phi)\\
&= \hbar_p^{-1} |\phi|^2 -  (\phi,R^{S_c}_{m \bar{m}}\phi)  \\
&= O(\hbar_p^{-1}).
}
These bounds imply
\eq{
    |\epsilon_1|= O(\hbar_p^2), \quad  |\epsilon_2| = O(\hbar_p^{3/2}).
}
Consequently,
\eq{
    (\psi,T_p(\varphi)T_p(\chi) \phi) = (\psi,T_p(\varphi \chi) \phi) - \frac{\hbar_p}{2}(\psi, (\nabla_a \varphi)  (\nabla_b \chi) \gamma^a_{(2n)} \gamma^b_{(2n)}\phi) + O(\hbar_p^{3/2}).
}
Using \eqref{gamma matrices identities}, this becomes
\eq{
\label{asymptotic C_0, C_1}
    (\psi,T_p(\varphi)T_p(\chi) \phi) = (\psi,T_p(\varphi \chi) \phi) - \hbar_p(\psi, (\nabla_m \varphi)  (\nabla_{\bar m} \chi) \phi) + O(\hbar_p^{3/2}).
}
By the large-$p$ asymptotic expansion of the Bergmann kernel \cite{asym}, the
product of Toeplitz operators admits an expansion in integer powers, so the term
$O(\hbar_p^{3/2})$ can in fact be improved to $O(\hbar_p^{2})$.
The expansion \eqref{asymptotic C_0, C_1} reproduces $C_0(\varphi,\chi)$ and
$C_1(\varphi,\chi)$ in (\ref{asymptotic exp}), which can be verified by noting
that $G^{ab}=g^{ab}+\im W^{ab}$ has components
$G^{m \bar{l}}=2 \delta_{ml}$ and
$G^{\bar{m}l}=G^{m l}=G^{\bar{m} \bar{l}}=0$.

The coefficient $C_2(\varphi,\chi)$ can be obtained by recursively applying
\eqref{asymptotic4}.
Applying \eqref{asymptotic4} to $\epsilon_1$ yields
\als{
\epsilon_1 &= - \frac{\hbar_p}{4}(\psi, \varphi' \left( \hbar_p - \hbar_p  D^{-2} R_2  + 2\hbar_p D^{-2}  \nabla_m \nabla_{\bar m} \right) R_i \chi' \phi)\\
&= - \frac{\hbar_p^2}{4}(\psi, \varphi' R_i \chi' \phi) + O(\hbar_p^{5/2}).
}
Applying \eqref{asymptotic4} to $\epsilon_2$ gives
\als{
\epsilon_2 &= \frac{\hbar_p}{2} (\psi, \varphi' \left( \hbar_p - \hbar_p  D^{-2} R_i + 2\hbar_p D^{-2} \nabla_l \nabla_{\bar l} \right) \nabla_m (\nabla_{\bar m} \chi') \phi)\\
&= \frac{\hbar_p^2}{2} (\psi, \varphi' \nabla_m (\nabla_{\bar m} \chi') \phi) + \hbar_p^2 (\psi, \varphi' D^{-2} \nabla_l \nabla_{\bar l} \nabla_m (\nabla_{\bar m} \chi') \phi) + O(\hbar_p^{5/2})\\
&= -\frac{\hbar_p^2}{2} (\psi, (\nabla_m \varphi') (\nabla_{\bar m} \chi') \phi) - \hbar_p  (\psi, \varphi' D^{-2} \nabla_m (\nabla_{\bar m} \chi') \phi)\\
& \quad + \hbar_p^2 (\psi, \varphi' D^{-2} \nabla_l \nabla_m (\nabla_{\bar l} \nabla_{\bar m} \chi') \phi) + O(\hbar_p^{5/2}).
}
Note that the second term in the last line is precisely $-\epsilon_2$, and hence
\als{
\epsilon_2 &= -\frac{\hbar_p^2}{4} (\psi, (\nabla_m \varphi') (\nabla_{\bar m} \chi') \phi) + \epsilon'_2 + O(\hbar_p^{5/2}),\\
\epsilon'_2 &= \frac{\hbar_p^2}{2} (\psi, \varphi' D^{-2} \nabla_l \nabla_m (\nabla_{\bar l} \nabla_{\bar m} \chi') \phi)
}
Applying \eqref{asymptotic4} once more to $\epsilon'_2$ gives
\als{
\epsilon'_2 &= \frac{\hbar_p^2}{4} (\psi, \varphi' \left( \hbar_p - \hbar_p  D^{-2} R_2 + 2\hbar_p D^{-2} \nabla_k \nabla_{\bar k} \right) \nabla_l \nabla_m (\nabla_{\bar l} \nabla_{\bar m} \chi') \phi)\\
&= \frac{\hbar_p^3}{2} (\psi, \varphi' D^{-2} \nabla_k \nabla_{\bar k} \nabla_l \nabla_m (\nabla_{\bar l} \nabla_{\bar m} \chi') \phi) + O(\hbar_p^3)\\
&= \frac{\hbar_p^3}{2} (\psi, \varphi'D^{-2} \nabla_k \nabla_l \nabla_{\bar k} \nabla_m (\nabla_{\bar l} \nabla_{\bar m} \chi') \phi) - \frac{\hbar_p^2}{2} (\psi, \varphi'  D^{-2} \nabla_l \nabla_m (\nabla_{\bar l} \nabla_{\bar m} \chi') \phi) + O(\hbar_p^3)\\
&= \frac{\hbar_p^3}{2} (\psi, \varphi'  D^{-2} \nabla_k \nabla_l \nabla_m (\nabla_{\bar k} \nabla_{\bar l} \nabla_{\bar m} \chi') \phi) - 2\epsilon'_2 + O(\hbar_p^3).
}
As in \eqref{norm order 1}, we have
$|\nabla_k \nabla_l \nabla_m \phi|= O(\hbar_p^{-3/2})$ and
$|\nabla_l \nabla_m \phi|= O(\hbar_p^{-1})$, which implies
$\epsilon'_2 = O(\hbar_p^{5/2})$.
Therefore,
\eq{
    \epsilon = - \frac{\hbar_p^2}{4}(\psi, \varphi' R_i\chi' \phi) -\frac{\hbar_p^2}{4} (\psi, (\nabla_m \varphi') (\nabla_{\bar m} \chi') \phi) + O(\hbar_p^{5/2}).
}
Using \eqref{R_i} and \eqref{gamma matrices identities}, one finds
\als{
    \epsilon &= \hbar_p^2 (\psi,  (\nabla_m \varphi) ( R_{\bar{m} l k\bar{k}}) (\nabla_{\bar l} \chi) \phi) + \frac{\hbar_p^2}{2} (\psi, (\nabla_m \nabla_l \varphi) (\nabla_{\bar m} \nabla_{\bar l} \chi) \phi) + O(\hbar_p^{5/2}).
}
This gives the coefficient $C_2(\varphi,\chi)$ of the asymptotic expansion
\eqref{asymptotic exp}.

\subsection{Trace of Toeplitz operators}
\label{trace}

In this Appendix, we prove \eqref{trace property}.
Using the Schwartz kernel representation, the trace of $T_p(\varphi)$ can be
written as
\begin{align}
\label{trace schwarz kernel}
\Tr T_p(\varphi) = \int_M \mu(x) \mathrm{tr}_{S_c} \left( B(x,x) \varphi(x)\right),
\end{align}
where $B(x,y)$ denotes the Bergman kernel defined through
\begin{align}
(\Pi_1 \psi) (x) = \int_M \mu(y) B(x,y) \psi(y)
\end{align}
for $\psi \in \Gamma(S_c \otimes L^{\otimes p})$.
As shown in \cite{asym}, the Bergmann kernel has the following large-$p$
asymptotic expansion:
\begin{align}
\label{B(x,x)}
B(x,x) = (2\pi \hbar_p)^{-n} P + O(p^{n-1}),
\end{align}
where $P$ is the projection onto the zero-mode component $\ket{+}^{\otimes n}$ of
the fiber of $S$.
Substituting \eqref{B(x,x)} into \eqref{trace schwarz kernel} immediately yields
\eqref{trace property}.

\subsection{Simplification of the zero mode equation}
\label{simplification}

In this Appendix, we explain how the Dirac equation reduces to a simpler
holomorphicity condition for sections.

The twisted spin-$c$ Dirac operator $\Gamma(S_c \otimes L^{\otimes p})$ over $M$
is given by \eqref{twisted Dirac}.
Using $\Omega_{ml} = \Omega_{\bar{m} \bar{l}} = 0$, we obtain
\begin{align}
\Omega_{ab} \gamma_{(2n)}^a \gamma_{(2n)}^b
&= \Omega_{m \bar{l}} \gamma_{(2n)}^m \gamma_{(2n)}^{\bar l}
+ \Omega_{\bar{l} m} \gamma_{(2n)}^{\bar l} \gamma_{(2n)}^m \nonumber \\
&= 2 \Omega_{m \bar{l}} \gamma_{(2n)}^m \gamma_{(2n)}^{\bar l}
-2 \sum_{m=1}^n \Omega_{m \bar{m}},
\end{align}
where we used $\Omega_{m \bar{l}} = - \Omega_{\bar{l} m}$ and
$\{\gamma_{(2n)}^m ,\gamma_{(2n)}^{\bar l}\} = 2\delta_{ml} I_{2^n}$ in the last
step.
As shown in Appendix \ref{Vanishing}, a zero mode $\psi$ can be written as
$\psi = f \ket{+}^{\otimes n}$ with $f$ a section of $L^{\otimes p}$.
Then, by \eqref{gamma matrices identities}, we have
\begin{align}
&D\psi = \im  \bar{w}_m{}^{\bar{\mu}} \gamma_{(2n)}^{\bar{m}}\ket{+}^{\otimes n} \left(\partial_{\bar{\mu}} + p A^L_{\bar \mu} + A^E_{\bar \mu} \right) f = 0 \nonumber \\
& \Rightarrow \quad {}^{\forall} m \in \{1,\cdots, n\}: \quad  \bar{w}_m {}^{\bar{\mu}} \left(\partial_{\bar{\mu}} + p A^L_{\bar \mu} + A^E_{\bar \mu} \right) f = 0 \nonumber \\
& \Rightarrow \quad \left(\partial_{\bar{\mu}} + p A^L_{\bar \mu} + A^E_{\bar \mu} \right) f = 0.
\end{align}
Hence $f$ is a holomorphic section of $L^{\otimes p}$.

\bibliography{ref} 

@article{marcolli2012codes,
  title={Codes as fractals and noncommutative spaces},
  author={Marcolli, Matilde and Perez, Christopher},
  journal={Mathematics in Computer Science},
  volume={6},
  number={3},
  pages={199--215},
  year={2012},
  publisher={Springer}
}

@article{Kitaev:1997wr,
    author = "Kitaev, A. Yu.",
    title = "{Fault tolerant quantum computation by anyons}",
    eprint = "quant-ph/9707021",
    archivePrefix = "arXiv",
    doi = "10.1016/S0003-4916(02)00018-0",
    journal = "Annals Phys.",
    volume = "303",
    pages = "2--30",
    year = "2003"
}

@book{macwilliams1977theory,
  title={The Theory of Error-correcting Codes},
  author={MacWilliams, F.J. and Sloane, N.J.A.},
  isbn={9780444850102},
  lccn={76041296},
  series={Mathematical Library},
  url={https://books.google.co.jp/books?id=nv6WCJgcjxcC},
  year={1977},
  publisher={North-Holland Publishing Company}
}

@book{gottesman1997stabilizer,
  title={Stabilizer codes and quantum error correction},
  author={Gottesman, Daniel},
  year={1997},
  publisher={California Institute of Technology}
}

@article{gottesman2000encoding,
  title={Encoding a qubit in an oscillator},
  author={Gottesman, Daniel and Kitaev, Alexei and Preskill, John},
  journal={arXiv preprint quant-ph/0008040},
  year={2000}
}

@article{dennis2002topological,
  title={Topological quantum memory},
  author={Dennis, Eric and Kitaev, Alexei and Landahl, Andrew and Preskill, John},
  journal={Journal of Mathematical Physics},
  volume={43},
  number={9},
  pages={4452--4505},
  year={2002},
  publisher={American Institute of Physics}
}

@book{Landi:1997sh,
    author = "Landi, Giovanni",
    title = "{An Introduction to noncommutative spaces and their geometry}",
    eprint = "hep-th/9701078",
    archivePrefix = "arXiv",
    doi = "10.1007/3-540-14949-X",
    volume = "51",
    year = "1997"
}

@book{Connes:1994yd,
    author = "Connes, Alain",
    title = "{Noncommutative geometry}",
    isbn = "978-0-12-185860-5",
    year = "1994"
}

@article{connes1985non,
  title={Non-commutative differential geometry},
  author={Connes, Alain},
  journal={Publications Mathematiques de l'IHES},
  volume={62},
  pages={41--144},
  year={1985}
}

@article{connes2000short,
  title={A short survey of noncommutative geometry},
  author={Connes, Alain},
  journal={Journal of Mathematical Physics},
  volume={41},
  number={6},
  pages={3832--3866},
  year={2000},
  publisher={AIP Publishing}
}

@inproceedings{1943OnTI,
  title={On the imbedding of normed rings into the ring of operators in Hilbert space},
  author={I.~M. Gel'fand and M.~A. Naimark},
  year={1943},
  url={https://api.semanticscholar.org/CorpusID:125389522}
}

@book{natsume,
  author    = "Toshikazu Natsume",
  title     = "An Introduction to Operator Algebras for Topologists",
  publisher = "Sugaku Memoir, Vol. 2: Operator Algebras and Geometry",
  year      = "2001"
}

@Inbook{Connes1992,
author="Connes, Alain
and Lott, John",
editor="Fr{\"o}hlich, J.
and 't Hooft, G.
and Jaffe, A.
and Mack, G.
and Mitter, P. K.
and Stora, R.",
title="The Metric Aspect of Noncommutative Geometry",
bookTitle="New Symmetry Principles in Quantum Field Theory",
year="1992",
publisher="Springer US",
address="Boston, MA",
pages="53--93",
abstract="Most of the previous work on ``noncommutative geometry'' could more accurately be labeled as noncommutative differential topology, in that it deals with the homology of differential forms on noncommutative spaces (cyclic homology) and vector bundles on noncommutative spaces (K-theory) [Col]. However, the essence of geometry has to do with the metric properties of spaces.",
isbn="978-1-4615-3472-3",
doi="10.1007/978-1-4615-3472-3_3",
url="https://doi.org/10.1007/978-1-4615-3472-3_3"
}

@article{connes2013spectral,
  title={On the spectral characterization of manifolds},
  author={Connes, Alain},
  journal={Journal of Noncommutative Geometry},
  volume={7},
  number={1},
  pages={1--82},
  year={2013}
}

@article{chamseddine1997spectral,
  title={The spectral action principle},
  author={Chamseddine, Ali H and Connes, Alain},
  journal={Communications in Mathematical Physics},
  volume={186},
  number={3},
  pages={731--750},
  year={1997},
  publisher={Springer}
}

@article{connes2006inner,
  title={Inner fluctuations of the spectral action},
  author={Connes, Alain and Chamseddine, Ali H},
  journal={Journal of Geometry and Physics},
  volume={57},
  number={1},
  pages={1--21},
  year={2006},
  publisher={Elsevier}
}

@article{beny2007generalization,
  title={Generalization of quantum error correction via the Heisenberg picture},
  author={B{\'e}ny, C{\'e}dric and Kempf, Achim and Kribs, David W},
  journal={Physical review letters},
  volume={98},
  number={10},
  pages={100502},
  year={2007},
  publisher={APS}
}

@article{beny2007quantum,
  title={Quantum error correction of observables},
  author={B{\'e}ny, C{\'e}dric and Kempf, Achim and Kribs, David W},
  journal={Physical Review A—Atomic, Molecular, and Optical Physics},
  volume={76},
  number={4},
  pages={042303},
  year={2007},
  publisher={APS}
}

@article{knill1996theory,
  title={A theory of quantum error-correcting codes},
  author={Knill, Emanuel and Laflamme, Raymond},
  journal={arXiv preprint quant-ph/9604034},
  year={1996}
}

@article{Petz:1986tvy,
    author = "Petz, Denes",
    title = "{Sufficient subalgebras and the relative entropy of states of a von Neumann algebra}",
    doi = "10.1007/BF01212345",
    journal = "Commun. Math. Phys.",
    volume = "105",
    number = "1",
    pages = "123--131",
    year = "1986"
}

@article{hiai2011quantum,
  title={Quantum f-divergences and error correction},
  author={Hiai, Fumio and Mosonyi, Mil{\'a}n and Petz, D{\'e}nes and B{\'e}ny, C{\'e}dric},
  journal={Reviews in Mathematical Physics},
  volume={23},
  number={07},
  pages={691--747},
  year={2011},
  publisher={World Scientific}
}

@article{barnum2002reversing,
  title={Reversing quantum dynamics with near-optimal quantum and classical fidelity},
  author={Barnum, Howard and Knill, Emanuel},
  journal={Journal of Mathematical Physics},
  volume={43},
  number={5},
  pages={2097--2106},
  year={2002},
  publisher={American Institute of Physics}
}

@article{fawzi2015quantum,
  title={Quantum conditional mutual information and approximate Markov chains},
  author={Fawzi, Omar and Renner, Renato},
  journal={Communications in Mathematical Physics},
  volume={340},
  number={2},
  pages={575--611},
  year={2015},
  publisher={Springer}
}

@article{berta2014monotonicity,
  title={Monotonicity of quantum relative entropy and recoverability},
  author={Berta, Mario and Lemm, Marius and Wilde, Mark M},
  journal={arXiv preprint arXiv:1412.4067},
  year={2014}
}

@article{yamamoto2005suboptimal,
  title={Suboptimal quantum-error-correcting procedure based on semidefinite programming},
  author={Yamamoto, Naoki and Hara, Shinji and Tsumura, Koji},
  journal={Physical Review A—Atomic, Molecular, and Optical Physics},
  volume={71},
  number={2},
  pages={022322},
  year={2005},
  publisher={APS}
}

@article{TAKESAKI1972306,
title = {Conditional expectations in von Neumann algebras},
journal = {Journal of Functional Analysis},
volume = {9},
number = {3},
pages = {306-321},
year = {1972},
issn = {0022-1236},
doi = {https://doi.org/10.1016/0022-1236(72)90004-3},
url = {https://www.sciencedirect.com/science/article/pii/0022123672900043},
author = {Masamichi Takesaki},
abstract = {Let M be a von Neumann algebra and N its von Neumann subalgebra. Let ϑ be a faithful, semifinite, normal weight on M+ such that the restriction ϑ ¦ N of ϑ onto N is semifinite. The first main result is that N is invariant under the modular automorphism group σtϑ associated with ϑ if and only if there exists a σ-weakly continuous faithful projection ϵ of norm one from M onto N such that ϑ̇(x) = ϑ̇ ∘ ϵ(x) for every xϵMϑ. The second result is that a von Neumann algebra M is finite if and only if any maximal abelian self-adjoint subalgebra of M is the range of a σ-weakly continuous projection of norm one. This result is an answer for the question which Kadison raised in the author's talk at the International Congress of Mathematicians in Nice, 1970.}
}

@book{rieffel1993deformation,
  title={Deformation Quantization for Actions of $ R^{d}$},
  author={Rieffel, Marc Aristide},
  year={1993},
  publisher={American Mathematical Soc.}
}

@article{hawkins1999quantization,
  title={Quantization of equivariant vector bundles},
  author={Hawkins, Eli},
  journal={Communications in mathematical physics},
  volume={202},
  pages={517--546},
  year={1999},
  publisher={Springer}
}

@article{hawkins2000geometric,
  title={Geometric quantization of vector bundles and the correspondence with deformation quantization},
  author={Hawkins, Eli},
  journal={Communications in Mathematical Physics},
  volume={215},
  pages={409--432},
  year={2000},
  publisher={Springer}
}

@article{Arnlind:2010ac,
    author = "Arnlind, Joakim and Hoppe, Jens and Huisken, Gerhard",
    title = "{Multi-linear formulation of differential geometry and matrix regularizations}",
    eprint = "1009.4779",
    archivePrefix = "arXiv",
    primaryClass = "math.DG",
    journal = "J. Diff. Geom.",
    volume = "91",
    number = "1",
    pages = "1--39",
    year = "2012"
}

@article{Hoppe,
    author = "J.~Hoppe",
    journal = "{Soryushiron Kenkyu Electronics, volume 80 (3), 145-202}",
    year = "1989"
}

@ARTICLE{1992LMaPh..24..125K,
       author = {{Klimek}, Slawomir and {Lesniewski}, Andrzej},
        title = "{Quantum Riemann surfaces: II. The discrete series}",
      journal = {Letters in Mathematical Physics},
     keywords = {81C05},
         year = 1992,
        month = feb,
       volume = {24},
       number = {2},
        pages = {125-139},
          doi = {10.1007/BF00402676},
       adsurl = {https://ui.adsabs.harvard.edu/abs/1992LMaPh..24..125K},
      adsnote = {Provided by the SAO/NASA Astrophysics Data System}
}

@article{Bordemann:1993zv,
    author = "Bordemann, Martin and Meinrenken, Eckhard and Schlichenmaier, Martin",
    title = "{Toeplitz quantization of Kahler manifolds and gl(N), N ---\ensuremath{>} infinity limits}",
    eprint = "hep-th/9309134",
    archivePrefix = "arXiv",
    reportNumber = "MANNHEIMER-147, FREIBURG-THEP-93-22",
    doi = "10.1007/BF02099772",
    journal = "Commun. Math. Phys.",
    volume = "165",
    pages = "281--296",
    year = "1994"
}

@article{ma2008toeplitz,
  title={Toeplitz operators on symplectic manifolds},
  author={Ma, Xiaonan and Marinescu, George},
  journal={Journal of Geometric Analysis},
  volume={18},
  pages={565--611},
  year={2008},
  publisher={Springer}
}

@article{Adachi:2022mln,
    author = "Adachi, Hiroyuki and Ishiki, Goro and Kanno, Satoshi",
    title = {{Vector bundles on fuzzy K\"ahler manifolds}},
    eprint = "2210.01397",
    archivePrefix = "arXiv",
    primaryClass = "hep-th",
    doi = "10.1093/ptep/ptad006",
    journal = "PTEP",
    volume = "2023",
    number = "2",
    pages = "023B01",
    year = "2023"
}

@article{Adachi:2021aux,
    author = "Adachi, Hiroyuki and Ishiki, Goro and Kanno, Satoshi and Matsumoto, Takaki",
    title = "{Matrix regularization for tensor fields}",
    eprint = "2110.15544",
    archivePrefix = "arXiv",
    primaryClass = "hep-th",
    reportNumber = "UTHEP-761",
    doi = "10.1093/ptep/ptac171",
    journal = "PTEP",
    volume = "2023",
    number = "1",
    pages = "013B06",
    year = "2023"
}

@article{Adachi:2021ljw,
    author = "Adachi, Hiroyuki and Ishiki, Goro and Kanno, Satoshi and Matsumoto, Takaki",
    title = "{Laplacians on Fuzzy Riemann Surfaces}",
    eprint = "2103.09967",
    archivePrefix = "arXiv",
    primaryClass = "hep-th",
    reportNumber = "UTHEP-754, DIAS-STP-20-22",
    doi = "10.1103/PhysRevD.103.126003",
    journal = "Phys. Rev. D",
    volume = "103",
    number = "12",
    pages = "126003",
    year = "2021"
}

@article{Adachi:2023dhc,
    author = "Adachi, Hiroyuki and Ishiki, Goro and Kanno, Satoshi",
    title = "{Matrix Regularization for Gauge Theories}",
    eprint = "2311.14984",
    archivePrefix = "arXiv",
    primaryClass = "hep-th",
    doi = "10.1093/ptep/ptae031",
    journal = "PTEP",
    volume = "2024",
    number = "4",
    pages = "043B01",
    year = "2024"
}

@book{BTQtextbook,
    author = "X.~Ma and G.~Marinescu",
    title = "{Holomorphic Morse Inequalities and Bergman Kernels}",
    publisher = "Birkhauser",
    year = "2007"
}

@book{spingeometry,
    author = "H.~Lawson and M.~Michelsohn",
    title = "{Spin Geometry}",
    publisher = "Princeton University Press",
    year = "1989"
}

@article{guinea2010energy,
  title={Energy gaps and a zero-field quantum Hall effect in graphene by strain engineering},
  author={Guinea, Francisco and Katsnelson, Mikhail I and Geim, AK},
  journal={Nature Physics},
  volume={6},
  number={1},
  pages={30--33},
  year={2010},
  publisher={Nature Publishing Group UK London}
}

@article{vozmediano2010gauge,
  title={Gauge fields in graphene},
  author={Vozmediano, Maria AH and Katsnelson, MI and Guinea, Francisco},
  journal={Physics Reports},
  volume={496},
  number={4-5},
  pages={109--148},
  year={2010},
  publisher={Elsevier}
}

@article{low2010strain,
  title={Strain-induced pseudomagnetic field for novel graphene electronics},
  author={Low, Tony and Guinea, F},
  journal={Nano letters},
  volume={10},
  number={9},
  pages={3551--3554},
  year={2010},
  publisher={ACS Publications}
}

@article{ozawa2019topological,
  title={Topological photonics},
  author={Ozawa, Tomoki and Price, Hannah M and Amo, Alberto and Goldman, Nathan and Hafezi, Mohammad and Lu, Ling and Rechtsman, Mikael C and Schuster, David and Simon, Jonathan and Zilberberg, Oded and others},
  journal={Reviews of Modern Physics},
  volume={91},
  number={1},
  pages={015006},
  year={2019},
  publisher={APS}
}

@article{hafezi2011robust,
  title={Robust optical delay lines with topological protection},
  author={Hafezi, Mohammad and Demler, Eugene A and Lukin, Mikhail D and Taylor, Jacob M},
  journal={Nature Physics},
  volume={7},
  number={11},
  pages={907--912},
  year={2011},
  publisher={Nature Publishing Group UK London}
}

@article{dalibard2011colloquium,
  title={Colloquium: Artificial gauge potentials for neutral atoms},
  author={Dalibard, Jean and Gerbier, Fabrice and Juzeli{\=u}nas, Gediminas and {\"O}hberg, Patrik},
  journal={Reviews of Modern Physics},
  volume={83},
  number={4},
  pages={1523--1543},
  year={2011},
  publisher={APS}
}

@article{hooft1993dimensional,
  title={Dimensional reduction in quantum gravity},
  author={Hooft, Gerardt},
  journal={arXiv preprint gr-qc/9310026},
  year={1993}
}

@article{maldacena1999large,
  title={The large-N limit of superconformal field theories and supergravity},
  author={Maldacena, Juan},
  journal={International journal of theoretical physics},
  volume={38},
  number={4},
  pages={1113--1133},
  year={1999},
  publisher={Springer}
}

@article{witten1998anti,
  title={Anti de Sitter space and holography},
  author={Witten, Edward},
  journal={arXiv preprint hep-th/9802150},
  year={1998}
}

@article{ryu2006holographic,
  title={Holographic Derivation of Entanglement Entropy from the anti--de Sitter<? format?> Space/Conformal Field Theory Correspondence},
  author={Ryu, Shinsei and Takayanagi, Tadashi},
  journal={Physical review letters},
  volume={96},
  number={18},
  pages={181602},
  year={2006},
  publisher={APS}
}

@article{hamilton2006local,
  title={Local bulk operators in AdS/CFT correspondence: A boundary view of horizons and locality},
  author={Hamilton, Alex and Kabat, Daniel and Lifschytz, Gilad and Lowe, David A},
  journal={Physical Review D—Particles, Fields, Gravitation, and Cosmology},
  volume={73},
  number={8},
  pages={086003},
  year={2006},
  publisher={APS}
}

@article{almheiri2015bulk,
  title={Bulk locality and quantum error correction in AdS/CFT},
  author={Almheiri, Ahmed and Dong, Xi and Harlow, Daniel},
  journal={Journal of High Energy Physics},
  volume={2015},
  number={4},
  pages={1--34},
  year={2015},
  publisher={Springer}
}

@article{jafferis2016relative,
  title={Relative entropy equals bulk relative entropy},
  author={Jafferis, Daniel L and Lewkowycz, Aitor and Maldacena, Juan and Suh, S Josephine},
  journal={Journal of High Energy Physics},
  volume={2016},
  number={6},
  pages={1--20},
  year={2016},
  publisher={Springer}
}

@article{dong2016reconstruction,
  title={Reconstruction of bulk operators within the entanglement wedge in gauge-gravity duality},
  author={Dong, Xi and Harlow, Daniel and Wall, Aron C},
  journal={Physical review letters},
  volume={117},
  number={2},
  pages={021601},
  year={2016},
  publisher={APS}
}

@article{harlow2018tasi,
  title={TASI Lectures on the Emergence of the Bulk in AdS/CFT},
  author={Harlow, Daniel},
  journal={arXiv preprint arXiv:1802.01040},
  year={2018}
}

@article{pastawski2015holographic,
  title={Holographic quantum error-correcting codes: Toy models for the bulk/boundary correspondence},
  author={Pastawski, Fernando and Yoshida, Beni and Harlow, Daniel and Preskill, John},
  journal={Journal of High Energy Physics},
  volume={2015},
  number={6},
  pages={1--55},
  year={2015},
  publisher={Springer}
}

@article{Hawking:1975vcx,
    author = "Hawking, S. W.",
    editor = "Gibbons, G. W. and Hawking, S. W.",
    title = "{Particle Creation by Black Holes}",
    doi = "10.1007/BF02345020",
    journal = "Commun. Math. Phys.",
    volume = "43",
    pages = "199--220",
    year = "1975",
    note = "[Erratum: Commun.Math.Phys. 46, 206 (1976)]"
}

@article{PhysRevLett.71.3743,
  title = {Information in black hole radiation},
  author = {Page, Don N.},
  journal = {Phys. Rev. Lett.},
  volume = {71},
  issue = {23},
  pages = {3743--3746},
  numpages = {0},
  year = {1993},
  month = {Dec},
  publisher = {American Physical Society},
  doi = {10.1103/PhysRevLett.71.3743},
  url = {https://link.aps.org/doi/10.1103/PhysRevLett.71.3743}
}

@article{hayden2007black,
  title={Black holes as mirrors: quantum information in random subsystems},
  author={Hayden, Patrick and Preskill, John},
  journal={Journal of high energy physics},
  volume={2007},
  number={09},
  pages={120},
  year={2007},
  publisher={IOP Publishing}
}

@book{asym,
    author = "X.~Dai, K.~Liu and X.~Ma",
    title = "{On the asymptotic expansion of Bergman kernel}",
    publisher = "J. Differential Geom. {\bf 72} (1) 1-41",
    year = "2006"
}
\bibliographystyle{unsrt} 
\end{document}